\def\senbun#1(#2)#3({\@senbun(#2)(}
\def\@senbun(#1,#2)(#3,#4){%
   \@tempdima#1\p@ \advance\@tempdima#3\p@
   \divide\@tempdima\tw@
   \@tempdimb#2\p@ \advance\@tempdimb#4\p@
   \divide\@tempdimb\tw@
   \edef\@senbun@temp{\noexpand\qbezier(#1,#2)%
      (\strip@pt\@tempdima,\strip@pt\@tempdimb)(#3,#4)}%
   \@senbun@temp}
\newcommand{\N}{{\rm I\kern-.22em N}} 
\newcommand{\Z}{{\sf Z\kern-.42em Z}} 
\newcommand{\R}{{\rm I\kern-.22em R}} 
\newcommand{\BbbC}{{\rm\kern.22em\rule[.1ex]{.06em}{1.4ex}\kern-.28em C}} 
\newcommand{\BbbQ}{{\rm\kern.22em\rule[.1ex]{.06em}{1.4ex}\kern-.28em Q}}
\newcounter{Codeline}
\newcommand{\Newcodeline}{\setcounter{Codeline}{1}}
\newcommand{\Cl}{{\theCodeline}: \addtocounter{Codeline}{1}}
\newcommand{\crm}{\\}
\newcommand{\overarc}[1]{\stackrel{\rotatebox{90}{\big)}}{#1}}
\theoremstyle{definition}
\newtheorem{theorem}{Theorem}
\newtheorem*{theorem*}{Theorem}
\newtheorem{lemma}{Lemma}
\newtheorem*{lemma*}{Lemma}
\newtheorem{corollary}{Corollary}
\newtheorem*{corollary*}{Corollary}
\newtheorem*{definition*}{Definition}
\begin{document}

\title{Gathering Problems for Autonomous Mobile Robots with Lights
}




\author{Satoshi Terai\footnote{Graduate School of Science and Engineering, Hosei University, Tokyo, 184-8485, Japan, satoshiterai7648@gmail.com}, 
Koichi Wada\footnote{Faculty of Science and Engineering, Hosei University, Tokyo, 184-8485, Japan, wada@hosei.ac.jp},
Yoshiaki Katayama\footnote{Graduate School of Engineering, Nagoya Institute of Technology,Nagoya, 466-8555, Japan, katayama@nitech.ac.jp}}



\maketitle

\begin{abstract}
We study the {\em Gathering} problem for $n$ autonomous mobile robots in semi-synchronous settings with persistent memory called {\em  light}.
It is well known that Gathering is impossible in a basic model when robots have no lights,
if the system is semi-synchronous or even centralized (only one robot is active in each time) \cite{SY,DGCMR}.  
On the other hand,  
Rendezvous (Gathering when $n=2$) is possible if robots have lights of various types with a constant number of colors \cite{FSVY,V}.
If robots can observe not only their own lights but also other robots' lights, their lights are called {\em full-light}.
If robots can only observe the state of other robots' lights, their lights are called {\em external-light}.
If robots can only observe their own lights, their lights are called {\em internal-light}.

In this paper, we extend the model of robots with lights so that Gathering algorithms can be discussed properly.
Then we show Gathering algorithms with three types of lights in the semi-synchronous settings
and reveal relationship between the power of lights and other additional assumptions.
The most algorithms shown here are optimal in the number of colors they use.

\end{abstract}

\section{Introduction}
\noindent{\bf Background and Motivation}\hspace*{1em} 
The computational issues of autonomous mobile robots have been research object in distributed computing fields.
In particular, a large amount of work has been dedicated to the research of theoretical models of autonomous mobile robots
\cite{AP,BDT,CFPS,DKLMPW,IBTW,KLOT,SDY,SY}.
In the basic common setting, a robot is modeled as a point in a two dimensional plane and its capability is quite weak.
We usually assume that robots are {\em oblivious} (no memory to record past history), {\em anonymous}  and {\em uniform} (robots have no IDs and run identical algorithms)\cite{FPS}.
Robots operate in Look-Compute-Move (LCM) cycles in the model. In the Look operation, robots obtain a snapshot of the environment and
they execute the same algorithm with the snapshot as an input in Compute operation, and move towards the computed destination in Move operation.
Repeating these cycles, all robots perform a given task.
It is difficult for these too weak robot systems to accomplish the task to be completed. 
Revealing the weakest capability of robots to attain a given task is one of the most interesting challenges 
in the theoretical research of autonomous mobile robots.

In this paper, we also explore such weakest capabilities. In particular,
we reveal the weakest additional assumptions for the task which cannot be solved in the basic common models. 
The problem considered in this paper is {\em Gathering}, which is one of the most fundamental tasks of autonomous mobile robots. Gathering is the process of $n$ mobile robots, initially located on arbitrary positions, meeting within finite time at a location, not known a priori. When there are two robots in this setting, this task is called {\em Rendezvous}. 
Since Gathering and Rendezvous are simple but essential problems, they have intensively studied  
and a number of possibility and/or impossibility results have been shown under the different assumptions\cite{AP,BDT,CFPS,DGCMR,DKLMPW,DP,FPSW,IKIW,ISKIDWY,KLOT,LMA,P,SDY}.
The solvability of Gathering and Rendezvous depends on the activation schedule and the synchronization level. 
Usually three basic types of schedulers are identified, the fully synchronous (FSYNC), the semi-synchronous (SSYNC) and the asynchronous (ASYNC).
In the FSYNC model, there is a common round and in each round all robots are activated simultaneously 
and Compute and  Move are done instantaneously. 
The SSYNC is the same as FSYNC except that in each round only a subset of robots are activated. 
In the ASYNC scheduler, there is no restriction about notion of time, 
Compute and Move in each cycle can take an unpredictable amount of time, 
and the time interval between successive activations is also unpredictable (but these times must be finite).
Gathering and Rendezvous are trivially solvable in FSYNC and the basic model.
However, these problems can not be solved in SSYNC without any additional assumptions \cite{FPS}. 
In particular, Gathering can not be solvable even in a restricted subclass of SSYNC scheduler, 
where exactly one robot is activated in each round (called CENT) \cite{DGCMR}.
If all robots are initially located on different positions (called distinct Gathering), 
this Gathering can not be solved even in the ROUND-ROBIN scheduler, 
in which exactly one robot is activated in each round and always in the same order \cite{DGCMR}.

In \cite{DFPSY}, persistent memory called {\em light} has been introduced to reveal relationship between the solvability of Gathering and the synchrony of schedulers and they show asynchronous robots with lights equipped with a constant number of colors, are strictly more powerful than semi-synchronous robots without lights.
Robots with lights have been also introduced 
in order to solve Rendezvous without any other additional assumptions. \cite{FSVY,DFPSY,V}.
Table~\ref{tab:Table-Rendezvous} shows results to solve Rendezvous by robots with lights in each scheduler, 
where the circle ($\bigcirc$) and the cross ($\times$) mean Rendezvous is solvable and unsolvable, respectively, the hyphen ($-$) indicates that this part has been solved under weaker conditions or unsolved under the stronger ones, and the question mark ($?$) means that this part has not been solved. 
In the table,
{\em full-light} means that robots can see not only lights of other robots but also their own light, 
and {\em external-light} and {\em internal-light}\footnote{In \cite{FSVY}, external-light and internal-light are called FCOMM and FSTATE, respectively.}  mean
that they can see only lights of other robots and only own light, respectively. 
Full-light is not a weaker assumption than external-light and internal-light,
and internal-light seems to be weaker than external-light. 
Although this relationship is not proved,  the results indicate the relationship.
For example, Rendezvous is solved by robots with $3$ colors of external-light
in SSYNC and non-rigid. On the other hand, it is solved by robots with $6$ colors of internal-light in SSYNC and rigid, 
where robots can reach the computed destination in {\em rigid} and 
they may be stopped before reaching it in {\em non-rigid}.
The number of colors used in the external-light  algorithm is less than that in the internal-light one, and
the former uses non-rigid assumption but
the latter uses the stronger one, rigidness. When robots know the value $\delta$ of a minimum distance movement  in non-rigid (denoted by $+\delta$ in the table),
the number of colors can be reduced into only $3$ in SSYNC and internal-light. Thus, the assumption robots know $\delta$ in non-rigid
seems to be stronger than that in rigid.
The power of lights to solve other problems are discussed in \cite{LFCPSV,DFN}.

{\small
\begin{table}[h]
\centering
\caption{Rendezvous algorithms by robots with lights.}
\label{tab:Table-Rendezvous}
\begin{tabular}{|c|c|c|c|c|c|}
\hline
scheduler      & movement  & full-light & external-light & internal-light & no-light \\ \hline\hline
FSYNC & $-$ & $-$ & $-$ & $-$ & $\bigcirc$\cite{FPS}  \\ \hline
\multirow{2}{*}{SSYNC} & rigid     & $-$ & ?        & 6\cite{FSVY}        & \multirow{2}{*}{$\times$\cite{FPS}} \\ \cline{2-5} 
               & non-rigid & $-$    & 3\cite{FSVY}          & 3(+$\delta$)\cite{FSVY}   &     \\ \hline
\multirow{2}{*}{ASYNC}          & rigid     & $-$    & 12\cite{FSVY}         & ?        & \multirow{2}{*}{$-$}  \\ \cline{2-5} 
               & non-rigid & 2\cite{HDT}    & 3(+$\delta$)\cite{FSVY}   & ?      &     \\ \hline
\end{tabular}
\end{table}
}

\noindent{\bf Our Contribution}\hspace*{1em}
In this paper, we discuss Gathering algorithms by robots with lights and reveal relationship about the solvability  between kinds of lights and other additional assumptions.
First, we extend the model of robots with lights in which Gathering algorithms can be discussed properly.
Unlike Rendezvous, the multiplicity detection affects the solvability of Gathering and in fact, 
Gathering can be solvable by robots without lights if robots detect the exact number of robots 
on the same points (strong multiplicity detection) \cite{FPS}. 
When several robots with lights occupy a same location, if we assume that robots can  
recognize all colors of the robots on the location, the strong multiplicity detection is possible and robots without lights 
can solve the Gathering problem. Thus, we define the model 
so that robots can not detect the exact number of robots on the same location
provided that robots use only one color.  
We consider  two types of models according to views robots observe.
One is called {\em set-view},
where robots can recognize sets of colors robots have on the same locations\footnote{Recognition of multi-set-view enables us the strong multiplicity detection.}. 
Another is called {\em arbitrary-view}, where robots can recognize arbitrary color robots have on the same locations. 
We usually use the set view assumption in this paper. 
We also consider the case that several robots occupy a same location and how these robots observe one another. 
We call {\em local-aware}
if any robot can recognize other robots located on the same position 
and {\em local-unaware}, otherwise. If we assume the local-awareness, the view of robots obeys the model (set-view or arbitrary-view) we assume. We usually assume the local-unawareness 
but we show the local-awareness reduces the number of colors for some Gathering algorithm.  

In the extended model, we show the following six Gathering algorithms with lights (Table~\ref{tab:Table-Gathering}). 
\begin{enumerate}
\item[(a)] {\bf Algorithm~1}:(2, full, SSYNC, non-rigid),
\item[(b)] {\bf Algorithm~2}:(3, external, SSYNC, rigid),
\item[(c)] {\bf Algorithm~3}:(2, external, SSYNC, rigid; local-awareness),
\item[(d)] {\bf Algorithm~4}+{\bf 5}+{\bf 6}:(2, internal, SSYNC, non-rigid; D-distant),
\item[(e)] {\bf Algorithm~7}:(2, external, CENT, non-rigid), and
\item[(f)] {\bf Algorithm~8}:(2, internal, ROUND-ROBIN, rigid),
\end{enumerate}
where Algorithm i:($\alpha, \beta, \gamma, \delta ; \eta$) means Algorithm i uses $\beta$-light with $\alpha$ colors  and works in $\gamma$ scheduler and movement restriction $\delta$ with additional assumption $\eta$ (if any).

Table~\ref{tab:Table-Gathering} also shows the unsolvability of Gathering and 
every algorithm using two colors is optimal with respect to the number of colors, since Gathering is not solvable
without lights in every case.
We show four algorithms in SSYNC.
In the full-light model, we construct an algorithm with the weakest assumptions, non-rigid and the least number of colors ($2$ colors).
In the external-light model,  we construct  an algorithm with $3$ colors of lights and rigid assumption and we reduce the number of colors used in the algorithm
into $2$ colors if the local-awareness is assumed. 
In the internal-light model, 
with a little bit more assumption we construct an algorithm with lights of $2$ colors if robots know the minimum distance $\delta$ of movement in non-rigid and if initial configurations of robots satisfy some condition called $2\delta$-distant, where arbitrary two robots not occupying the same locations
are separated by at least $2\delta$.
We reveal some relationship between view of lights (full, external and internal) and the other assumptions (movement, local-awareness) in our algorithms. 

We also construct Gathering algorithms in CENT and ROUND-ROBIN schedulers.
Since Gathering  and distinct Gathering can not be solved without lights in CENT and ROUND-ROBIN, respectively \cite{DGCMR},
our algorithms use  $2$ colors but fairly weak assumptions. We give  algorithms in external-light of $2$ colors, non-rigid and CENT,
and in internal-light of $2$ colors, rigid and ROUND-ROBIN, respectively. 
 
{\small
\begin{table}[h]
\centering
\caption{Our Gathering algorithms by robots with lights.}
\label{tab:Table-Gathering}
{\footnotesize
\begin{tabular}{|c|c|c|c|c|c|}
\hline
scheduler      & movement  & full-light & external-light & internal-light & no-light \\ \hline\hline
FSYNC & $-$ & $-$ & $-$ & $-$ & $\bigcirc$  \\ \hline
ROUND-ROBIN & rigid & $-$ &  $-$ & 2(f) & $\times^{*}$ \cite{DGCMR} \\ \hline
CENT & non-rigid & $-$ & 2(e) & ? & $\times$\cite{DGCMR} \\ \hline
\multirow{2}{*}{SSYNC} & rigid & $-$ & 3(b),2$^{**}$(c) & ? & \multirow{2}{*}{$-$} \\ \cline{2-5} 
               & non-rigid & 2(a) & ? & 2(+$\delta$)$^{***}$(d) &  \\ \hline
\end{tabular}

\smallskip

\noindent
$^*$Distinct Gathering, 
$^{**}$local-awareness, 
$^{***}2\delta$-distant
}
\end{table}
}

\medskip

The remainder of the paper is organized as follows. In Section
~\ref{sec:model}, we define a robot model with lights, gathering problems, and terminologies. 
Section~\ref{sec:impo} shows 
the previous results for Gathering and Rendezvous problems,
and Section~\ref{sec:GatheringAlgorithms} shows Gathering algorithms of robots with lights in SSYNC 
on several situations for which Gathering algorithms do not exist by using lights. Section~\ref{sec:CSET} shows Gathering algorithms of robots with lights in CENT.  Section~\ref{sec:conclusion}
concludes the paper.

\section{Model and Preliminaries}\label{sec:model}

We consider a set of anonymous mobile robots $\mathcal{R} = \{ r_1, \ldots, r_n \}$ located in $\R^2$.
Each robot $r_i$ has a persistent state $\ell_i$  called {\em  light} which may be taken from a finite set of colors $L$. 

We denote by $\ell_i(t)$ the color that the light of robot $r_i$ has at time $t$ and $p_i(t) \in \R^2$ the position occupied by  $r_i$ at time $t$ represented in some global coordinate system. A {\em configuration} $\mathcal{C}(t)$ at time $t$ is a multi-set of $n$ pairs $(\ell_i(t),p_i(t))$, each defining the color of light and the position of robot $r_i$ at time $t$. 
When no confusion arises, $\mathcal{C}(t)$ is simply denoted by $\mathcal{C}$.

For a subset $S$ of $L \times \R^2$, $\mathcal{L}(S)$ and $\mathcal{P}(S)$ are denoted as projections to $L$ and $\R^2$ from $S$, respectively.

Each robot $r_i$ has its own coordinate system where $r_i$ is located at its origin at any time. These coordinate systems do not necessarily agree with those of other robots. It means that there is no common knowledge of unit of distance, directions of its coordinates,  or clockwise orientation ({\em chirality}
\footnote{In some cases, we may assume chirality.}).

At any point of time, a robot can be active or inactive. When a robot $r_i$ is activated, it executes 
$\mathit{Look}$, $\mathit{Compute}$, and $\mathit{Move}$ cycles:
\begin{itemize}
\item {\bf Look:} The robot $r_i$ activates its sensors to obtain a snapshot which consists of a pair of light and position for every robot with respect to the coordinate system of $r_i$. 
Let $\mathcal{SS}_i(t)$ denote the snapshot of $r_i$ at time $t$. 
We assume robots can observe all other robots(unlimited visibility). Note that $\mathcal{SS}_i(t)$ represents a sub-multi-set of $\mathcal{C}(t)$ according to imposed assumptions by the local coordinate system of $r_i$, where $r_i$ is at the origin.
\item {\bf Compute:} The robot $r_i$ executes its algorithm using the snapshot and the color of its own light (if allowed by the model) and returns a destination point $des_i$ expressed in its coordinate system and a light $\ell_i \in L$. The robot $r_i$ sets its own light $\ell_i$ to the color.
\item {\bf Move:} The robot $r_i$ moves to the computed destination $des_i$.
A robot $r$ is said to {\em collide} with robot $s$ at time $t$ if $p(r,t)=p(s,t)$ and at time $t$ $r$ is performing $\mathit{Move}$. The collision is {\em accidental} if $r$'s destination is not $p(r,t)$.
Since robots are seen as points, we assume that accidental collisions are immaterial. A moving robot,
upon causing an accidental collision, proceeds in its movement without changes, in a ``hit-and-run'' fashion \cite{FPS}.
If the robot may be stopped by an adversary before reaching the computed destination,
the movement is said to be {\em non-rigid}. Otherwise,
it is said to be {\em rigid}. 
If stopped before reaching its destination, we assume that a robot moves at least a minimum distance $\delta >0$. Note that without this assumption an adversary could make it impossible for any robot to ever reach its destination. If the distance to the destination is at most $\delta$, the robot can reach it. If the movement is non-rigid and robots know the value of $\delta$, it is called {\em non-rigid($+\delta$)}. 
\end{itemize}  

In ${\mathit Look}$ operation, a snapshot $\mathcal{SS}_i$ of $r_i$ should contain positions of all robots including $r_i$. However, if other robots are located on $p_i$
and $r_i$ can recognize the other robots, robots have somehow multiplicity detection at this point. Thus,
we need some treatment of observation of other robots located on $p_i$ for robot $r_i$.
If any robot $r_i$ can observe other robots located on $p_i$, it is said to be {\em local-aware}.
Otherwise, it is said to be {\em local-unaware}. Note that if we assume the local-awareness, $r_i$ recognizes whether other robots occupy the location $p_i$ or not. In the following, we usually use the local-unaware assumption but we will show that 
we can reduce the number of colors
for some Gathering algorithm if we assume the local-awareness.

A scheduler decides which subset of robots is activated for every configuration. 
The schedulers we consider are asynchronous and semi-synchronous and it is assumed that schedulers are {\em fair}, each robot is activated infinitely often.
\begin{itemize}
\item {\bf ASYNC:} 
The asynchronous scheduler (ASYNC), activates the robots independently, and the duration of each $\mathit{Compute}$, $\mathit{Move}$ and between successive activities is finite and unpredictable. As a result, robots can be seen while moving and the snapshot obtained with the Look operation and its actual configuration when performing the $\mathit{Compute}$ operation are not the same and so its computation may be done with the old configuration.
\item {\bf SSYNC:} 
The  semi-synchronous scheduler (SSYNC) activates a subset of all robots synchronously  and their $\mathit{Look}$-$\mathit{Compute}$-$\mathit{Move}$ cycles are performed at the same time. We can assume that activated robots at the same time obtain the same snapshot and their $\mathit{Compute}$ and $\mathit{Move}$ are executed instantaneously.
In SSYNC, we can assume that each activation defines discrete time called {\em round} and $\mathit{Look}$-$\mathit{Compute}$-$\mathit{Move}$ is performed instantaneously in one round. In the following, since we consider SSYNC and its subsets, we use round and time interchangeably.
\end{itemize}

As a special case of SSYNC, if all robots are activated in each round,
the scheduler is called fully-synchronous (FSYNC).
We also consider the following subsets of SSYNC.
\begin{itemize} 
\item {\bf CENT:} 
In CENT scheduler exactly one robot is activated in each round.
\item {\bf $k$-BOUNDED:} 
In $k$-BOUNDED scheduler there exists some bound $k$ such that between any two consecutive activation of any robot, no other robot is activated more than $k$ times.
\item {\bf ROUND-ROBIN:} 
ROUND-ROBIN scheduler is $1$-BOUNDED and CENT. It means that robots are activated one in each round  and always in the same sequence. 
\end{itemize}

Let $\mathcal{C}(t)$ be a configuration at round $t$.
When $\mathcal{C}(t)$ reaches $\mathcal{C}(t+1)$ by executing the cycle at $t$, it is denoted as $\mathcal{C}(t) \rightarrow \mathcal{C}(t+1)$.
The reflective and transitive closure is denoted as $\rightarrow^*$.

Snapshots may be different by using assumptions even if these configurations are the same, and they depend on multiplicity detection or how robots can see lights of other robots when robots equip with lights.
The robots are said to be capable of {\em multiplicity detection} whether they can distinguish if a point is occupied  by at least two robots. The multiplicity detection is {\em strong} if the robots can detect the exact number of robots on the same points.

In our settings, robots have  persistent lights and can change its color after $\mathit{Compute}$ operation. We consider the following three robot models according to the visibility of lights.
\begin{itemize}
\item {\em full-light}, the robot can recognize not only colors of lights of other robots but also its own color of light.
\item {\em external-light}, the robot can recognize only colors of lights of other robots but cannot see its own color of light. Note robot can change its own color.
\item {\em internal-light}, the robot can recognize only its own color of light but cannot see colors of lights of other robots.
\end{itemize}

When a robot performs $\mathit{Look}$ operation in the internal-light model,  its snapshot is the same as that in the case of robots without lights. On the other hand, in the full-light or external-light model,  it obtains a snapshot with locations of other robots and their colors. In this case we consider several types of snapshots according to view robots observe.    

Given a snapshot $\mathcal{SS}_i$ of a robot $r_i$  and a point $p_j(j \neq i)$ included in $\mathcal{P}(\mathcal{SS}_i)$,
a view  $V_i[p_j]$ of $p_j$ in $\mathcal{SS}_i$ is a subset of $AL_i[p_j]=\{ \ell | (\ell, p_j) \in  \mathcal{SS}_i, r_j \neq r_i \}$, where  $AL_i[p_j]$ is a multi-set  of colors of other robots that $r_i$ can see at point $p_j$
\footnote{Note that $\ell_i$ is not in $AL_i[p_j]$, and
$AL_i[p_j]$ is not defined for $p_j$ such that $r_j$ is located at $p_i$ under the local-unawareness.}.
For any robot $r_i$ and any point $p$ of snapshot of  $r_i$, if $V_i[p]=AL_i[p]$, view of robots is called {\em multi-set- view}. 
If $V_i[p]$ regards $AL_i[p]$ as just a set, it is called {\em set-view}. 
If $V_i[p]$ is a set of any single element taken from  $AL_i[p]$, it is called {\em arbitrary-view}.
Let $V_i$ denote $\cup_{(\ell,p) \in \mathcal{SS}_i}V_i[p]$.

Multi-set-view is a strong assumption, because robots without lights (with one color) can have strong multiplicity detection if multi-set-view is assumed. In fact we can solve a Gathering problem by using robots without lights and multi-set-view\cite{FPS}. On the other hand, set-view and arbitrary-view do not imply  multiplicity detection.
In the following we assume set-view or arbitrary-view.

An $n$-{\em Gathering} problem is defined as follows: given $n (\geq 2)$ robots initially placed at arbitrary positions  in $\R^2$,
they congregate in finite time at a single location which is not predefined.
In the following, the case $2$-Gathering problem is called {\em Rendezvous} and the $n$-Gathering  problem ($n \geq 3)$ is simply called Gathering.
Gathering is said to be {\em distinct} if all robots are initially placed in different positions.  
An algorithm solving Gathering is said to be {\em self-stabilizing} if robots are initially set their lights to arbitrary colors and they start their execution from $\mathit{Look}$ operation. 

Let $\mathcal{S}$ be a configuration or a snapshot.
Given $\mathcal{S}$, 
let $\mathcal{H}(\mathcal{S})$ be the convex hull defined by $\mathcal{S}$, let ${\partial}\mathcal{H}(\mathcal{S})$ denote the set of robots on the border of $\mathcal{H}(\mathcal{S})$, and let $\mathcal{I}(\mathcal{S})$ the set of robots that are interior of $\mathcal{H}(\mathcal{S})$. 

Given two points $p,q \in \R^2$, we indicate the line segment by $\overline{pq}$ and its length by $|\overline{pq}|$. 
Given a snapshot $\mathcal{SS}$,  $\mathit{SEC}(\mathcal{SS})$ denotes the smallest enclosing circle containing  $\mathcal{P}(\mathcal{SS})$, and the length of its diameter and its center are denoted by $\mathit{Diam}(\mathcal{SS})$ and   $\mathit{CTR}(\mathcal{SS})$, respectively.
A longest distance segment (\textit{LDS}, for short)  in $\mathcal{SS}$ is a line segment  $\overline{pq}$ such that $p,q \in  \mathcal{P} (\mathcal{SS})$ and  $|\overline{pq}|=max_{x,y \in \mathcal{P} (\mathcal{SS})}|\overline{xy}|$ and a set of
the longest distance segments in $\mathcal{SS}$ is denoted by $\mathit{LDS}(\mathcal{SS})$.

We use the following geometric properties about  
the smallest enclosing circles and the longest distance segments to obtain a unique longest distance segments.

Let $\mathcal{S}$ be a configuration or a snapshot.

\begin{lemma}\label{lemma:LDSofCH1}
For any  $\overline{pq} \in \mathit{LDS}(\mathcal{S})$,
$|\overline{pq}|  \leq \mathit{Diam}(\mathcal{S})$.
\end{lemma}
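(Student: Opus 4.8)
The plan is to use the single defining property of the smallest enclosing circle, namely that $\mathit{SEC}(\mathcal{S})$ contains every point of $\mathcal{P}(\mathcal{S})$, together with the triangle inequality routed through its center. No case analysis on the structure of the longest distance segment should be needed.

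First I would fix an arbitrary $\overline{pq} \in \mathit{LDS}(\mathcal{S})$ and observe that by definition its endpoints satisfy $p,q \in \mathcal{P}(\mathcal{S})$, so both lie in the closed disk bounded by $\mathit{SEC}(\mathcal{S})$. Writing $O = \mathit{CTR}(\mathcal{S})$ for the center and $R$ for the radius, so that $\mathit{Diam}(\mathcal{S}) = 2R$, containment gives $|\overline{Op}| \le R$ and $|\overline{Oq}| \le R$. The triangle inequality then yields
\[
|\overline{pq}| \;\le\; |\overline{Op}| + |\overline{Oq}| \;\le\; R + R \;=\; 2R \;=\; \mathit{Diam}(\mathcal{S}),
\]
which is exactly the claimed bound.

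I do not expect a genuine obstacle here: the statement is an elementary geometric fact, and the only point to keep in mind is that the inequality must hold for \emph{every} longest distance segment without assuming its endpoints lie on the boundary of $\mathit{SEC}(\mathcal{S})$ — the triangle-inequality argument above already covers the interior case automatically. The value of the lemma is not its depth but its role as a building block: it pairs $\mathit{Diam}$ with the diameter of the point set, and I anticipate the companion results (characterizing when equality holds, or extracting a \emph{unique} longest distance segment) will be where the real work lies in the subsequent lemmas.
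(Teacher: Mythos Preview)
Your proof is correct and follows essentially the same approach as the paper: the paper's proof is the single sentence ``The two points $p$ and $q$ exist in $\mathit{SEC}(\mathcal{S})$,'' and your argument simply makes explicit the standard step (triangle inequality through the center) that turns this containment into the bound $|\overline{pq}|\le 2R=\mathit{Diam}(\mathcal{S})$.
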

\begin{proof}
The two points $p$ and $q$ exist in $\mathit{SEC}(\mathcal{S})$.
\end{proof}

\begin{lemma}\label{lemma:LDSofCH2}
If $\overline{pq} \in \mathit{LDS}(\mathcal{S})$, then $p,q \in \mathcal{H}(S)$.
\end{lemma}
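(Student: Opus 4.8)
The plan is to prove the contrapositive: if a point (say $p$) is \emph{not} on the convex hull $\mathcal{H}(\mathcal{S})$, then no segment with $p$ as an endpoint can be in $\mathit{LDS}(\mathcal{S})$. Since $\mathit{LDS}$ segments realize the maximum pairwise distance, it suffices to show that a strictly interior point is never a farthest point from any other point of $\mathcal{P}(\mathcal{S})$; we can then strictly increase the length of any segment ending at $p$, contradicting maximality.

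First I would fix an arbitrary point $q \in \mathcal{P}(\mathcal{S})$ and consider the function $x \mapsto |\overline{qx}|$ over the convex hull $\mathcal{H}(\mathcal{S})$. The key geometric fact is that this distance function is convex, so it attains its maximum over $\mathcal{H}(\mathcal{S})$ at an extreme point, i.e.\ a vertex of the hull, which by definition lies on $\partial\mathcal{H}(\mathcal{S})$. Concretely, if $p$ is interior to $\mathcal{H}(\mathcal{S})$, then $p$ lies in the relative interior of some segment $\overline{ab}$ with $a,b \in \mathcal{H}(\mathcal{S})$, and by convexity $|\overline{qp}| \le \max\{|\overline{qa}|,|\overline{qb}|\}$ with \emph{strict} inequality whenever $a,b$ are on opposite sides of $p$ and not both equidistant. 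I would argue that one can always choose the direction so the inequality is strict: shooting a ray from $q$ through $p$ and extending to the boundary of $\mathcal{H}(\mathcal{S})$ reaches a boundary point strictly farther from $q$ than $p$ is, because $p$ lies strictly between $q$'s position and that boundary point along the ray.

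Then I would assemble the contradiction. Suppose $\overline{pq} \in \mathit{LDS}(\mathcal{S})$ but $p \notin \mathcal{H}(\mathcal{S})$ is interior (the case where both endpoints are treated symmetrically). By the previous paragraph there is a point $p' \in \mathcal{P}(\mathcal{H}(\mathcal{S})) \subseteq \mathcal{P}(\mathcal{S})$ on the hull boundary with $|\overline{p'q}| > |\overline{pq}|$. This contradicts $|\overline{pq}| = \max_{x,y \in \mathcal{P}(\mathcal{S})}|\overline{xy}|$, since $\overline{p'q}$ is a longer segment between two points of $\mathcal{P}(\mathcal{S})$. Hence $p \in \mathcal{H}(\mathcal{S})$, and by symmetry $q \in \mathcal{H}(\mathcal{S})$ as well.

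The main obstacle I anticipate is the bookkeeping around the hull boundary: I must ensure that the farther point $p'$ produced along the ray is genuinely a point of $\mathcal{P}(\mathcal{S})$ rather than merely a point of the hull $\mathcal{H}(\mathcal{S})$, which is a continuous region whose boundary need not consist only of original points. The clean way around this is to note that the maximum of the convex distance function over the compact set $\mathcal{H}(\mathcal{S})$ is attained at an extreme point of $\mathcal{H}(\mathcal{S})$, and every extreme point of the convex hull of a finite set is one of the generating points in $\mathcal{P}(\mathcal{S})$. This identifies $p'$ with an actual robot position and makes the contradiction with maximality of $|\overline{pq}|$ rigorous, completing the argument.
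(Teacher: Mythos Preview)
Your argument is correct and reaches the intended conclusion, but by a genuinely different route from the paper. The paper's proof is a direct geometric construction: assuming $p$ (or $q$) is not a vertex of $\mathcal{H}(\mathcal{S})$, it takes the nearest hull edges $\overline{p_1p_2}$ and $\overline{q_1q_2}$ to $p$ and $q$, forms the (possibly degenerate) convex quadrilateral on $p_1,p_2,q_1,q_2$, and observes that its longer diagonal strictly exceeds $|\overline{pq}|$, contradicting maximality. You instead invoke the convexity of $x\mapsto |\overline{qx}|$ and the fact that a convex function on a polytope attains its maximum at an extreme point, which for the hull of a finite set must be one of the generating points in $\mathcal{P}(\mathcal{S})$; hence a non-extreme $p$ can be strictly beaten. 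The paper's version is entirely elementary and self-contained; yours is shorter once the convex-analysis fact is granted, and it handles uniformly both the case where $p$ lies in the open interior and the case where $p$ lies on an edge of the hull but is not a vertex, a distinction your middle paragraph's use of ``interior'' glosses over. You already identified the one real gap in the ray-to-boundary argument (the hit boundary point need not lie in $\mathcal{P}(\mathcal{S})$) and patched it with the extreme-point formulation, so in a clean write-up you should simply lead with that version and drop the ray heuristic.
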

\begin{proof}
Suppose that $p$ or $q$ is not a vertex of  $\mathcal{H}(S)$. Let $\overline{p_1p_2}$ and $\overline{q_1q_2}$ be
the nearest edges of $\mathcal{H}(S)$ from $p$ and $q$, respectively. 
Note that $p$ and/or $q$ can be on the edges not vertices.
Comparing $\overline{pq}$ and the convex quadrilateral (or the triangle) composed of 
$\overline{p_1p_2}$ and $\overline{q_1q_2}$, the longer diagonal (or the longest edge) is longer than $\overline{p_1p_2}$.
It contradicts that $\overline{pq}$ is the longest distance segment.
\end{proof}

\begin{lemma}\label{lemma:LDSofCH3}
If for any $\overline{pq} \in \mathit{LDS}(\mathcal{S})$, both  $p$ and $q$ are located on $\mathit{SEC}(\mathcal{S})$,
for an endpoint $p$ such that $\overline{pq} \in \mathit{LDS}(\mathcal{S})$,  $|\{ q | \overline{pq} \in \mathit{LDS}(\mathcal{S})\}| \leq 2$.
\end{lemma}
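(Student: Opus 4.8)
The plan is to reduce the statement to the elementary fact that two distinct circles meet in at most two points. Fix an endpoint $p \in \mathcal{P}(\mathcal{S})$ that belongs to some segment $\overline{pq} \in \mathit{LDS}(\mathcal{S})$, and write $D = |\overline{pq}|$ for the common length of every longest distance segment, i.e.\ the maximum pairwise distance in $\mathcal{P}(\mathcal{S})$. Since $\overline{pq}$ is a genuine segment we have $D > 0$, so $\mathcal{P}(\mathcal{S})$ contains at least two distinct points and the smallest enclosing circle $\mathit{SEC}(\mathcal{S})$ has strictly positive radius $R = \mathit{Diam}(\mathcal{S})/2$.

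Next I would characterize the admissible partners of $p$. Set $Q = \{ q \mid \overline{pq} \in \mathit{LDS}(\mathcal{S}) \}$; the goal is $|Q| \leq 2$. Every $q \in Q$ is forced to satisfy two constraints at once. First, the hypothesis of the lemma says that both endpoints of every longest distance segment lie on $\mathit{SEC}(\mathcal{S})$, so each $q \in Q$ lies on $\mathit{SEC}(\mathcal{S})$. Second, $|\overline{pq}| = D$, so $q$ also lies on the circle $C_p$ centered at $p$ with radius $D$. Hence $Q$ is contained in the intersection $\mathit{SEC}(\mathcal{S}) \cap C_p$.

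It then remains to observe that $\mathit{SEC}(\mathcal{S})$ and $C_p$ are distinct circles, after which the classical bound of two intersection points finishes the proof: $|Q| \leq |\mathit{SEC}(\mathcal{S}) \cap C_p| \leq 2$. Their centers are $\mathit{CTR}(\mathcal{S})$ and $p$, respectively, and since $p$ lies on $\mathit{SEC}(\mathcal{S})$ while $R > 0$, we have $p \neq \mathit{CTR}(\mathcal{S})$, so the two circles have different centers and are in particular not the same circle. This is the only point that requires any care, and I would flag it as the main (though minor) obstacle: one must rule out the degenerate coincidence of the two circles, which is exactly where the positivity of $R$ (forced by $D>0$) is used to separate $p$ from the center. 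No appeal to the convex-hull lemmas is needed; the argument rests solely on the hypothesis that all longest-distance endpoints lie on $\mathit{SEC}(\mathcal{S})$.
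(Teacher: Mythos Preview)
Your proof is correct and is essentially the same argument as the paper's, just phrased directly rather than by contradiction: the paper assumes three partners $q_1,q_2,q_3$, observes they lie on $\mathit{SEC}(\mathcal{S})$ and are equidistant from $p$, and concludes $p$ would have to be the center of $\mathit{SEC}(\mathcal{S})$, contradicting $p\in\mathit{SEC}(\mathcal{S})$; you instead note that every partner lies in $\mathit{SEC}(\mathcal{S})\cap C_p$ and that these two circles are distinct because $p\neq\mathit{CTR}(\mathcal{S})$. Your version is slightly more careful in making explicit why $p$ cannot coincide with the center (positivity of $R$), a step the paper leaves implicit.
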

\begin{proof}
Assume that there are three \textit{LDS}s with the endpoint $p$ and let the other endpoints
be $q_1, q_2$ and $q_3$. Since  $q_1, q_2$ and $q_3$ are located on $\mathit{SEC}(\mathcal{S})$,
$\mathit{SEC}(\mathcal{S})$ must be a circle with the center $p$, which is contradiction that $p$ is also
on $\mathit{SEC}(\mathcal{S})$.
\end{proof}

\begin{lemma}\label{lemma:LDSofCH4}
If both $p$ and $q$ are endpoints of \textit{LDS} ($\overline{pq} \in \mathit{LDS}(\mathcal{S}))$ and  are located on $\mathit{SEC}(\mathcal{S})$, 
then $\mathit{Diam}({\cal S})/2 < |\overline{pq}|$.
\end{lemma}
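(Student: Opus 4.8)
The plan is to work inside the circle $\mathit{SEC}(\mathcal{S})$ with center $O=\mathit{CTR}(\mathcal{S})$ and radius $R=\mathit{Diam}(\mathcal{S})/2$, and to reduce the claim to a statement about a central angle. Since $p$ and $q$ lie on $\mathit{SEC}(\mathcal{S})$, the chord $\overline{pq}$ satisfies $|\overline{pq}|=2R\sin(\angle pOq/2)$, so the desired inequality $\mathit{Diam}(\mathcal{S})/2<|\overline{pq}|$ is equivalent to $\sin(\angle pOq/2)>1/2$, i.e.\ to $\angle pOq>60^\circ$. Thus it suffices to exhibit two points of $\mathcal{S}$ lying on $\mathit{SEC}(\mathcal{S})$ that subtend a central angle exceeding $60^\circ$, and then invoke the maximality of the longest distance segment.

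First I would recall the standard characterization of the smallest enclosing circle of a finite point set: it is determined either by two of the points, which are then antipodal on the circle, or by three of the points whose triangle contains $O$; equivalently, $O$ lies in the convex hull of the points of $\mathcal{S}$ on $\mathit{SEC}(\mathcal{S})$. In the two-point case those points are diametral, hence at distance $2R$. In the three-point case the three central angles they determine are positive, sum to $360^\circ$, and (because $O$ is interior to the triangle) are each less than $180^\circ$; hence the largest of them is at least $120^\circ$, and the corresponding chord has length at least $2R\sin 60^\circ=\sqrt3\,R$. Either way $\mathcal{S}$ contains two points at mutual distance at least $\sqrt3\,R>R$. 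Since $\overline{pq}\in\mathit{LDS}(\mathcal{S})$ realizes the maximum pairwise distance among points of $\mathcal{S}$, I conclude $|\overline{pq}|\ge\sqrt3\,R>R=\mathit{Diam}(\mathcal{S})/2$, which is in fact stronger than what is asked.

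The step needing care is the geometric input, namely that $O$ lies in the convex hull of the contact points, so that no open semicircle of $\mathit{SEC}(\mathcal{S})$ is free of points; this is the usual extremality argument for the smallest enclosing circle (otherwise one could translate $O$ toward the hull and shrink the radius), combined with Carath\'eodory's theorem to reduce to at most three contact points. I also need to dispatch the degenerate situations: if all points of $\mathcal{S}$ coincide the statement is vacuous (there is no nondegenerate $\overline{pq}$ and $R=0$), and if $O$ falls on an edge of the contact triangle that edge is a diameter, a case already absorbed into the two-point case. Finally I would remark that the hypothesis that $p$ and $q$ themselves lie on $\mathit{SEC}(\mathcal{S})$ is not actually needed for this route, since the argument lower-bounds the length of \emph{every} longest distance segment by $\sqrt3\,R$; it can nonetheless be retained to match the statement. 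Should a fully self-contained argument be preferred to citing the smallest-enclosing-circle characterization, the same bound follows at once from Jung's theorem in the plane, which gives $R\le d/\sqrt3$ for a set of diameter $d$, hence $|\overline{pq}|=d\ge\sqrt3\,R>R$.
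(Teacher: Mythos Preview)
Your argument is correct and in fact yields the sharper bound $|\overline{pq}|\ge\sqrt{3}\,\mathit{Diam}(\mathcal{S})/2$, but it proceeds along a different line from the paper. The paper argues by contradiction directly from the point $p$: assuming $|\overline{pq}|\le R$, every point of $\mathcal{S}$ lies within distance $R$ of $p$ (since $\overline{pq}$ is a longest segment), hence in the lune cut out by the disk of radius $R$ about $p$ inside the SEC disk; this lune sits inside a circle on the chord $\overline{ab}$ (where $a,b$ are the points making $\triangle apc$, $\triangle bpc$ equilateral), contradicting the minimality of $\mathit{SEC}(\mathcal{S})$. In other words, the paper re-derives, in situ, the ``no empty open semicircle'' extremality of the smallest enclosing circle, whereas you invoke that extremality packaged as the two-or-three-point characterization (or as Jung's theorem) and then run a pigeonhole on central angles. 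Your route is cleaner and gives the optimal constant, at the cost of importing a standard external fact; the paper's route is more self-contained and uses the hypothesis $p\in\mathit{SEC}(\mathcal{S})$ in an essential way, though its final sentence (``$\mathit{SEC}(\mathcal{S})$ is the circle having $\overline{ab}$ as the diameter'') is phrased loosely and should be read as ``$\mathcal{S}$ would fit in a strictly smaller circle.'' Your observation that the hypothesis $p,q\in\mathit{SEC}(\mathcal{S})$ is unnecessary for your approach is accurate.
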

\begin{proof}
Let $c=\mathit{CTR}(\mathcal{S})$ and let $a$ and $b$ be distinct points on $\mathit{SEC}(\mathcal{S})$ such that $\triangle{apc}$ and $\triangle{bpc}$ are equilateral triangles. If $\mathit{Diam}(\mathcal{S})/2 \geq |\overline{pq}|$,  all points in $\mathcal{S}$  must exist in the sector consisting of $\overline{ca}$ and 
$\overline{cb}$ and the shorter arc  $\stackrel{\frown}{a b}$. Then $S\mathit{EC}(\mathcal{S})$ is the circle having $\overline{ab}$ as the diameter. Contradiction.
\end{proof}
%

\begin{lemma}\label{lemma:LDSofCH5}
If $\overline{pq}, \overline{rs} \in \mathit{LDS}(\mathcal{S})$ and $p,q,r,s$ are distinct, then $\overline{pq}$ and  $\overline{rs}$ have an intersection.
\end{lemma}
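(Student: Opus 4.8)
The plan is to argue by contradiction. Suppose $\overline{pq}$ and $\overline{rs}$ have no common point, and write $D=|\overline{pq}|=|\overline{rs}|=\max_{x,y\in\mathcal{P}(\mathcal{S})}|\overline{xy}|$; since both are longest distance segments, each of the remaining pairwise distances satisfies $|\overline{pr}|,|\overline{ps}|,|\overline{qr}|,|\overline{qs}|\le D$. The aim is to exhibit a pair of points at distance strictly greater than $D$, contradicting maximality. I would organize the argument around the convex hull $\mathcal{H}(\{p,q,r,s\})$ of the four distinct endpoints.

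The principal case is when the four points are in convex position, i.e.\ $\mathcal{H}(\{p,q,r,s\})$ is a quadrilateral. In the cyclic order around the hull, the pairs $\{p,q\}$ and $\{r,s\}$ either interleave or they do not. If they interleave, $\overline{pq}$ and $\overline{rs}$ are the two diagonals of a convex quadrilateral, which always cross; this contradicts the assumption that the segments are disjoint. If they do not interleave, $\overline{pq}$ and $\overline{rs}$ are opposite sides, and the diagonals join a point of $\{p,q\}$ to a point of $\{r,s\}$, say $\overline{pr}$ and $\overline{qs}$ meeting at a point $O$. The workhorse is the elementary inequality that the diagonals of a convex quadrilateral are together longer than either pair of opposite sides, which I would prove on the spot: the triangle inequality in the two triangles at $O$ gives $|\overline{pq}|<|\overline{pO}|+|\overline{Oq}|$ and $|\overline{rs}|<|\overline{rO}|+|\overline{Os}|$, and summing and regrouping along the diagonals yields $|\overline{pq}|+|\overline{rs}|<|\overline{pr}|+|\overline{qs}|$. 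Hence $2D<|\overline{pr}|+|\overline{qs}|$, so at least one diagonal exceeds $D$, a contradiction.

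The step I expect to be the main obstacle is the degenerate situation where the clean quadrilateral picture fails, namely when $\mathcal{H}(\{p,q,r,s\})$ is a triangle (one endpoint inside the triangle spanned by the other three) or a segment (all four collinear). For the triangle case I would use the complementary observation that if an endpoint, say $s$, lies in the (closed) triangle with vertices $p,q,r$, then $s$ lies in the closed disk of radius $D$ centred at its partner $r$; the delicate point is to check that the only points of that triangle on the bounding circle of the disk are the vertices at distance exactly $D$ (an edge of the triangle is a chord whose interior lies strictly inside the disk), so any interior or edge point is at distance strictly less than $D$, forcing $|\overline{rs}|<D$ and contradicting $|\overline{rs}|=D$. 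The fully collinear case is then easy: two length-$D$ segments on a line that share no point have their extreme endpoints at distance greater than $D$. This chord/disk bookkeeping in the degenerate configurations is the one genuinely fiddly part of the argument; everything else reduces to the triangle inequality.
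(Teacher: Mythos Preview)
Your proof is correct and follows essentially the same approach as the paper: assume the two longest segments are disjoint and derive a contradiction by exhibiting a diagonal of the quadrilateral on $p,q,r,s$ whose length exceeds $D$. The paper's proof is a one-line sketch of exactly this idea (``the longer diagonal of the quadrilateral composed of $\overline{pq}$ and $\overline{rs}$ becomes the longest line segment''), whereas you additionally spell out the triangle-inequality step and treat the degenerate (triangle and collinear) configurations that the paper leaves implicit.
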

\begin{proof}
If $\overline{pq}$ and  $\overline{rs}$ do not have any intersection, the longer diagonal of the quadrilateral composed of 
$\overline{pq}$ and $\overline{rs}$ becomes the longest line segment.
\end{proof}

\begin{lemma}\label{lemma:LDSofCH6}
Let $p$ and $q_1 (\overline{pq_1} \in \mathit{LDS}(\mathcal{S}))$ be located on $\mathit{SEC}(\mathcal{S})$
Assume that  $\overline{pq_1}$ is not a diameter of $\mathit{SEC}(\mathcal{S)}$.
Let $q_2$ be the intersection of a circle with center $p$ and radius $|\overline{pq_1}|$ and $\mathit{SEC}(\mathcal{S})$
and not $q_1$. Then there are no points on arc $\stackrel{\frown}{q_1 q_2}$ of  $\mathit{SEC}(\mathcal{S})$ not containing $p$ except $q_1$ and $q_2$ (see Figure~\ref{fig:lemma6}).
\end{lemma}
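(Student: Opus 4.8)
The plan is to exploit the defining extremal property of a longest distance segment together with the elementary fact that all of $\mathcal{S}$ must then lie in the intersection of two disks. Write $D = |\overline{pq_1}|$. Since $\overline{pq_1} \in \mathit{LDS}(\mathcal{S})$ and $p \in \mathcal{P}(\mathcal{S})$, no point of $\mathcal{S}$ can be farther than $D$ from $p$; that is, every point of $\mathcal{S}$ lies in the closed disk $\mathcal{D}_p$ of radius $D$ centered at $p$. Every point of $\mathcal{S}$ also lies in the closed disk bounded by $\mathit{SEC}(\mathcal{S})$. Hence it suffices to show that the open arc $\stackrel{\frown}{q_1 q_2}$ \emph{not} containing $p$ lies strictly outside $\mathcal{D}_p$, because then any point $x$ of $\mathcal{S}$ placed on the interior of that arc would satisfy $|\overline{px}| > D$, contradicting the extremality of $\overline{pq_1}$.

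Next I would pin down the two intersection points. Because $\overline{pq_1}$ is not a diameter we have $D < \mathit{Diam}(\mathcal{S})$, and since $D>0$ the circle of radius $D$ about $p$ meets $\mathit{SEC}(\mathcal{S})$ transversally in exactly the two points $q_1$ and $q_2$ (the center $p$ lies on $\mathit{SEC}(\mathcal{S})$ itself, so the two circles genuinely cross rather than being tangent). These two points split $\mathit{SEC}(\mathcal{S})$ into two open arcs, and on $\mathit{SEC}(\mathcal{S})$ the equation $|\overline{pz}| = D$ holds only at $q_1$ and $q_2$; consequently one open arc has $|\overline{pz}| < D$ throughout and the other has $|\overline{pz}| > D$ throughout. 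To decide which is which, I would use that $p$ itself lies on $\mathit{SEC}(\mathcal{S})$ with $|\overline{pp}| = 0 < D$, so the arc containing $p$ is the one inside $\mathcal{D}_p$; therefore the arc \emph{not} containing $p$ is precisely the set of $z \in \mathit{SEC}(\mathcal{S})$ with $|\overline{pz}| > D$ (plus the endpoints). Combining this with the first paragraph closes the argument: no point of $\mathcal{S}$ can sit in the interior of the arc $\stackrel{\frown}{q_1 q_2}$ opposite $p$, leaving only $q_1$ and $q_2$, as claimed and as depicted in Figure~\ref{fig:lemma6}.

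The step I expect to need the most care is the selection of the correct arc, i.e.\ verifying that it is the $p$-free arc, not the $p$-side arc, that lies outside $\mathcal{D}_p$, and that the two circles meet in exactly two points. Both are geometrically evident from the figure, but a clean rigorous version comes from the monotonicity of distance to $p$ along $\mathit{SEC}(\mathcal{S})$: parametrizing $z$ by its central angle $\theta$ measured from $p$, one has $|\overline{pz}| = \mathit{Diam}(\mathcal{S})\cdot\sin(\theta/2)$, which rises to its maximum $\mathit{Diam}(\mathcal{S})$ at the antipode of $p$. This exhibits the far arc (through the antipode, hence not containing $p$) as exactly the locus where $|\overline{pz}| > D$, and everything else follows immediately from the LDS extremality established at the outset.
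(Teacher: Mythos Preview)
Your argument is correct and follows exactly the same idea as the paper's proof: any point $q$ in the interior of the arc $\stackrel{\frown}{q_1 q_2}$ not containing $p$ satisfies $|\overline{pq}| > |\overline{pq_1}|$, contradicting that $\overline{pq_1}$ is a longest distance segment. The paper states this in a single sentence, whereas you supply the supporting details (transversal intersection of the two circles, identification of the correct arc via the chord-length formula $|\overline{pz}| = \mathit{Diam}(\mathcal{S})\sin(\theta/2)$), but the approach is the same.
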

\begin{proof}
If there is a point $q (\neq q_1, q_2)$ on the arc $\stackrel{\frown}{q_1 q_2}$, $| \overline{pq}|$ is greater than the \textit{LDS}.
\end{proof}

\begin{figure*}
  \begin{center}
    \includegraphics[scale=0.5,width=4cm]{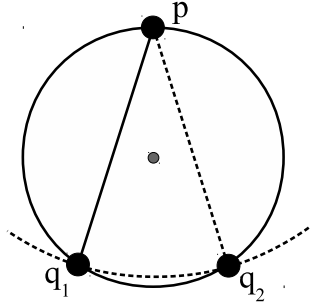}%
    \caption{Proof of Lemma 6.}
    \label{fig:lemma6}
  \end{center}
\end{figure*}

\begin{lemma}\label{lemma:LDSofCH7}
Assume that for any $\overline{pq} \in \mathit{LDS}(\mathcal{S})$, both  $p$ and $q$ are located on $\mathit{SEC}(\mathcal{S})$.
If $\mathcal{H}(\mathcal{S})$ is not a regular polygon, there exists a point $p_0$
such that  $\overline{p_0q} \in \mathit{LDS}(\mathcal{S})$ and
$\overline{p_0r}\not \in \mathit{LDS}(\mathcal{S})$ for any $r (\neq q) \in \mathcal{P} (\mathcal{S})$. The point $p_0$ is called a  {\em single-endpoint} of \textit{LDS} (Figure~\ref{fig:lemma7}(a)).
\end{lemma}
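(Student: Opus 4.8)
The plan is to prove the contrapositive: if no \emph{single-endpoint} exists, then $\mathcal{H}(\mathcal{S})$ is a regular polygon. First I would encode the data as a graph $G$ whose vertices are the endpoints of the members of $\mathit{LDS}(\mathcal{S})$ — all lying on $\mathit{SEC}(\mathcal{S})$ by hypothesis — and whose edges are the segments of $\mathit{LDS}(\mathcal{S})$. A single-endpoint is precisely a degree-$1$ vertex of $G$, and by Lemma~\ref{lemma:LDSofCH3} every vertex has degree at most $2$. Hence ``no single-endpoint'' means every vertex of $G$ has degree exactly $2$, i.e. $G$ is a disjoint union of cycles, and my task is to show this forces $\mathcal{H}(\mathcal{S})$ to be regular.

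Next I would exploit that all segments in $\mathit{LDS}(\mathcal{S})$ share the common length $d$, so each subtends a common arc $\beta$ on $\mathit{SEC}(\mathcal{S})$; by Lemma~\ref{lemma:LDSofCH4}, $\beta > \pi/3$. I claim no LDS is a diameter of $\mathit{SEC}(\mathcal{S})$: if $\overline{pq}$ were a diameter, then $q$ is the \emph{unique} point at distance $d=\mathit{Diam}(\mathcal{S})$ from $p$, so $p$ would have degree $1$, contradicting the assumption. Therefore $\beta < \pi$, and for each endpoint $p$ there is at most one endpoint at arc-distance $\beta$ clockwise and at most one counter-clockwise. Since $\deg p = 2$, both exist, so the map $\sigma$ sending $p$ to its clockwise arc-$\beta$ neighbour is a permutation of the vertex set, and the cycles of $G$ are exactly the orbits of $\sigma$.

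The heart of the argument is that a single orbit forces regularity: if $p,\sigma p,\ldots,\sigma^{k-1}p$ exhausts the vertices, then consecutive images are arc-$\beta$ apart and close up after a total arc $k\beta$ that is an integer multiple of $2\pi$, so the vertices sit at arcs $0,\beta,2\beta,\ldots$ and are therefore equally spaced — the vertices of a regular $k$-gon inscribed in $\mathit{SEC}(\mathcal{S})$. To reduce to one orbit I would use Lemma~\ref{lemma:LDSofCH5}: any two vertex-disjoint segments of $\mathit{LDS}(\mathcal{S})$ must cross, and I would check that two distinct equally spaced orbits on the circle necessarily contain a vertex-disjoint \emph{non-crossing} pair, so at most one orbit can occur. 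For the common arc $\beta$ to correspond to the \emph{largest} chord while staying below $\pi$, the count $k$ must be odd (for even $k$ the largest chord is a diameter, already excluded); thus $G$ is a single odd cycle whose vertices form a regular $k$-gon.

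The step I expect to be the main obstacle is ruling out stray points of $\mathcal{S}$ that would spoil regularity of $\mathcal{H}(\mathcal{S})$ even after the LDS endpoints are pinned down. A point strictly inside $\mathit{SEC}(\mathcal{S})$ does not affect the hull, so only points on $\mathit{SEC}(\mathcal{S})$ are dangerous; for such a point $x$ distinct from the $k$ regular vertices, $x$ lies strictly inside one arc between adjacent vertices, and — using that $k$ is odd together with the empty-arc property of Lemma~\ref{lemma:LDSofCH6} — one shows that some vertex is at arc-distance exceeding that of the longest chord, yielding a segment longer than $d$ and contradicting maximality of $d$. Making this last estimate airtight, including the borderline positions near a vertex where the excess shrinks to zero, is the delicate part; granting it, every vertex of $\mathcal{H}(\mathcal{S})$ is one of the $k$ equally spaced points, so $\mathcal{H}(\mathcal{S})$ is a regular polygon, which is exactly the contrapositive of the claim.
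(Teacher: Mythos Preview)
Your proposal is correct and follows essentially the same contrapositive strategy as the paper: assume no single-endpoint, invoke Lemma~\ref{lemma:LDSofCH3} to conclude every LDS endpoint has degree exactly~$2$, and deduce that these endpoints are equally spaced on $\mathit{SEC}(\mathcal{S})$, hence form a regular polygon. The paper's version is terser---it argues directly with inscribed angles rather than your orbit/arc language, and it tacitly identifies the points on $\mathit{SEC}(\mathcal{S})$ with the LDS endpoints---so your explicit handling of multiple orbits (via Lemma~\ref{lemma:LDSofCH5}) and of stray boundary points (via Lemma~\ref{lemma:LDSofCH6}) fills in details the paper leaves implicit.
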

\begin{proof}
If there exists no point $p_0$ such that $\overline{p_0q} \in \mathit{LDS}(\mathcal{S})$ and
$\overline{p_0r}\not \in \mathit{LDS}(\mathcal{S})$ for any $r (\neq q)$, $|\{ q | \overline{pq} \in \mathit{LDS}(\mathcal{S})\}| \geq 2$ for any point $p$ on $\mathit{SEC}(\mathcal{S})$. Since $|\{ q | \overline{pq} \in \mathit{LDS}(\mathcal{S})\}| \leq 2$ by Lemma~\ref{lemma:LDSofCH3},
$|\{ q | \overline{pq} \in \mathit{LDS}(\mathcal{S})\}| = 2$ for any point $p$.
Let $n$ be the number of points located on $\mathit{SEC}(\mathcal{S})$. For a point $p$, let $\overline{pq}$ and $\overline{pr}$ be two \textit{LDS}s
having $p$ as the endpoint. Note that $q$ and $r$ are also located on $\mathit{SEC}(\mathcal{S})$.
Let $pp'$ be the diameter of $\mathit{SEC}(\mathcal{S})$. 
Then $\angle{qpp'}=\angle{rpp'}$ (denoted as $\alpha$). Since this property holds for any point on $\mathit{SEC}(\mathcal{S})$,
$\alpha$ must be  $(n+2)\pi/n$ (Figure~\ref{fig:lemma7}(b)). Otherwise, it is easily verified that it contradict that the number of points located on $\mathit{SEC}(\mathcal{S})$ is $n$. 
When $\alpha=(n+2)\pi/n$, the points on $\mathit{SEC}(\mathcal{S})$ constitute a regular polygon and it contradicts the assumption of this lemma.
\end{proof}

\begin{figure*}
  \begin{center}
    \includegraphics[width=1.0\textwidth, keepaspectratio]{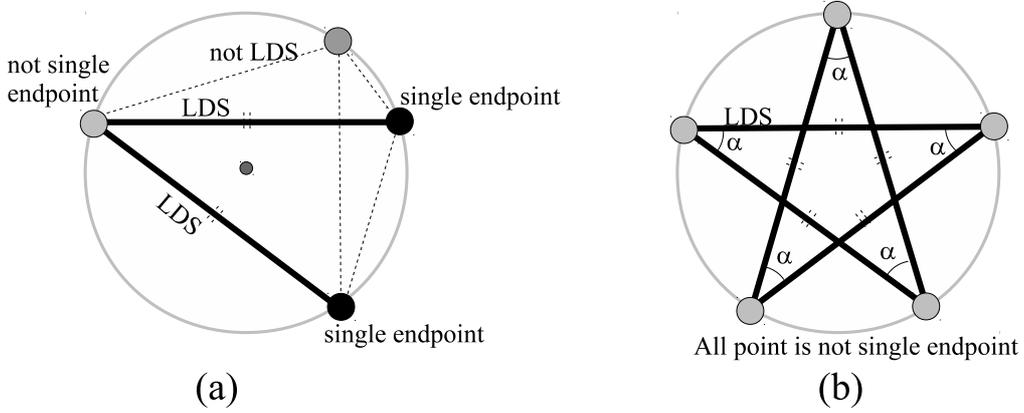}
    \caption{Proof of Lemma 7.}
    \label{fig:lemma7}
  \end{center}
\end{figure*}

\section{Previous Results for Rendezvous and Gathering}
\label{sec:impo}
Rendezvous is trivially solvable with CENT or FSYNC scheduler.
The multiplicity detection does not help to solve Rendezvous and
it is generally unsolved with SSYNC, even if chirality is assumed.

\begin{theorem} \label{theorem:Rand_Impo}
{\em \cite{FPS}} Rendezvous is deterministically unsolvable in SSYNC scheduler even if chirality is assumed.
\end{theorem}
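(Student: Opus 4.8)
The plan is to argue by contradiction using an adversarial symmetry construction. Suppose a deterministic algorithm $A$ solves Rendezvous for the two robots $r_1,r_2$ in SSYNC. I would exhibit an infinite fair SSYNC execution in which $r_1$ and $r_2$ never occupy a common point, which contradicts correctness.

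First I would fix a point-symmetric configuration together with symmetric local frames. Place $r_1$ and $r_2$ at distinct points $p_1\neq p_2$ with midpoint $m$, and let $\sigma(x)=2m-x$ be the point reflection through $m$. The crucial observation is that $\sigma$ is a rotation by $\pi$, hence orientation-preserving; therefore I can set $r_2$'s local coordinate system to be the $\sigma$-image of $r_1$'s while still letting both frames respect one common handedness. This is exactly why the impossibility survives the chirality assumption: the symmetry used is rotational, not mirror, so agreement on clockwise orientation does not break it. For two robots, strong multiplicity detection is also useless before gathering, since each occupied point carries a single robot, so I need not account for it either.

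The heart of the construction is that this symmetry is self-reproducing under synchronous activation. With the frames above, the snapshot $r_1$ obtains of the symmetric configuration is identical to the one $r_2$ obtains: each sees itself at the origin and the other robot at the same local coordinates. Since the robots are anonymous and run the same deterministic $A$, they return the same local destination, so their global destinations $d_1,d_2$ satisfy $d_2=\sigma(d_1)$. Activating both robots --- a legal SSYNC (indeed FSYNC) step --- then moves them to $d_1$ and $\sigma(d_1)$, again symmetric about the same center $m$ and with the same fixed symmetric frames. Hence the configuration stays point-symmetric, and the two robots become co-located after such a step if and only if $d_1=m$, that is, only if $A$ computes exactly the midpoint.

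The main obstacle, and the step I expect to be delicate, is ruling out a correct algorithm that forces simultaneous arrival at $m$. As long as $A$ computes a destination different from $m$ on the current symmetric configuration, the adversary keeps activating both robots: the configuration remains symmetric and the robots remain distinct, so Rendezvous never completes. The only escape for a correct $A$ is to compute exactly $m$ in some symmetric configuration, and this is precisely where the strictly semi-synchronous power beyond FSYNC must be invoked. Under non-rigid movement the adversary stops both robots symmetrically just short of $m$, landing them at distinct symmetric points and preserving the invariant, so they converge to $m$ but never meet in finite time. Under rigid movement the adversary instead activates a single robot to break the simultaneous arrival and then re-establishes a symmetric sub-configuration of smaller diameter before recursing. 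The technical crux is to run this single-robot scheme while \emph{simultaneously} guaranteeing that the robots never coincide and that the schedule stays fair (each robot activated infinitely often); once these two requirements are shown to be compatible, the contradiction with the assumed correctness of $A$ is immediate.
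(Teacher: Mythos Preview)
The paper does not give its own proof of this theorem; it is simply quoted as a known result from \cite{FPS}. Your proposal follows the classical symmetry argument from that source, so there is nothing in the present paper to compare against.

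That said, your plan has a genuine error in the non-rigid branch. You write that the adversary ``stops both robots symmetrically just short of $m$ \ldots\ so they converge to $m$ but never meet in finite time.'' This fails because of the minimum-movement guarantee built into the non-rigid model: each activated robot advances at least $\delta$ toward its destination (or reaches it). Hence after at most $\lceil d/(2\delta)\rceil$ fully-synchronous rounds the inter-robot distance drops to at most $2\delta$, and on the next round both robots \emph{must} reach $m$. Symmetric stopping alone therefore does not block an algorithm that always targets the midpoint; it produces Rendezvous in finitely many rounds.

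The fix is the same device you already sketch for the rigid case: whenever $A$ computes the midpoint, activate only one robot. What you are missing --- and what dissolves the ``re-establishing'' difficulty you flag --- is that this single move \emph{automatically} preserves the view symmetry. With frames chosen point-symmetric (so their linear parts satisfy $L_2=-L_1$), if $r_1$ alone moves from $p_1$ to $m=(p_1+p_2)/2$, then afterwards $r_1$ sees $r_2$ at $L_1(p_2-m)=\tfrac{1}{2}L_1(p_2-p_1)$ while $r_2$ sees $r_1$ at $L_2(m-p_2)=-\tfrac{1}{2}L_2(p_2-p_1)=\tfrac{1}{2}L_1(p_2-p_1)$: the two local views coincide and equal exactly half the previous common view. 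No extra moves are needed to restore symmetry; the recursion is immediate, and alternating which single robot you activate gives fairness while keeping the separation strictly positive (it merely halves). This single-activation argument works uniformly under both rigid and non-rigid movement, so the case split you introduce is unnecessary and the ``technical crux'' you anticipate is in fact a one-line computation.
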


If robots have a constant number of colors in their lights, Rendezvous can be solved shown in the following theorem (see Table~\ref{tab:Table-Rendezvous}).

\begin{theorem}
{\em \cite{HDT,FSVY}}
\begin{enumerate}
\item[(1)] Rendezvous is solved in full-light, non-rigid and ASYNC with $2$ colors.
\item[(2)] Rendezvous is solved in external-light, non-rigid and SSYNC with $3$ colors.
\item[(3)] Rendezvous is solved in external-light, rigid and ASYNC with $12$ colors.
\item[(4)] Rendezvous is solved in external-light, non-rigid and ASYNC with $3$ colors and knowledge of a minimum distance $\delta$ robots move.
\item[(5)] Rendezvous is solved in internal-light, rigid and SSYNC with $6$ colors.
\item[(6)]  Rendezvous is solved in internal-light, non-rigid and SSYNC with $3$ colors and knowledge of a minimum distance $\delta$ robots move.
\end{enumerate} 
\end{theorem}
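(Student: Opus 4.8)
Since this theorem collects six results already established in \cite{HDT} and \cite{FSVY}, my plan is to reconstruct and verify each algorithm rather than invent a new argument, organizing the verification around one paradigm that all six share. The common idea is to use the light colors as a small finite automaton that \emph{serializes} the two robots' movements: at any time at most one robot is the designated ``mover,'' it heads toward the partner's observed position (in some variants toward the midpoint), and the colors certify which phase of the protocol the pair currently occupies. The reason plain Rendezvous fails in SSYNC (Theorem~\ref{theorem:Rand_Impo}) is that an adversary can activate the robots so that they forever swap roles or mirror each other's steps; the colors are exactly what let a robot distinguish ``I have just moved'' from ``I am waiting'' and thereby break this loop. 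I would therefore first fix, for each item, the color set together with its transition and movement rule, and then argue convergence against a worst-case fair scheduler.

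For the SSYNC items (2), (5) and (6) I would proceed as follows. In (2), three external colors let a robot read only the partner's color and cycle through a three-phase handshake, so that regardless of which subset the SSYNC adversary activates each round, the pair eventually reaches a round in which exactly one robot advances to the other's position while the partner's color guarantees it stays put; non-rigidity is absorbed because the mover, even when stopped short, still covers at least $\delta$, so the remaining distance either falls below $\delta$---and the next activation completes the Rendezvous---or decreases by at least $\delta$, giving termination after finitely many moves. Item (5) is the internal analogue: a robot sees only its own color, so the handshake must be driven entirely from self-state, which is why it costs more colors (six) and why rigidity is needed---without reading the partner a robot cannot detect that it was stopped short, so it must be able to reach its destination exactly. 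Item (6) recovers three colors in the internal model by trading rigidity for knowledge of $\delta$: a robot can then take controlled steps whose length it calibrates against $\delta$, and from its own bounded moves it can infer enough about the partner to restore the coordination that (5) obtained from rigidity.

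For the ASYNC items (1), (3) and (4) the extra difficulty is \emph{stale snapshots}: a robot may Compute on a configuration that has already changed and may itself be seen mid-Move, so the verification must track not only configurations but pending moves and the colors attached to them. In (1), the two full-light colors expose the four observable pairs $(\ell_i,\ell_j)$, and I would check that this four-state product suffices to encode a safe protocol in which a robot never initiates a move that could cross a partner already in transit; this is the delicate optimal case of \cite{HDT}. Items (3) and (4) are the external-light ASYNC constructions, where the larger color count ($12$ in the rigid case) buys enough states to disambiguate every stale-observation scenario, and knowing $\delta$ in (4) again lets bounded moves substitute for many of those states, cutting the count back to three.

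The step I expect to be the genuine obstacle---and where the cited papers do the real work---is the \textbf{convergence argument against the adversary}, not the protocol description. Exhibiting colors and a movement rule is routine; proving that for \emph{every} fair SSYNC or ASYNC schedule the interrobot distance reaches zero, or falls below the tolerance that lets a single step finish, requires a careful case analysis over which robots are active, what each has observed, and, in ASYNC, which moves are pending. I would organize this as a potential-function argument showing that the pair cannot remain in any non-terminal color configuration forever and that each completed handshake strictly decreases the distance, treating the non-rigid cases separately, since there it is monotone progress by multiples of $\delta$, rather than exact arrival, that guarantees the robots meet.
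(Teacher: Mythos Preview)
The paper gives no proof of this theorem at all: it is stated in Section~\ref{sec:impo} purely as a summary of prior work, with the citations \cite{HDT,FSVY} standing in for the argument. So there is nothing in the paper to compare your proposal against; any reconstruction you carry out is already more than the paper does.

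That said, your sketch is only an outline, not a proof, and a couple of points are imprecise if you intend to actually carry it out. First, the unifying slogan ``at most one robot is the designated mover, heading toward the partner'' is not quite what the cited algorithms do: several of them (including the $2$-color full-light ASYNC algorithm of \cite{HDT} and the $3$-color external SSYNC algorithm of \cite{FSVY}) have both robots moving toward the midpoint in certain color states, and the correctness hinges on the color transitions preventing overshoot or oscillation, not on serializing movers. Second, your account of item~(1) understates the difficulty: with only two full-light colors in ASYNC, the hard cases arise when a robot takes its snapshot while the partner is mid-Move and still showing a stale color, and the algorithm of \cite{HDT} handles this by a careful choice of which color/position pairs trigger a move to the midpoint versus staying put; the ``four observable pairs'' picture is too coarse because in ASYNC the observed pair need not reflect the current configuration. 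If you really mean to reconstruct the proofs, you will need to reproduce the actual transition tables from \cite{HDT} and \cite{FSVY} and the case analyses there; the potential-function shape you propose is the right skeleton, but the case split in ASYNC is over \emph{execution states} (who has Looked, who is Moving, what color was set when), not just over configurations.
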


Impossibility and/or possibility results for Gathering are stated in the following theorems. 

\begin{theorem} \label{theorem:ssync}
{\em \cite{FPS,DGCMR}}
If we do not assume strong multiplicity detection, the followings holds.
\begin{enumerate}
\item[(1)] Gathering is deterministically unsolvable in SSYNC.
\label{theorem:centrr}
\item[(2)] Gathering is deterministically unsolvable in $2$-BOUNDED and CENT.
\item[(3)] Distinct Gathering is deterministically unsolvable in ROUND-ROBIN.
\end{enumerate}
The impossibility holds even if we assume chirality, and rigid movement or non-rigid movement with knowledge of minimum distance $\delta$.
\end{theorem}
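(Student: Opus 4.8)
The plan is to argue by an adversary that exploits indistinguishability, and the first thing I would record is the principle driving all three parts. In the model assumed here (anonymous, oblivious, uniform robots using a single observable color, local\nobreakdash-unaware, and without strong multiplicity detection) the snapshot a robot obtains at a point $p$ is merely the bare fact that $p$ is occupied, with no count attached. Hence a point holding an entire cluster of robots is indistinguishable from a point holding one robot, and the observable state of a configuration is nothing more than its set of occupied positions read through each observer's local coordinate system. Since the algorithm is deterministic and identical for all robots, two configurations with the same occupied\nobreakdash-point set produce the same destinations; moreover the adversary may assign coordinate systems realising any isometry of the configuration. Because Gathering demands reaching a configuration with a single occupied point, it suffices to exhibit, in each scheduler, an initial configuration and a fair schedule for which at least two occupied points persist forever.

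For (1) I would reduce directly to Theorem~\ref{theorem:Rand_Impo}. Fix two distinct points $P,Q$, split the $n$ robots into two clusters placed at $P$ and $Q$, and give every robot in a cluster the same local coordinate system, so that a cluster behaves as a single ``super\nobreakdash-robot'' moving in lockstep that never splits. Each robot then sees exactly one other occupied point, i.e. the snapshot of one of the two robots of a Rendezvous instance on $\{P,Q\}$; by indistinguishability the two clusters evolve exactly as those two robots. Running the SSYNC schedule guaranteed by Theorem~\ref{theorem:Rand_Impo} (activating a whole cluster whenever that theorem activates the corresponding robot, which SSYNC and fairness permit) keeps the two clusters at two distinct points forever, so Gathering never completes. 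The absence of strong multiplicity detection is precisely what makes the $n$\nobreakdash-robot instance indistinguishable from the $2$\nobreakdash-robot one.

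Parts (2) and (3) are harder because CENT activates exactly one robot per round, so the lockstep trick is unavailable. Here the plan is a symmetry\nobreakdash-maintenance argument. I would start from a configuration invariant under an order\nobreakdash-$2$ rotation (point symmetry, chosen rotational rather than reflective so that assuming chirality cannot help the algorithm) whose unique fixed point is unoccupied, and pair each robot with its antipode. The adversary maintains the invariant ``at least two symmetrically placed occupied points'' as follows: after activating a robot it activates the paired antipode to restore the symmetric picture before any single gathering point can form, which the $2$\nobreakdash-BOUNDED budget of part (2) and the fixed antipodal ordering of the $1$\nobreakdash-BOUNDED ROUND\nobreakdash-ROBIN of part (3) both accommodate; a hidden multiplicity is seeded so that even a merge of two points still leaves the mirror pair occupied. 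For part (3) distinctness is met by taking the initial symmetric configuration with all positions distinct. In every case the configurations are invariant under a rotation and the adversary only ever lets a robot complete its computed move, so the impossibility is unaffected by chirality and by the choice of rigid or non\nobreakdash-rigid$(+\delta)$ movement.

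The main obstacle is the invariant maintenance for (2) and (3). Unlike (1), it cannot be reduced to the $n=2$ case, since Rendezvous is solvable in CENT and in ROUND\nobreakdash-ROBIN, so one must show, against \emph{every} deterministic destination rule, that the mirrored ``restore symmetry'' activation is always available within the scheduler's budget and that the number of occupied points stays at least two indefinitely. Carrying out the case analysis on where the activated robot is sent---in particular verifying that neither a move onto an already occupied point nor a move toward the fixed point lets the algorithm collapse the configuration before the partner activation undoes it---is the crux of the argument.
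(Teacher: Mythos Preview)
The paper does not prove this theorem at all: it is quoted as a known result from \cite{FPS,DGCMR} and no argument is given in the text. There is therefore no ``paper's own proof'' to compare your proposal against; anything you supply here goes beyond what the paper does.

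On the substance of your sketch: the reduction in part~(1) is sound and is essentially the standard one --- without strong multiplicity a two-point configuration of $n$ robots is observationally identical to a two-robot instance, and the SSYNC adversary can activate clusters in lockstep to simulate the Rendezvous adversary of Theorem~\ref{theorem:Rand_Impo}.

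Your plan for parts~(2) and~(3), however, has a genuine gap that you identify but do not close. In CENT exactly one robot moves per round, so after you activate a robot $r$ and before you activate its antipode $r'$, the configuration is \emph{no longer} point-symmetric. When $r'$ takes its snapshot it sees an asymmetric picture, and nothing forces its deterministic destination to be the mirror of $r$'s; the ``restore symmetry'' step is therefore not automatic. The actual proofs in \cite{DGCMR} do not rely on naive mirroring but on a more delicate bivalency/indistinguishability argument tailored to the scheduler, and the ``hidden multiplicity is seeded so that even a merge still leaves the mirror pair occupied'' idea you invoke needs a concrete construction and a case analysis showing the invariant survives every possible destination the algorithm can output. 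As written, your proposal for~(2) and~(3) is a plausible plan of attack, not a proof.
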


\begin{theorem} \label{theorem:GatheringGen}
{\em \cite{FPS}}
\begin{enumerate}
\item[(1)] With strong multiplicity detection, $n$-Gathering is solved in SSYNC if and only if $n$ is odd.
\item[(2)] In ASYNC, with strong multiplicity detection, distinct $n$-Gathering is solved with $n (\geq 3)$ robots.
\end{enumerate}
\end{theorem}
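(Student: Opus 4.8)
The plan is to treat the two parts separately: part~(1) is a biconditional, so I would establish impossibility for even $n$ and exhibit an algorithm for odd $n$, while part~(2) is a single constructive claim. As a consistency check, both statements agree with the Rendezvous impossibility of Theorem~\ref{theorem:Rand_Impo}: the case $n=2$ is even, so part~(1) correctly predicts unsolvability, and part~(2) is asserted only for $n\geq 3$.

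For the ``only if'' direction of part~(1) I would use a symmetry/indistinguishability argument. Since $n$ is even, I would pick an initial configuration invariant under a point reflection (rotation by $\pi$) about a center $c$ occupied by no robot, pairing the robots into $n/2$ antipodal pairs. Because the scheduler is SSYNC, the adversary may activate the two robots of a pair in the same round; as robots are anonymous, run identical algorithms, and have no chirality, the two members of a pair always receive congruent snapshots and therefore compute symmetric destinations. Strong multiplicity detection does not help, since the mirrored snapshots carry identical multiplicity information. An induction on rounds then shows the configuration stays point-symmetric about $c$ with $c$ unoccupied, so the robots can never all coincide (the only meeting point compatible with the symmetry is $c$ itself), contradicting Gathering.

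For the ``if'' direction I would give an algorithm exploiting that odd $n$ forbids the balanced symmetric obstruction above. The idea is to drive the system to a configuration with a single point of maximum multiplicity and then send every robot there. The engine is multiplicity detection combined with a canonical, configuration-invariant target: since $n$ is odd, one can always elect a unique target in a rotation/reflection-invariant way (for instance a robot lying on the unique axis of a symmetric configuration, or $\mathit{CTR}$ of $\mathit{SEC}$ together with a deterministic tie-break), so that no adversarial activation can perpetuate a perfectly balanced stalemate. I would then argue that once a unique multiplicity is created it remains the unique maximum and that the remaining robots, moving toward it, make monotone progress, ensuring termination.

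Part~(2) reuses the target-election idea but must be hardened for ASYNC and for the ``distinct'' starting assumption. With $n\geq 3$ distinct positions I would first elect a canonical gathering point from the geometry (again $\mathit{SEC}$ plus a deterministic tie-break, distinctness ruling out the degenerate two-point configurations that block $n=2$), create a unique multiplicity there to serve as a persistent \emph{beacon}, and then funnel the rest toward it. I expect the main obstacle to lie precisely here: under ASYNC, robots may be seen while moving and may act on stale snapshots, so the delicate part is maintaining the invariant that exactly one multiplicity point exists and is never duplicated or abandoned, and keeping progress monotone despite out-of-date views. The impossibility half, by contrast, is a clean symmetry argument and should be the easy part.
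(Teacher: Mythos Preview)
The paper does not actually prove this theorem: it is stated in Section~\ref{sec:impo} as a background result imported verbatim from \cite{FPS}, with no accompanying argument. So there is no in-paper proof to compare your proposal against; the authors simply cite it and move on to their own contributions, which concern robots \emph{without} strong multiplicity detection but \emph{with} lights.

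That said, your impossibility sketch for the even-$n$ direction of part~(1) has a genuine gap. You propose to activate antipodal pairs together so that point-symmetry about the center $c$ is preserved, and then claim by induction that $c$ stays unoccupied. But symmetry preservation alone does not keep $c$ empty: if the algorithm instructs every robot to move to $\mathit{CTR}(\mathit{SEC})=c$, then activating a symmetric pair sends \emph{both} robots to $c$, and the invariant ``$c$ unoccupied'' fails immediately. The actual impossibility arguments in the literature do not rely on a single symmetric schedule; they exploit the SSYNC adversary's freedom to sometimes activate only part of a symmetric set (breaking the ``everyone to the center'' strategy by creating two indistinguishable multiplicities), combined with a bivalency-style case analysis showing that no deterministic rule can escape the resulting family of symmetric configurations. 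Your outline captures the right intuition (unbreakable symmetry for even $n$) but the specific mechanism you describe does not work as stated.

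Your constructive outlines for odd $n$ and for part~(2) are in the right spirit (elect a unique target via $\mathit{SEC}$, create a unique multiplicity, funnel the rest), and you correctly flag the ASYNC stale-snapshot issue as the hard part; but these remain sketches rather than proofs, and in any case the paper offers nothing here to benchmark them against.
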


Multiplicity detection is a strong assumption to solve Gathering.
In the following section, without strong multiplicity detection we consider Gathering algorithms for robots with lights such that the number of colors is the minimum and additional assumptions are the weakest.

\section{Gathering Algorithms in SSYNC}
\label{sec:GatheringAlgorithms}

In this section, we show the following four Gathering algorithms with lights in SSYNC.
 
\begin{enumerate}
\item[(a)] {\bf Algorithm~1}:(2, full, SSYNC, non-rigid),
\item[(b)] {\bf Algorithm~2}:(3, external, SSYNC, rigid),
\item[(c)] {\bf Algorithm~3}:(2, external, SSYNC, rigid; local-awareness), and
\item[(d)] {\bf Algorithm~4}+{\bf 5}+{\bf 6}):(2, internal, SSYNC, non-rigid; $2\delta$-distant).
\end{enumerate}

The idea of Gathering algorithms shown here is that the algorithm divides into two steps as follows:

\begin{enumerate}
\item[(1)] We make a configuration where all robots are located on a line segment (called \textit{onLDS}) from any initial configuration satisfying the corresponding conditions.
\item[(2)] We make a gathering algorithm from the configuration \textit{onLDS}.
\end{enumerate}

Step~(1) can be performed with \textit{ElectOneLDS} \cite{IKIW}, that reduces any configuration to one where there is the unique \textit{LDS} or Gathering is achieved. This algorithm needs chirality assumption but can be performed in SSYNC and without lights.
Thus, we will have to use lights to implement Step~(2)
from the configuration \textit{onLDS}. 
It is easily verified that \textit{ElectOneLDS} can be modified so that it can be performed in the following assumptions, (a) non-rigid moving, SSYNC and chirality, (b) non-rigid moving, CENT and non-chirality.

\begin{lemma}\label{lemma:ElectLDS}
With chirality,
an algorithm can be constructed to make \textit{onLDS} from any configuration in SSYNC and non-rigid moving without lights, unless Gathering is achieved. 
If CENT is assumed, this algorithm can be implemented without chirality.
\end{lemma}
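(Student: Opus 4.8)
The plan is to realize the two phases explicitly and to check that each survives the weakening to non-rigid movement and, for \textsc{CENT}, the removal of chirality. For the first phase I would invoke \textit{ElectOneLDS} of \cite{IKIW}: from an arbitrary configuration it either gathers all robots or drives the configuration to one possessing a unique $\mathit{LDS}$ $\overline{pq}$. The only things needing re-checking for the non-rigid variant are that every move prescribed by \textit{ElectOneLDS} is a monotone motion toward a target determined solely by the commonly observed snapshot; under SSYNC all activated robots see the same snapshot, so a move truncated by the adversary still decreases the relevant potential, and fairness together with the guaranteed minimum step $\delta$ forces each target to be reached after finitely many activations.

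The second phase is where the real geometric work lies, and it needs no chirality at all. Once the unique $\mathit{LDS}$ $\overline{pq}$ with $d=|\overline{pq}|$ is fixed, the endpoint robots at $p$ and $q$ stay still and every interior robot $r$ is told to move to the foot of the perpendicular from its position onto the line through $p$ and $q$. First I would verify that this foot always lies on the segment $\overline{pq}$ itself: placing $p$ at the origin and $q$ at $(d,0)$, every robot satisfies $|\overline{rp}|\le d$ and $|\overline{rq}|\le d$ because $\overline{pq}$ is an $\mathit{LDS}$, and these two inequalities force the abscissa of $r$ into $[0,d]$, so no robot is ever sent beyond an endpoint. The crucial invariant is that $\overline{pq}$ remains the unique longest segment throughout, even when several interior robots move simultaneously and are stopped early.

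I would establish this invariant by convexity. Each robot travels along the segment from its position to its foot, hence its abscissa is fixed in $[0,d]$ and its ordinate stays between $0$ and its initial value without crossing the axis. For any pair of robots the squared distance is a convex function of the two moving ordinates, so over the rectangle of reachable states its maximum is attained at a corner; the corners correspond to ``neither moved'', ``both reached the axis'', and ``exactly one reached the axis''. The first is at most $d^2$ by hypothesis, and the others are at most $d^2$ because the distance from any point of $\overline{pq}$ to any point of the lens $\{x:\ |\overline{xp}|\le d,\ |\overline{xq}|\le d\}$ is at most $d$, again by one-variable convexity of the squared distance along $\overline{pq}$. Therefore no pair ever exceeds distance $d$, the value $d$ keeps being realized by $\overline{pq}$, and since $p,q$ form the only pair at distance exactly $d$, it stays the unique $\mathit{LDS}$. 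Termination then follows as in phase one: each interior robot is pushed perpendicularly toward the line, reducing its distance to it by at least $\delta$ per activation until it lands on $\overline{pq}$ and never leaves, so by fairness all interior robots reach $\overline{pq}$ in finite time, yielding \textit{onLDS} (or Gathering, if $\overline{pq}$ has degenerated to a point).

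For the \textsc{CENT} claim I would drop chirality and let the scheduler break symmetry. The sole role of chirality in \textit{ElectOneLDS} is to let simultaneously active robots agree on an orientation in symmetric configurations; under \textsc{CENT} exactly one robot acts per round, so the single activated robot can always be instructed to make the one move that strictly reduces, say, the number of current $\mathit{LDS}$s, and because it is the only mover this move cannot be neutralized by a mirror-image move of a symmetric partner. Since phase two never used chirality, the two phases then go through verbatim without a common handedness. I expect the main obstacle to be precisely the invariance argument of phase two---guaranteeing that simultaneous, adversarially truncated projections in SSYNC never create a segment longer than $\overline{pq}$---which is why I isolate it as the convexity-on-the-lens computation above; the non-rigid termination bookkeeping and the \textsc{CENT} symmetry-breaking are then routine.
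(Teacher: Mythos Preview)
Your route differs from the paper's. The paper does not split the task into ``obtain a unique \textit{LDS}'' and then ``project onto it''; it describes the edge-contraction procedure of \textit{ElectOneLDS} in full and argues directly that each stage (first making the configuration \emph{contractible}, then shrinking shortest edges of the hull) still terminates under non-rigid moves because the non-moving robots pin the convex hull and the contracting edge remains a shortest edge throughout. The output is a convex hull that is itself a segment, so all robots are automatically on it and no separate projection step is needed. Your perpendicular-projection Phase~2 is a genuinely different device, and the convexity-on-the-lens computation is a clean way to control all pairwise distances under simultaneous, adversarially truncated moves.

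Two points need tightening. First, Phase~1 is where the lemma's real content lives, and you treat it as a black box: the assertion that every move of \textit{ElectOneLDS} is ``monotone toward a target determined by the common snapshot'' is precisely what has to be checked, since under non-rigid motion the identity of the shortest edge and hence the target vertex could in principle change once some robots have advanced partway. The paper's proof addresses exactly this (non-contracting robots stay put, so the contracting edge remains shortest-length), and you need a comparable argument rather than an appeal to a general principle. Second, your Phase~2 invariant actually requires \emph{strict} inequality for uniqueness of the \textit{LDS}, whereas your corner analysis only yields $\le d^2$. The repair is immediate---initial uniqueness gives strict inequality at the ``neither moved'' corner, the feet of interior robots lie in the open interval $(0,d)$, and the lens constraints are strict for every robot off $\{p,q\}$, so all four corner values for a non-$(p,q)$ pair are strictly below $d^2$---but it should be stated. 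Your CENT remark, that the lone active robot can resolve any symmetric choice via its own local frame, matches the paper's reasoning.
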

\begin{proof}
The algorithm \textit{ElectOneLDS} to make \textit{onLDS} works in rigid movement \cite{FPS}. We show a modified \textit{ElectOneLDS} to work in non-rigid movement. 

The modified \textit{ElectOneLDS} works as follows:
We call a configuration {\em contractible} if (1) its convex hull $CH$ is symmetric and every robot is located at a vertex of \textit{CH} or at the center of \textit{CH}, or (2) \textit{CH} is not symmetric and there are no robots inside \textit{CH}.
The algorithm makes a contractible configuration from the current configuration, unless it already contractible. It can be done by each active robot that
is not at a vertex of \textit{CH} moves to the center of \textit{CH} if it is symmetric,
or each active robot that 
is neither at a vertex of \textit{CH} nor on an edge of \textit{CH} (that is, inside of \textit{CH}) moves to a vertex of \textit{CH}. 
In making contractible configurations, since the convex hull is not changed,
the algorithm can work in non-rigid moving.
Note that robots may be located on edges of \textit{CH} in contractible and  non-symmetric configurations but can not be located on edges of \textit{CH} in contractible and symmetric ones.

If \textit{CH} becomes contractible, a unique \textit{LDS} is obtained by decreasing the number of edges of \textit{CH} or the diameter of \textit{CH} until the configuration has a unique \textit{LDS}.
If \textit{CH} is contractible and symmetric, any active robot moves to the center of \textit{CH}.
In this case, if \textit{CH} is not changed the number of robots at vertices of \textit{CH} decreases
via non-contractible configurations. Otherwise, the configuration becomes one of the following ones unless Gathering is attained;
 (1) it remains contractible and symmetric,  (2) the number of edges of the convex hull is decreased and it becomes contractible one via  non-contractible configurations. In the both cases, the diameter of \textit{CH} is eventually reduced.
 
 If \textit{CH} is contractible and asymmetric, the number of edges is decreased by some robots moving so to contract shortest-length edges in \textit{CH}. Each robot $r$ on a vertex of \textit{CH} checks the distance to the clockwise and  counter-clockwise neighboring robots at vertices of \textit{CH}.
If the edge on the left is a shortest-length edge and the robot is located at the leftmost,
this edge becomes contracting. The leftmost robot and the robots on this edge move to the 
the other vertex of the edge until the edge is contracting. Since all non-contracting robots do not move and the edge remains the shortest-length, this contraction can be done even in non-rigid moving, In this way, the number of edges of the convex hull is decreasing. 

In SSYNC, chirality is necessary for determining the leftmost robot of the shortest-length edge. However, since only one robot is activated in CENT, contracting the shortest-length edge can be done without chirality.  
\end{proof}

\subsection{Full-light}

\Newcodeline
\begin{algorithm}[h]
\caption{Full-Light-Garher($r_i$)}
\label{algo:FLG}
{\footnotesize
\begin{tabbing}
111 \= 11 \= 11 \= 11 \= 11 \= 11 \= 11 \= \kill
{\em Assumptions}: full-light, non-rigid, $2$ colors($A$ and $B$), set-view, SSYNC\crm
{\em Input}: configuration \textit{onLDS}, all robots have color $A$ \crm
\Cl \> {\bf case} ${\cal L}({\cal SS}_i)$  {\bf of } \crm
\Cl \> $\{ A \}$: \crm
\Cl \> \> {\bf if} $|{\cal P}_A({\cal SS}_i)| = 1$ {\bf then} $des_i \leftarrow p_i$ // gathered!\crm
\Cl \> \> {\bf else if} $|{\cal P}_A({\cal SS}_i)|  = 2$ {\bf then}  $\ell_i \leftarrow B$; $des_i \leftarrow (p_n + p_f)/2$\crm
\Cl \> \> {\bf else} //$|{\cal P}_A({\cal SS}_i)|  \geq 3$\crm
\Cl \> \> \> {\bf if} $(p_i = p_n)$ {\bf then} $des_i \leftarrow p_i$ //I am at either endpoint\crm 
\Cl \> \> \> {\bf else} $des_i \leftarrow p_n$\crm
\Cl \> $\{B \}$: \crm
\Cl \>\> {\bf if} $(p_i = p_n)$ {\bf then} $\ell_i \leftarrow A$\crm
\Cl \> \>  $des_i \leftarrow p_i$ // stay\crm
\Cl \> $\{A, B \}$: \crm
\Cl \> \> {\bf if} $\ell_i = A$ {\bf then}\crm
\Cl \> \> \>{\bf if} $|{\cal P}_A({\cal SS}_i)|  = 1$ {\bf then} $des_i \leftarrow p_i$ // stay\crm
\Cl \> \>  \>{\bf else if} ${\cal P}_A({\cal SS}_i) = \{ p_n,p_f\}$ {\bf and}  ${\cal P}_B({\cal SS}_i)  = \{ \frac{p_n+p_f}{2} \}$ {\bf then}\crm
\Cl \> \> \>\> $\ell_i \leftarrow B$ \crm
\Cl \> \> \> \>$des_i \leftarrow  (p_n + p_f)/2$\crm
\Cl \> \> {\bf else} // $\ell_i = B$\crm 
\Cl \> \> \>{\bf if} $|{\cal P}_A({\cal SS}_i)|  = 1$ {\bf then} $des_i \leftarrow p (\in {\cal P}_A({\cal SS}_i)) $\crm
\Cl \> \> \>{\bf else} $des_i \leftarrow (p_n + p_f)/2$ //$|{\cal P}_A({\cal SS}_i)|  = 2$ \crm 
\Cl \> {\bf endcase} 
\end{tabbing}
}
\end{algorithm}

Gathering algorithm with full-light uses two colors of full-light and works in non-rigid movement shown in  {\bf Algorithm~\ref{algo:FLG}}.
It reduces any configuration of \textit{onLDS} to a gathering configuration. 
In the algorithm for robot $r_i$, endpoints of the \textit{LDS} are denoted by $p_n$ and $p_f$, where $p_n$ is the nearest endpoint from $p_i$ and $p_f$ is the farthest one from $p_i$. Note that if $p_i$ is either of endpoints, $p_i=p_n$ and $p_f$ is the other endpoint.
Since full-light is assumed, robot $r_i$ can use %
colors of other robots in $\mathcal{SS}_i$  and its own light $\ell_i$ in the algorithm.
For snapshot $\mathcal{SS}_i$ and color $A$ and $B$,  $\mathcal{P}_A(\mathcal{SS}_i)$ and $\mathcal{P}_B(\mathcal{SS}_i)$ denote sets of positions on which robots with color $A$ and $B$ are located, respectively.

\begin{figure*}
  \begin{center}
    \includegraphics[scale=0.8]{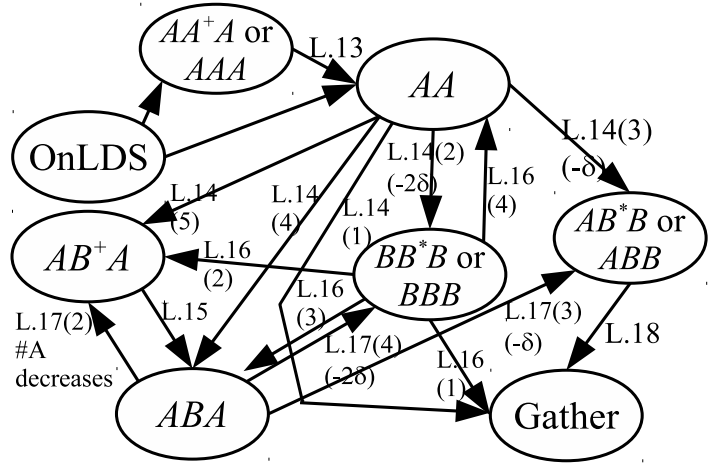}
    \caption{Transition diagram for Full-Light-Gather.}
    \label{fig:Full-light}
  \end{center}
\end{figure*}

Initially all robots are located  on the unique \textit{LDS} and have color A.
Figure~\ref{fig:Full-light} shows the transition diagram of
{\bf Algorithm~\ref{algo:FLG}}, where nodes denote configurations and a directed edge denotes transition from a configuration to a configuration. In Figure~\ref{fig:Full-light}, labels of directed edges have the following meanings.
L.\# means that this transition is shown by Lemma~\#.
``$-\delta$'' (``$2\delta$'') means when the transition occurs, the distance between the endpoints reduces at least $-\delta (2\delta)$ and ``\#A decreases'' means the numbers of robots with color A at the endpoints decrease. 
Each configuration is denoted by a regular-expression-like sequence of colors robots have from one endpoint to another. 
Formally we define {\em color-configurations} as follows. 
Let $C(t)$ be a configuration at time $t$,  $p$ and $q$ be the endpoints of the \textit{LDS}.
Configuration $C(t)$ has a {\em color-configuration} 
\begin{enumerate}
\item[(1)] $\alpha  \beta$, 
if all robots at $p$ have color $\alpha$, all robots at $q$ have color $\beta$ ($\alpha, \beta \in \{ A, B\}$) and
there are no robots inside the \textit{LDS},
\item[(2)] $\alpha \gamma \beta$, 
if all robots at $p$ have color $\alpha$, all robots at $q$ have color $\beta$ ($\alpha, \beta \in \{ A, B\}$), 
all robots at the mid-point of the \textit{LDS} have color $\gamma$ and there are no robots except on the three locations, and
\item[(3)] $\alpha \gamma^+ \beta$, 
if all robots at $p$ have color $\alpha$, all robots at $q$ have color $\beta$ ($\alpha, \beta \in \{ A, B\}$) and 
there is at least one location except the midpoint of the \textit{LDS} inside the \textit{LDS} where all robots have color $\gamma$.
\end{enumerate}
Note that $\alpha \gamma^+ \beta$ contains configurations that the midpoint of the \textit{LDS} may be occupied by some robot(s) with color $\gamma$, and 
$\alpha \gamma \beta$ and $\alpha \gamma^+ \beta$ are exclusive.
If either (1) or (3) is satisfied, we denote $\alpha \gamma^* \beta$.
Let $dis(C(t))$ denote the length of the \textit{LDS} in the configuration $C(t)$.

The outline of behaviour of {Algorithm~\ref{algo:FLG}} is explained as follows.
Suppose that all robots become active (FSYNC).
From any initial color-configuration of \textit{onLDS} ($AA^*A$ or $AAA$), robots with $A$ located not at endpoints move to endpoints and robots with $A$ located at endpoints stay({\bf lines}~5-7) it becomes the color-configuration $AA$.  
When the color-configuration is $AA$, each robot changes its color to $B$ and move to the midpoint({\bf line}~4).  
If they can reach to the destination, the gathering is achieved.
However, since we assume non-rigid movement, some of them may stop before reaching the midpoint and its configuration becomes 
$BB^*B$ or $BBB$, where the former is $BB$ if the locations all robots stop are only two, otherwise $BB^+B$ and 
the latter is possible when there is a robot moving to the midpoint.
In the both cases the length of its \textit{LDS} is decreased by at least $2\delta$ (denoted by $-2\delta$ on the directed edge from $AA$ to $BB^*B$ in Fig.~\ref{fig:Full-light}).

From the color-configuration $BB^*B$ or $BBB$, 
the algorithm changes colors of robots at endpoints to $A$ ({\bf lines}~8-10) and can change the color-configuration directly 
to $AB^+A$ (from $BB^+B$), $AA$ (from $BB$), or $ABA$ (from $BBB$).
If the color-configuration $AB^+A$ occurs,
robots with $B$ move to the midpoint ({\bf line}~19),  and robots with $A$ stay until all robots with $B$ move to the midpoint({\bf lines}~14-16).
In this case the configuration also become $ABA$ and robots with $A$ change the color to $B$ and move to the midpoint ({\bf lines}~14-16) and 
Gathering  is achieved after repeating the transitions from $ABA$ to $BB^*B$, from $BB^*B$ to $AB^+A$, and from $AB^+A$ to $ABA$.
If the color-configuration $AA$ occurs, repeating the above transitions Gathering attained.

However, since all robots do not become active at every round in general,
the behaviors are complicated and all color-configurations shown in Figure~\ref{fig:Full-light} can occur.

Since any cycle in the graph can reduce the length of \textit{LDS} (denoted by $-\delta$ or $-2\delta$ in Fig.~\ref{fig:Full-light}) or the number of robots with $A$ and located at endpoints (denoted by $\#A$ decreases in Figure~\ref{fig:Full-light}), we can show that {\bf Algorithm~\ref{algo:FLG}} solves Gathering from any initial configuration of \textit{onLDS}.

\begin{lemma} \label{AAA-AA}
If $C(t)$ is a configuration at time $t$ with color-configuration $AA^*A$ or $AAA$, there is a time $t' (\geq t)$ such that
$C(t) \rightarrow^* C(t')$, $C(t')$ has a color-configuration $AA$, and $dis(C(t'))=dis(C(t))$. 
\end{lemma}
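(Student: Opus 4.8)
The plan is to track \textbf{Algorithm~\ref{algo:FLG}} while every robot still carries color $A$ and to show that the only thing that can happen is that the interior robots drain onto the two endpoints of the \textit{LDS}, without the length of the \textit{LDS} ever changing. First I would note that the hypothesis ``color-configuration $AA^*A$ or $AAA$'' covers exactly the on-\textit{LDS} configurations in which every robot has color $A$: the case $AA$ (type~1), the case $AA^+A$ (type~3), and the case $AAA$ (type~2). In all of them every active robot sees only the color $A$ and therefore executes the branch $\mathcal{L}(\mathcal{SS}_i)=\{A\}$. If the configuration is already $AA$ there is nothing to do: take $t'=t$, since $\rightarrow^*$ is reflexive and the length is unchanged. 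Otherwise there are at least three occupied positions, so every active robot finds $|\mathcal{P}_A(\mathcal{SS}_i)|\ge 3$ and runs \textbf{lines}~5--7: a robot at an endpoint has $p_i=p_n$ and stays put, while a robot strictly inside the \textit{LDS} moves toward its nearest endpoint $p_n$ \emph{without changing its color}. In particular \textbf{line}~4, the only instruction that recolors a robot to $B$, fires only when the number of occupied positions is exactly two, i.e.\ precisely when the configuration is $AA$; hence the execution stays inside the all-$A$ regime until it reaches the target.

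Next I would isolate the invariant that yields the length claim. Writing $p,q$ for the endpoints of the \textit{LDS} of $C(t)$, every robot lies on the segment $\overline{pq}$ initially, and this is preserved: endpoint robots never move, and interior robots move along $\overline{pq}$ toward an endpoint (possibly stopping short under non-rigid movement, but always remaining inside $\overline{pq}$). Moreover both $p$ and $q$ stay occupied, since the robots already sitting at them never move. Because among collinear points the longest segment is the one joining the two extreme occupied points, the \textit{LDS} of every intermediate configuration is exactly $\overline{pq}$, so $dis$ is constant along the whole execution; this will give $dis(C(t'))=dis(C(t))$ once a suitable $t'$ is produced.

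Finally I would establish termination, which is also where I expect the only real care to be needed. For a fixed interior robot, once it starts moving its target $p_n$ becomes permanent: moving toward $p_n$ strictly decreases its distance to $p_n$ and increases its distance to the other endpoint, so $p_n$ remains its nearest endpoint at every later activation. By the non-rigid guarantee it advances at least $\delta$ toward $p_n$ each time it is activated and reaches $p_n$ once it is within $\delta$, hence it arrives at an endpoint after finitely many of its own activations, which by fairness occur in finite time; since there are finitely many robots there is a round $t'$ by which all interior robots have reached $\{p,q\}$, and then $C(t')$ is an all-$A$ configuration on exactly the two endpoints, i.e.\ color-configuration $AA$ with $dis(C(t'))=dis(C(t))$ by the invariant above. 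The delicate point is that, in SSYNC with non-rigid movement, several robots sharing one interior position can be stopped at different distances and thus split into distinct positions, so the number of interior positions need not decrease monotonically. This forces the termination argument to be carried out per robot (each individual robot approaches a fixed endpoint monotonically and reaches it in finitely many activations) rather than by counting occupied positions, and one must also verify—as done in the first paragraph—that \textbf{line}~4 cannot fire during the draining phase.
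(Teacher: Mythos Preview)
Your proposal is correct and follows the same approach as the paper's proof: both argue that while at least three positions remain, the only applicable code is \textbf{lines}~5--7, so endpoint robots stay and interior robots migrate toward their nearest endpoint with no color change, until the configuration is $AA$. The paper's proof is a two-sentence sketch; you simply flesh out the termination argument under non-rigid movement (per-robot monotone approach to a fixed endpoint, plus fairness) and make the $dis$-invariant explicit, which the paper leaves implicit in ``robots at the endpoint stay there.''
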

\begin{proof}
If the color-configuration is $AA$, the lemma holds as $t'=t$.
Otherwise, since robots located inside the \textit{LDS} move to the nearest endpoint and robots at the endpoint stay there as long as there is a robot
located inside the \textit{LDS} ({\bf lines}~5-7), the color-configuration becomes $AA$.
\end{proof}

\begin{lemma}
Let $C(t)$ be a configuration at time $t$ with color-configuration $AA$.
If we consider time $t+1$ such that $C(t) \rightarrow C(t+1)$,  then 
\begin{enumerate}
\item[(1)] $C(t+1)$ is a gathering configuration,
\item[(2)] $C(t+1)$ has a color-configuration $BB^*B$ or $BBB$, and $dis(C(t+1)) \leq dis(C(t)) - 2 \delta$,
\item[(3)] $C(t+1)$ has a color-configuration $AB^*B$ or $ABB$ (or $BB^*A$ or $BBA$), and $dis(C(t+1)) \leq dis(C(t)) - \delta$,
\item[(4)] $C(t+1)$ has a color-configuration $ABA$, and $dis(C(t+1))=dis(C(t))$, or
\item[(5)] $C(t+1)$ has a color-configuration $AB^+A$, and $dis(C(t+1))=dis(C(t))$.
\end{enumerate}
\end{lemma}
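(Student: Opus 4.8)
The plan is to first record the single move that color-configuration $AA$ forces, and then to carry out a three-way case split on the colours of the two endpoints of the resulting \textit{LDS}.

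In $AA$ every robot sits at one of the two endpoints $p,q$ of the unique \textit{LDS}, all coloured $A$; write $D=dis(C(t))$ and let $m=(p+q)/2$. I would first argue that every \emph{active} robot $r_i$ executes line~4 of Algorithm~\ref{algo:FLG}: since full-light lets $r_i$ see its own $A$ together with the $A$'s at the opposite endpoint, $\mathcal{L}(\mathcal{SS}_i)=\{A\}$ and $|\mathcal{P}_A(\mathcal{SS}_i)|=2$, so $r_i$ recolours itself $B$ and aims for $m$, while inactive robots keep $A$ and stay. (If no robot is activated then $C(t+1)=C(t)$, which I exclude as a non-step.) The key structural observation is that \emph{no robot ever leaves the segment $\overline{pq}$}: movers travel along $\overline{pq}$ toward $m$ and, under non-rigid motion, advance by at least $\min(\delta,D/2)$, reaching $m$ exactly when they are not stopped or when $D/2\le\delta$. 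Hence all robots stay collinear, so the \textit{LDS} of $C(t+1)$ is uniquely the segment $\overline{\ell r}$ joining the leftmost and rightmost occupied points $\ell,r$, and $dis(C(t+1))=|\overline{\ell r}|$.

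Next I would partition the robots into the stayers $S_p,S_q$ (colour $A$, still at $p,q$) and the movers $A_p,A_q$ (now colour $B$); note $A_p\cup A_q\neq\varnothing$, and since both endpoints were occupied, $S_p=\varnothing$ forces $A_p\neq\varnothing$ (similarly at $q$). The leftmost point $\ell$ equals $p$ with colour $A$ if $S_p\neq\varnothing$, and otherwise is the least-advanced robot of $A_p$ with colour $B$; symmetrically for $r$. The split on $(\mathrm{colour}(\ell),\mathrm{colour}(r))$ then yields the five cases. If both are $A$ (i.e.\ $S_p,S_q\neq\varnothing$) then $\ell=p,r=q$, so $dis$ is unchanged, the endpoints are pure $A$, and the nonempty interior is pure $B$: all movers at $m$ give $ABA$ (case~(4)), otherwise some mover occupies a non-midpoint interior location giving $AB^{+}A$ (case~(5)). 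If exactly one endpoint is $A$, say $S_p\neq\varnothing=S_q$, then $\ell=p$ ($A$) and $r$ is a $B$-mover with $|\overline{qr}|\ge\min(\delta,D/2)$, all remaining movers being interior $B$'s, which reads as $AB^{*}B$ (or $ABB$ when a single interior robot lands on the new midpoint); this is case~(3), and symmetrically $BB^{*}A/BBA$. If both endpoints are $B$ ($S_p=S_q=\varnothing$, every robot moved) then every robot is $B$: either they all coincide, which is gathering (case~(1)), or $\ell\neq r$ and the configuration is $BB^{*}B$ or $BBB$ (case~(2)).

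The remaining work is purely distance bookkeeping, and this is where I expect the only real friction. In the both-$B$ non-gathering case each of $\ell,r$ sits at distance $\ge\min(\delta,D/2)$ from its origin, so $dis(C(t+1))\le D-2\min(\delta,D/2)$; when $D\ge 2\delta$ this is the claimed $D-2\delta$, while if $D<2\delta$ every mover reaches $m$ and the robots coincide, so case~(2) can only occur with $D\ge2\delta$ and the $-2\delta$ bound is exactly met. The one-$A$-one-$B$ case is the delicate one: the guaranteed drop is $|\overline{qr}|\ge\min(\delta,D/2)$, which gives the stated $dis(C(t+1))\le D-\delta$ precisely in the non-degenerate regime $D\ge2\delta$; for $D<2\delta$ the lone survivor at $q$ still reaches $m$, so $dis$ drops to $D/2$, a strict decrease but not a full $\delta$. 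I would therefore either invoke the $-\delta$ bound under the standing assumption that $\delta$ is small relative to the current \textit{LDS} (the small-\textit{LDS} endgame being absorbed into gathering), or sharpen case~(3) to $dis(C(t+1))\le\max(D-\delta,\,D/2)$. Everything else is a direct reading of the landing positions against the color-configuration definitions.
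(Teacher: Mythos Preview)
Your approach is essentially the paper's: the paper also performs a case split on which robots are activated at each endpoint (all/all, all/some-not, some-not/some-not with or without reaching the midpoint), which is exactly your split on the colours of the two new endpoints, since an endpoint keeps colour $A$ iff some robot there was inactive. The structural arguments (everybody stays on $\overline{pq}$, movers recolour to $B$ and head to $m$, etc.) coincide.

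Your worry about the $-\delta$ bound in case~(3) is well-founded and in fact catches an imprecision that the paper's own proof shares. The paper simply writes ``all active robots move at least $\delta$'' and deduces $dis(C(t+1))\le dis(C(t))-\delta$, without addressing the regime $dis(C(t))<2\delta$ where the midpoint is within $\delta$ and the mover lands exactly at $m$, giving $dis(C(t+1))=dis(C(t))/2>dis(C(t))-\delta$. So the literal inequality in item~(3) (and analogously~(2)) can fail for very short segments. This does not damage the global argument: in the termination proof the only use of these bounds is that every cycle in the transition diagram strictly shortens the \textit{LDS} (or decreases the number of $A$-robots at endpoints), and your sharpened bound $dis(C(t+1))\le\max(dis(C(t))-\delta,\,dis(C(t))/2)$ still gives a strict decrease that drives $dis$ to $0$ in finitely many steps. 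You should therefore not try to salvage the exact $-\delta$ bound; either note the imprecision and proceed with the weaker but sufficient strict-decrease conclusion, or state the sharpened bound as you suggest.
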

\begin{proof}
There are five cases according to the activeness of robots.
\begin{enumerate}
\item[(1)] If all robots become active and they can reach the destination (the midpoint of the \textit{LDS}), the gathering is attained.
\item[(2)] If all robots become active but they can not reach the destination, they change their color to $B$ and move at least $\delta$.
Thus the color-configuration becomes $BB^+B$, $BB$ or $BBB$ as shown in the example above.
In any case, since all robots move at least $\delta$, $dis(C(t+1)) \leq dis(C(t)) - 2 \delta$.
\item[(3)] If all robots at one endpoint become active but some robots at the other endpoint are inactive,  all active robots change their color to $B$ and all inactive robots
do not change their color and stay there. Since the color of all robots at the endpoint inactive robots are located is $A$ and the color of the other robots become $B$ they move at least $\delta$, the color-configuration becomes $AB^*B$ $ABB$ (or $BB^*A$ or $BBA$). In this case since all active robots move at least $\delta$,
$dis(C(t+1)) \leq dis(C(t)) - \delta$.
\item[(4)] If there exist inactive robots at the both endpoints and active robots can reach the destination, the color-configuration becomes $ABA$ and $dis(C(t+1))
=dis(C(t))$.
\item[(5)] If there exist inactive robots at the both endpoints and some active robots can not reach the destination, the color-configuration becomes $AB^+A$ and $dis(C(t+1))=dis(C(t))$.
\end{enumerate}
\end{proof}

\begin{lemma}
If $C(t)$ is a configuration at time $t$ with color-configuration $AB^+A$, there is a time $t' (\geq t)$ such that
$C(t) \rightarrow^* C(t')$, $C(t')$ has a color-configuration $ABA$, and $dis(C(t'))=dis(C(t))$. 
\end{lemma}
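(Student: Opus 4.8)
The plan is to read off, from {\bf Algorithm~\ref{algo:FLG}}, exactly what each robot does on a configuration with color-configuration $AB^+A$, and then show the algorithm forces every interior robot onto the midpoint of the \textit{LDS} while leaving the two endpoints frozen. First I would classify the moves. An $A$-robot sitting at an endpoint sees $\mathcal{L}(\mathcal{SS}_i)=\{A,B\}$ and has $\ell_i=A$, so it enters the branch of {\bf lines}~12--16; since both endpoints carry $A$ we have $|\mathcal{P}_A(\mathcal{SS}_i)|=2$, and since $AB^+A$ contains a $B$-location other than the midpoint we have $\mathcal{P}_B(\mathcal{SS}_i)\neq\{(p_n+p_f)/2\}$. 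Hence neither the test of {\bf line}~13 nor that of {\bf line}~14 fires, so the endpoint robot neither recolours nor moves (its default is to stay at $p_i$). An interior $B$-robot also sees $\{A,B\}$ but has $\ell_i=B$, so it falls to {\bf lines}~17--19; again $|\mathcal{P}_A(\mathcal{SS}_i)|=2$, so it keeps colour $B$ and sets $des_i\leftarrow(p_n+p_f)/2$, the midpoint of the \textit{LDS}. Thus in every round the endpoints are immobile $A$-robots and every activated interior robot heads for one and the same midpoint.

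Because the two endpoints $p,q$ never move, the segment $\overline{pq}$ they determine is fixed; all remaining robots lie on $\overline{pq}$ and move only toward its midpoint, so they never leave $\overline{pq}$. Consequently $\overline{pq}$ stays the unique longest distance segment, the target $(p+q)/2$ is a fixed point in the global frame, and $dis(C(t))$ is constant throughout this phase. In particular the color-configuration stays of the form $AB^*A$: the endpoints keep colour $A$, the number of $B$-robots is conserved, and no robot ever leaves the interior.

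The only real content is to show that all interior $B$-robots reach the midpoint in finite time, and this is where non-rigidity is the delicate point, since an adversary may stop a moving robot short of its destination. I would argue convergence as follows: the target midpoint is fixed, so for a fixed $B$-robot its distance to the midpoint is non-increasing, and by the minimum-movement guarantee it decreases by at least $\delta$ on every activation at which the robot is still more than $\delta$ away (and the robot arrives the first time it is within $\delta$). Since that distance is bounded by $dis(C(t))/2$ and the scheduler is fair, each $B$-robot is activated often enough to reach the midpoint after finitely many rounds; as there are finitely many $B$-robots, there is a round at which the last non-midpoint $B$-robot arrives. This convergence step, i.e.\ ruling out that non-rigid stalling keeps some $B$-robot forever off the midpoint, is the one place needing care; everything else is a direct reading of the pseudocode.

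Finally I would check that we land exactly on $ABA$. At the first round $t'$ where the last non-midpoint $B$-robot moves, the snapshot taken at its start is still $AB^+A$ (that robot has not yet moved), so any active endpoint robot again only stays by the analysis of the first paragraph. After the moves of this round all $B$-robots occupy the midpoint, both endpoints still carry $A$, and no robot is anywhere else inside $\overline{pq}$; this is precisely the color-configuration $ABA$. Since the endpoints never moved, $dis(C(t'))=dis(C(t))$, which completes the claim.
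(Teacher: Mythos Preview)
Your proof is correct and follows the same line as the paper's own argument: $A$-robots at the endpoints fall through {\bf lines}~12--16 without moving (since $|\mathcal{P}_A|=2$ but $\mathcal{P}_B\neq\{(p_n+p_f)/2\}$), while $B$-robots in the interior execute {\bf line}~19 and head for the midpoint, so the \textit{LDS} is frozen and all $B$-robots eventually collapse onto the midpoint. The paper's proof is a two-sentence sketch citing only the relevant line numbers; your version spells out the case analysis, the invariance of the \textit{LDS}, and the finite-time convergence under non-rigid movement and fairness, all of which the paper leaves implicit.
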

\begin{proof}
In the configuration with color-configuration $AB^+A$,
robots with $A$ do not move until the color-configuration becomes  $ABA$ ({\bf lines}~12-16) 
robots with $B$ move to the midpoint of the \textit{LDS}. Thus it becomes $ABA$ and $dis(C(t'))=dis(C(t))$.
\end{proof}

\begin{lemma}
If $C(t)$ is a configuration at time $t$ with color-configuration $BB^*B$ or $BBB$, there is a time $t' (> t)$ such that
$C(t) \rightarrow^* C(t')$ and one of the followings holds,
\begin{enumerate}
\item[(1)] $C(t')$ is a gathering configuration,
\item[(2)] 
$C(t')$ has a color-configuration $AB^+A$, and $dis(C(t')) = dis(C(t))$,
\item[(3)] 
$C(t')$ has a color-configuration $ABA$, and $dis(C(t'))=dis(C(t))$, or
\item[(4)] 
$C(t')$ has a color-configuration $AA$, and $dis(C(t'))=dis(C(t))$.
\end{enumerate}
\end{lemma}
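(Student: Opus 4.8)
The plan is to follow how color $A$ spreads inward from the two endpoints of the current \textit{LDS} and to show that the execution necessarily enters one of two regimes, according to whether one or both endpoints are the first to acquire color $A$.

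First I would record what the algorithm does on the initial configuration. In $BB^*B$ or $BBB$ every robot has color $B$, so every active robot reads $\mathcal{L}(\mathcal{SS}_i)=\{B\}$ and executes lines~8--10: a robot sitting at an endpoint (so $p_i=p_n$) recolors itself $A$ and stays put, and every other robot keeps $B$ and stays put. Hence no robot moves and no interior robot is ever recolored, so while no endpoint robot is activated the configuration, its endpoints $p,q$, and $dis$ are all unchanged. By fairness some endpoint robot is eventually activated; let $t_0>t$ be the first such round. Its snapshot is still all-$B$, so the sole effect of round $t_0$ is that some endpoint robots turn $A$ with no motion at all; thus $p$, $q$, and $dis$ are still unchanged just after $t_0$. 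Since recoloring to $A$ happens only at line~9, which requires $p_i=p_n$, the set $\mathcal{P}_A$ consists of endpoints only, and I split on whether $|\mathcal{P}_A|=1$ or $|\mathcal{P}_A|=2$ right after $t_0$.

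In the first regime $|\mathcal{P}_A|=1$: one endpoint, say $p$, is (partly) colored $A$ while the other endpoint and all interior points are pure $B$. I would argue this regime is absorbing and ends in Gathering. No robot now reads $\{B\}$ (every snapshot contains both colors), so no further recoloring occurs and $\mathcal{P}_A=\{p\}$ is invariant; the $A$-robots at $p$ stay by line~13, while every $B$-robot executes line~18 and moves toward the unique $A$-point $p$. By fairness and the minimum move $\delta$, each $B$-robot reaches $p$ after finitely many activations and then stays there, so at some $C(t')$ all robots occupy $p$, giving case~(1). In the second regime $|\mathcal{P}_A|=2$: both endpoints carry color $A$. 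Here both endpoints are permanently anchored, because an $A$-robot at an endpoint takes neither line~13 (since $|\mathcal{P}_A|=2$) nor lines~15--16 (whose guard needs every $B$-robot to sit exactly at the midpoint, which fails until $ABA$ is actually reached), and therefore stays; thus $p,q$ never move and $dis$ is preserved. Every $B$-robot instead executes line~19 and moves toward the midpoint, so by fairness it eventually leaves its endpoint and, since no $B$-robot ever moves outward, never returns. Taking $t'$ to be the first round at which both endpoints are free of $B$-robots, the configuration is $AA$, $ABA$, or $AB^+A$ according to where the $B$-robots have stopped, and in each case $dis(C(t'))=dis(C(t))$, giving cases~(4),(3),(2).

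The main obstacle is the bookkeeping imposed by partial activations, which create mixed-color endpoints that the color-configuration vocabulary does not name. The two invariants carrying the proof -- that a lone $A$-endpoint can never be joined by a second one, so the first regime funnels monotonically to a single point, and that a matched pair of $A$-endpoints stays fixed, so the second regime preserves $dis$ -- have to be checked against every activation pattern and every non-rigid stop, and reading off which of $AA$, $ABA$, $AB^+A$ has been reached requires care about whether the midpoint and the other interior positions are occupied.
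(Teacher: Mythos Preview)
Your two-regime split according to $|\mathcal{P}_A|$ after the first endpoint activation mirrors the paper's cases (a) and (b), and Regime~1 reaches the correct conclusion. But your Regime~2 analysis overlooks the local-unawareness assumption under which Algorithm~1 operates: a robot does not see robots co-located with it, so its private view of $\mathcal{L}(\mathcal{SS}_i)$ and $\mathcal{P}_A(\mathcal{SS}_i)$ can differ from the global picture. Take the starting color-configuration $BB$ (two points $p,q$, no interior) and let round $t_0$ activate all robots at $q$ but only some at $p$; after $t_0$ the endpoint $q$ is pure $A$ while $p$ is mixed $\{A,B\}$. Now an $A$-robot at $p$ sees only itself at $p$ and only $A$ at $q$, so it reads $\mathcal{L}=\{A\}$, not $\{A,B\}$, and therefore executes line~4, recoloring to $B$ and heading for the midpoint. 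The endpoint $p$ is thus \emph{not} permanently anchored, and if this was the sole $A$-robot at $p$ the configuration drops to $BB^*A$, which gathers (case~(1)) along a path your argument does not cover. Likewise, a $B$-robot sitting at a mixed endpoint $p$ cannot see the co-located $A$'s, so in its snapshot $\mathcal{P}_A=\{q\}$ with $|\mathcal{P}_A|=1$, and it executes line~18 (destination $q$), not line~19 (midpoint) as you claim.

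Both invariants you state for Regime~2 can therefore fail. The paper's own proof is terse here and arguably glides past the same subtlety, but it does not commit to the specific invariants you assert; a complete argument must track these mixed-endpoint snapshots explicitly and show that even when an $A$-anchor is lost the execution still funnels into one of the four listed outcomes.
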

\begin{proof}
Assume that all robots become active at time $t$.
If the color-configuration is $BB$, then the color-configuration becomes $AA$ at time $t'=t+1$  (case (4)).
If the color-configuration is $BB^+B$ or $BBB$, then the color-configuration becomes $AB^+A$ or $ABA$ at time $t'=t+1$, respectively (case (2) or (3)). These transitions occur from {\bf lines}~8-10 of the algorithm.

Otherwise, there are two cases, (a) some robots at only one endpoint become active, and (b) robots at the both endpoints become active. Note that active robots not located on the endpoints do nothing in these configurations.
\begin{enumerate}
\item[(a)] In this case, active robots change their color into $A$ at time $t+1$ and since $|\mathcal{P}_A(C(t+1))|  = 1$,
robots with $A$ stay at the positions ({\bf lines}~12-13)  and robots with $B$ move to the position at robots with $A$ ({\bf lines}~17-18) 
and the gathering is attained (case (1)).
\item[(b)] In this case, active robots at the both endpoints change their colors into $A$ at time $t+1$ and since $|{\cal P}_A(C(t+1))|  = 2$ and
robots with $A$ do not move and robots with $B$ move to the midpoints of the \textit{LDS}, the color-configuration eventually becomes $ABA$ (case (3)).
\end{enumerate}
\end{proof}

\begin{lemma}
Let $C(t)$ be a configuration at time $t$ with color-configuration $ABA$. If we consider time $t'$ such that $C(t) \rightarrow^* C(t')$, then
\begin{enumerate}
\item[(1)] $C(t')$ is a gathering configuration,
\item[(2)] $C(t')$ has a color-configuration $AB^*A$, and $dis(C(t'))=dis(C(t))$ and the number of robots
at the endpoints and with $A$ is decreased,
\item[(3)] $C(t')$ has a color-configuration $AB^*B$ (or $BB^*A$), and $dis(C(t')) \leq dis(C(t)) - \delta$, or 
\item[(4)] $C(t')$ has a color-configuration $BB^*B$ or $BBB$, and $dis(C(t')) \leq dis(C(t))-2\delta$.
\end{enumerate}
\end{lemma}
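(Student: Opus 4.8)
The plan is to track the configuration over the rounds after $t$ and resolve it by a case analysis on the activation pattern at the two endpoints, invoking fairness to reach a decisive round. First I would record how the algorithm acts on a configuration of type $ABA$: every robot sees $\mathcal{L}(\mathcal{SS}_i)=\{A,B\}$ with $\mathcal{P}_A(\mathcal{SS}_i)$ equal to the two endpoints $p,q$ and $\mathcal{P}_B(\mathcal{SS}_i)$ equal to the single midpoint, so an activated robot of color $A$ falls into lines~14--16 (it recolors to $B$ and moves toward the midpoint) while an activated robot of color $B$ falls into line~19 (it recomputes the midpoint, which is its own position, and stays). Consequently, while the configuration stays of type $ABA$, activating only midpoint robots changes nothing, so by fairness there is a first round $t''\ge t$ at which an endpoint robot of color $A$ is activated, with $C(t)\rightarrow^* C(t'')$ and $C(t'')$ still $ABA$ of the same \textit{LDS} length. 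I would also record the invariants that all robots stay on the original line (they only move toward the midpoint or stay), so collinearity is preserved and the \textit{LDS} is always the segment between the two extreme robots, and that colors change only from $A$ to $B$ in this regime, so the number of endpoint robots with color $A$ is non-increasing.

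Next I would carry out the case analysis at the decisive round, writing $P,Q$ for the $A$-robots at $p,q$ and $P'\subseteq P$, $Q'\subseteq Q$ for the activated ones, with $P'\cup Q'\neq\emptyset$; these recolor to $B$ and move toward the midpoint. The cases are: (a) $P'=P$ and $Q'=Q$: either every mover reaches the midpoint and all robots coincide (outcome~(1)), or the configuration becomes all-$B$, i.e. $BB^*B$ or $BBB$; (b) exactly one endpoint is fully activated (say $P'=P$, $Q'\subsetneq Q$): that endpoint is vacated while $q$ keeps color $A$, giving $BB^*A$, symmetrically $AB^*B$; (c) some endpoint is only partially activated and hence keeps color $A$, so both $p,q$ are retained and the \textit{LDS} length is unchanged while the endpoint $A$-count strictly drops --- if some mover fails to reach the midpoint an interior $B$-location appears and the configuration is $AB^+A$, a case of $AB^*A$ (outcome~(2)), whereas if every mover reaches the midpoint the configuration is again $ABA$ with strictly fewer endpoint $A$-robots. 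Since the endpoint $A$-count is a non-negative integer that strictly decreases each time case~(c) returns us to $ABA$, that branch can recur only finitely often, after which the decisive round must fall into (a), (b), or the $AB^+A$ subcase of (c); together with the distance bookkeeping below this yields exactly one of the four outcomes.

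The step I expect to be the main obstacle is certifying the reductions $dis(C(t'))\le dis(C(t))-\delta$ and $dis(C(t'))\le dis(C(t))-2\delta$ under non-rigid movement, since each mover travels an adversarially chosen distance toward the midpoint and may or may not arrive. The handle is that a vacated side recedes by at least the least distance moved by a robot on it, and each mover covers at least $\min(\delta,\,dis(C(t))/2)$; hence, when $dis(C(t))\ge 2\delta$, every vacated side recedes by at least $\delta$, giving $-\delta$ for the one vacated endpoint of case~(b) and $-2\delta$ for the two vacated endpoints of case~(a). The genuinely delicate point is the regime $dis(C(t))<2\delta$, where the distance to the midpoint is below $\delta$ and every mover is therefore forced to the midpoint: in case~(b) this leaves the retained endpoint together with a midpoint cluster at distance $dis(C(t))/2$, for which the literal bound $-\delta$ need not hold. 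I would treat this boundary regime separately, arguing that the step still strictly shrinks the \textit{LDS} and that sub-$2\delta$ configurations reach gathering within a bounded number of rounds, so that it remains consistent with the global termination argument even though the uniform $-\delta$ annotation is tight only for $dis(C(t))\ge 2\delta$. The rest is routine bookkeeping of which endpoints retain color $A$ and where the recolored robots come to rest.
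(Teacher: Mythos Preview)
Your approach is correct and is essentially the same as the paper's: a case analysis on which endpoint robots are activated, mirroring the arguments for the $AA$ and $AA^*A$ lemmas. In fact the paper's own proof is a single sentence (``proved similar to the one of Lemma~\ref{AAA-AA}''), so your version is considerably more thorough than what the paper supplies.

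Two remarks. First, your iteration on the sub-case of (c) that returns to $ABA$ is the right way to handle it; the paper's outcome~(2) is stated as $AB^*A$, which by the paper's own definitions excludes the three-point $ABA$ pattern, so either one reads outcome~(2) loosely as ``$ABA$ or $AB^+A$'' (consistent with the self-loop in Figure~\ref{fig:Full-light}) or one iterates exactly as you do --- either way your argument closes. Second, your flagging of the regime $dis(C(t))<2\delta$ is a genuine subtlety that the paper does not address: the $-\delta$ and $-2\delta$ annotations are asserted unconditionally there, but as you observe they hold literally only when the half-\textit{LDS} is at least $\delta$. Your proposed resolution --- that in the short regime every mover reaches the midpoint, the \textit{LDS} still strictly shrinks (it halves), and gathering follows in boundedly many rounds --- is the correct patch and is more careful than the paper itself.
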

\begin{proof}
This lemma can be proved similar to the one of Lemma~\ref{AAA-AA}.
\end{proof}

\begin{lemma}\label{lemma:ABBG}
If $C(t)$ is a configuration at time $t$ with color-configuration $AB^*B$ or $ABB$ (or $BB^*A$ or $BBA$), there is a time $t' (\geq t)$ such that
$C(t) \rightarrow^* C(t')$, $C(t')$ is a gathering configuration. 
\end{lemma}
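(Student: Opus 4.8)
The plan is to exploit the common structure of all four color-configurations $AB^*B$, $ABB$, $BB^*A$, $BBA$: exactly one endpoint of the \textit{LDS}, call it $p$, carries robots of color $A$, and every other robot has color $B$. These cases coincide up to the symmetry exchanging the two endpoints, so I would argue only for $AB^*B$. The central claim is that $p$ is an absorbing point toward which all $B$-robots drift, so I would first fix the invariant that $p$ remains the unique position occupied by color-$A$ robots and that no robot ever leaves $p$.

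I would verify the invariant by a round-by-round case analysis of {\bf Algorithm~\ref{algo:FLG}}. An $A$-robot at $p$ sees the outside $B$-robots, so $\mathcal{L}(\mathcal{SS}_i)=\{A,B\}$; since $p$ is the only $A$-position, $|\mathcal{P}_A(\mathcal{SS}_i)|=1$ and it executes {\bf line}~13, staying put without recoloring (when no robot is outside it instead reads $\mathcal{L}(\mathcal{SS}_i)=\{A\}$ and stays by {\bf line}~3). A $B$-robot located away from $p$ does see the $A$-robots at $p$, so again $|\mathcal{P}_A(\mathcal{SS}_i)|=1$ and it executes {\bf line}~18, setting $des_i$ to the unique $A$-position $p$; crucially this uses the $A$-position rather than $p_n$/$p_f$, so the shrinking \textit{LDS} is irrelevant and the target never drifts. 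As $des_i=p$ and movement never overshoots its destination, every robot stays in the segment from $p$ to the current far endpoint, so $p$ remains an \textit{LDS} endpoint; and since $A$-robots never move or recolor while outside $B$-robots keep color $B$, $p$ stays the unique $A$-position.

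For convergence I would observe that each outside $B$-robot, when activated, advances toward the fixed point $p$ by at least $\delta$ (or reaches it when within $\delta$, by the minimum-movement guarantee of non-rigid motion); by fairness each is activated infinitely often, so each reaches $p$ after finitely many rounds. When the last $B$-robot arrives, all $n$ robots occupy $p$, a gathering configuration, and choosing $t'$ as that round proves the lemma. The main obstacle I anticipate is the local-unaware view change at the instant a $B$-robot reaches $p$: it then no longer sees the co-located $A$-robots, so its observed color set collapses to $\{B\}$ and it falls into the $\{B\}$-branch rather than the $\{A,B\}$-branch. I would close this gap by showing that such a robot sits at the endpoint $p=p_n$, so {\bf lines}~9--10 merely recolor it to $A$ and keep it at $p$; hence arrivals are permanent and the invariant (now with $p$ carrying only $A$-robots) is preserved, which is exactly what makes $p$ absorbing and the convergence argument go through.
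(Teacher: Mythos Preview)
Your proposal is correct and follows the same approach as the paper: both argue that since $|\mathcal{P}_A|=1$, the $A$-robots stay put (lines~12--13) while every $B$-robot targets the unique $A$-position (lines~17--18) until gathering. Your treatment is considerably more detailed than the paper's three-line proof---in particular you make explicit the convergence via fairness and $\delta$-progress, and you handle the local-unawareness edge case of a $B$-robot arriving at $p$ (falling into the $\{B\}$-branch and recoloring via lines~9--10)---whereas the paper simply asserts ``the gathering is attained'' without spelling these points out.
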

\begin{proof}
Since $|{\cal P}_A(C(t))|  = 1$,
robots with $A$ stay at the position ({\bf lines}~12-13)  and robots with $B$ move to the position at robots with $A$ ({\bf lines}~17-18) 
and the gathering is attained.
\end{proof}

We have the following theorem using Lemmas~\ref{lemma:ElectLDS}-\ref{lemma:ABBG}.

\begin{theorem}
\label{th:Full-light}
Gathering is solvable in full-light of $2$ colors, non-rigid, and SSYNC, with set-view and agreement of chirality.
\end{theorem}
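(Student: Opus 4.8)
The plan is to combine the preparation lemma with the family of transition lemmas and then argue termination via a potential function. First I would invoke Lemma~\ref{lemma:ElectLDS} to reduce an arbitrary initial configuration (with chirality, non-rigid movement, SSYNC) to the configuration \textit{onLDS} in which all robots lie on the unique \textit{LDS} and, as the algorithm's input specifies, all carry color $A$; if Gathering is achieved during this reduction we are done, so we may assume we start from a color-configuration $AA^*A$ or $AAA$. From there the proof is purely about the behaviour of {\bf Algorithm~\ref{algo:FLG}}, so the remaining work is to show that from any such configuration the execution reaches a gathering configuration in finite time.

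The key structural observation is that Lemmas~\ref{AAA-AA} through \ref{lemma:ABBG} collectively realize every edge of the transition diagram in Figure~\ref{fig:Full-light}: starting from $AA^*A$/$AAA$ we reach $AA$ (Lemma~\ref{AAA-AA}); from $AA$ a single round lands us in a gathering configuration or in one of $BB^*B$/$BBB$, $AB^*B$/$ABB$, $ABA$, $AB^+A$ (second lemma); $AB^+A$ leads to $ABA$ (third lemma); $BB^*B$/$BBB$ leads to gathering or to $AB^+A$, $ABA$, or $AA$ (fourth lemma); $ABA$ leads to gathering or to $AB^*A$, $AB^*B$, or $BB^*B$/$BBB$ (fifth lemma); and finally $AB^*B$/$ABB$ (equivalently $BB^*A$/$BBA$) is an absorbing route to gathering (Lemma~\ref{lemma:ABBG}). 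The plan is therefore to treat the reachable color-configurations as the nodes of a finite directed graph and observe that each labelled edge is certified by exactly one of these lemmas, so every execution is a (possibly infinite) walk in this graph whose steps are individually justified.

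Termination is then established by a potential argument read off from the edge labels. I would take as potential the pair $(dis(C(t)),\, \#A)$ ordered lexicographically, where $\#A$ counts robots with color $A$ sitting at the endpoints of the \textit{LDS}. Every transition either strictly decreases $dis$ by at least $\delta$ or $2\delta$ (the edges labelled $-\delta$, $-2\delta$), or keeps $dis$ fixed while strictly decreasing $\#A$ (the edges labelled ``$\#A$ decreases'', notably the self-loop/side-step out of $ABA$ in Lemma's case (2)), and the only edges that leave both coordinates unchanged lead monotonically toward the absorbing gathering routes (the $AB^+A \to ABA$ step and the moves into $AB^*B$). Concretely I would argue that no cycle in the diagram can preserve the potential: by inspection of Figure~\ref{fig:Full-light} every directed cycle contains at least one $-\delta/-2\delta$ edge or one $\#A$-decreasing edge, so the lexicographic potential strictly decreases along any return to a repeated node. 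Since $dis$ is bounded below by $0$ and decreases in multiples of $\delta$ only finitely often before Gathering, and $\#A$ is a bounded nonnegative integer, the potential cannot decrease forever; hence the walk must reach a gathering configuration.

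The main obstacle I anticipate is the fairness/SSYNC bookkeeping rather than the geometry. Because the adversary chooses which subset of robots is active each round, the lemmas only guarantee that a desired target color-configuration is \emph{eventually} reached ($\rightarrow^*$), not in a single round, so I must be careful that the intermediate rounds spent, say, draining robots out of $AB^+A$ toward $ABA$ (Lemma's third case) or collapsing $AA^*A$ to $AA$ (Lemma~\ref{AAA-AA}) do not themselves form a potential-preserving infinite loop. The resolution is to check that within each such ``eventually'' phase a secondary quantity (the number of robots still strictly interior to the \textit{LDS}, or the number still to change color) strictly decreases at every activation that changes the configuration, using fairness to ensure progress; combined with the outer lexicographic potential on $(dis,\#A)$ this yields a well-founded ranking over the whole execution. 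Assembling these pieces — the reduction of Lemma~\ref{lemma:ElectLDS}, the edge-by-edge realization by the transition lemmas, and the lexicographic potential with the fairness refinement — gives the claimed result that Gathering is solvable in full-light with $2$ colors, non-rigid movement, SSYNC, set-view, and chirality.
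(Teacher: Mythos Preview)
Your proposal is correct and follows essentially the same approach as the paper: invoke Lemma~\ref{lemma:ElectLDS} to reach \textit{onLDS}, realize the transition diagram of Figure~\ref{fig:Full-light} edge-by-edge via Lemmas~\ref{AAA-AA}--\ref{lemma:ABBG}, and conclude termination from the observation that every cycle in the diagram decreases either $dis$ or the number of $A$-colored robots at the endpoints. Your lexicographic potential $(dis,\#A)$ and the fairness bookkeeping for the $\rightarrow^*$ phases make explicit what the paper leaves implicit, but the underlying argument is the same.
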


\subsection{External-light}

\Newcodeline
\begin{algorithm}[h]
\caption{Ext-Light-Gather-with-3colors($r_i$)}
\label{algo:ELGW3C}
{\footnotesize
\begin{tabbing}
111 \= 11 \= 11 \= 11 \= 11 \= 11 \= 11 \= \kill
{\em Assumptions}: external-light, rigid, $3$ colors ($A$,$B$,$C$), local-unawareness, set-view, SSYNC\crm
{\em Input}: $(|{\cal P}({\cal SS}_i)| =3)$ {\bf and} $(|\overline{p_f p_m}|=|\overline{p_m p_n}|)$ or $|{\cal P}({\cal SS}_i)| =2$,  all robots have color $A$.\crm
\Cl \> {\bf if} $|{\cal P}({\cal SS}_i)| =2$ {\bf then} \crm %
\Cl \> \> {\bf if} $C \in {\cal L}_{p_f}({\cal SS}_i)$ {\bf then} $\ell_i \leftarrow C$; $des_i \leftarrow p_f$ \crm
\Cl \> \> {\bf else  if} ${\cal L}_{p_f}({\cal SS}_i) = \{ A \}$ {\bf then} $\ell_i \leftarrow B$; $des_i \leftarrow (p_n + p_f)/2$\crm
\Cl \> \> {\bf else  if} ${\cal L}_{p_f}({\cal SS}_i) = \{ B \}$ {\bf then} $\ell_i \leftarrow C$; $des_i \leftarrow p_i$  // stay\crm
\Cl \> {\bf else if} $(|{\cal P}({\cal SS}_i)| =3)$ {\bf and} $(|\overline{p_f p_m}|=|\overline{p_m p_n}|)$ {\bf  then}\crm
\Cl \> \> \> {\bf if} $p_i =p_m$  {\bf then} //I am at the midpoint.\crm
\Cl \> \> \> \>$\ell_i \leftarrow B$\crm
\Cl \> \> \> \> $des_i \leftarrow p_m$ //stay\crm
\Cl \> \> \> {\bf else if} $ {\cal L}_{p_f}({\cal SS}_i)=\{B\}$ {\bf and} $ {\cal L}_{p_m}({\cal SS}_i)=\{B\}$ {\bf then}\crm
\Cl \> \> \> \>\>$\ell_i \leftarrow C$\crm
\Cl \> \> \> \> \>$des_i \leftarrow p_i$ //stay\crm
\Cl \> \> \> {\bf else if} $ {\cal L}_{p_f}({\cal SS}_i)\cap \{A, C\} \not = \emptyset$ {\bf and} $ {\cal L}_{p_m}({\cal SS}_i)=\{B\}$ {\bf then}\crm
\Cl \> \> \> \>\>$\ell_i \leftarrow B$\crm
\Cl \> \> \> \> \>$des_i \leftarrow p_m$ //go to the midpoint
\end{tabbing}
}
\end{algorithm}

We show two algorithms in external-light. One uses $3$ colors in rigid and SSYNC and the other uses only $2$ colors in the same assumption as the former
with the local-awareness. 
{\bf Algorithm~\ref{algo:ELGW3C}}  shows the external-light algorithm in rigid and SSYNC with $3$ colors
and its transition diagram is shown in Figure~\ref{fig:Ext-light}.

\begin{figure*}
  \begin{center}
    \includegraphics[scale=0.8,width=8cm]{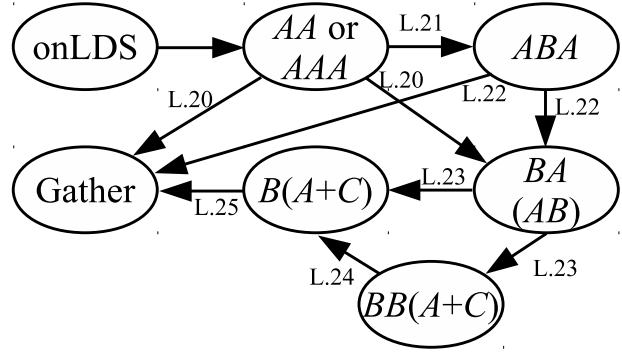}%
    \caption{Transition diagram for Ext-Light-Gather-with-3colors.}
    \label{fig:Ext-light}
  \end{center}
\end{figure*}

Without loss of generality, the algorithm may begin with a configuration of two-location $p_n$ and $p_f$, where all robots have $A$ or three-location $p_n, p_m$, and $p_f$
such that $(|\overline{p_f p_m}|=|\overline{p_m p_n}|)$, where all robots have $A$. Since we assume the rigid movement,
these configurations can be reduced from \textit{onLDS} by moving robots to the nearest position among the endpoints and the midpoint. 
By using notations of the color-configurations, the former has color-configuration $AA$ and the latter has $AAA$ in Fig.~\ref{fig:Ext-light}.  
Once the configuration of two- or three-location is obtained, this configuration is preserved hereafter 
since we assume the rigid movement.

{\bf Algorithm~\ref{algo:ELGW3C}} makes configuration with $BA$ (or $AB$) from the initial configuration with $AAA$ or $AA$
via configuration with $ABA$.  After the configuration with $BA$ \footnote{$AB$ can be treated similarly}, (1) when robots observe color $B$, they change their color to $C$ and stay there, where the location becomes the gathering one, (2) when robots observe color $A$, they change their color to $B$ and go to the midpoint. Then, the color-configuration becomes $B(A+C)$ or $BB(A+C)$, where $A+C$ denotes the endpoint is occupied by robots with color $A$ or $C$.
Once the configuration with $B(A+C)$ appears, since robots with $A$ or $C$ observe color $B$ and change their color to $C$, and robots with $B$ observe color $C$ and move to the location having robots with $C$, Gathering is successful.
The configuration with $BB(A+C)$ becomes one with $B(A+C)$, because robots with $B$ located at the midpoint stay there, 
robots with $A$ or $C$ observe two $B$'s at the endpoint and the midpoint change their color to $C$, robots with $B$ located at the endpoint observe color $C$ at the endpoint and color $B$ at the midpoint move to the midpoint.

\begin{lemma}\label{AAtoABA}
Configuration with $AA$ moves to one with $ABA$, $AB$ or $BA$. 
In particular, it moves to the gathering configuration,
if all robots become active.  
\end{lemma}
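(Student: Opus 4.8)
The plan is to argue by a case analysis on the set of robots the scheduler activates in the single round starting from the color-configuration $AA$. First I would pin down the behaviour of an arbitrary active robot. Let $p$ and $q$ be the two endpoints carrying the $A$-coloured robots and let $m=(p+q)/2$ be the midpoint of the current $\mathit{LDS}$ $\overline{pq}$. For any active robot $r_i$ we have $|\mathcal{P}(\mathcal{SS}_i)|=2$ (it knows its own endpoint and sees the opposite one), so control enters the branch at \textbf{line}~1. Moreover the far endpoint $p_f$ carries only $A$-robots, all of which $r_i$ does observe --- it is not located at $p_f$, so local-unawareness is irrelevant there --- hence $\mathcal{L}_{p_f}(\mathcal{SS}_i)=\{A\}$, the test of \textbf{line}~2 fails, and $r_i$ executes \textbf{line}~3: it recolours to $B$ and sets $des_i=(p_n+p_f)/2$. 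Since $p_n,p_f$ are just $p,q$ in some order, this destination is the \emph{same} geometric point $m$ for every active robot, whether it sits at $p$ or at $q$; this relies only on the midpoint of a segment being coordinate-free. Because the movement is rigid, every active robot actually reaches $m$ and displays colour $B$ there.

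Next I would read off the resulting configuration from which endpoints get completely vacated. Write $R_p,R_q$ for the robots initially at $p,q$ and $S_p\subseteq R_p$, $S_q\subseteq R_q$ for the activated ones, with the round's active set $S=S_p\cup S_q$ assumed nonempty. After the round the occupied points are: $p$ with colour $A$ iff $S_p\neq R_p$; $q$ with colour $A$ iff $S_q\neq R_q$; and $m$ with colour $B$ iff $S\neq\emptyset$. Since every robot sits at $p$ or $q$, both endpoints are emptied exactly when $S=R$, i.e.\ when all robots are active; in that case all robots collect at $m$ and the configuration is gathering, which establishes the ``in particular'' clause. If exactly one endpoint is emptied, the survivors occupy two points, namely $m$ (colour $B$) and the remaining endpoint (colour $A$), giving color-configuration $BA$ (symmetrically $AB$). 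If neither endpoint is emptied but $S\neq\emptyset$, the occupied points are $p$, $m$, $q$; as $|\overline{pm}|=|\overline{mq}|=|\overline{pq}|/2$, the segment $\overline{pq}$ stays the $\mathit{LDS}$ with $m$ its midpoint, so the configuration is precisely $ABA$ and satisfies the three-location input condition $|\overline{p_f p_m}|=|\overline{p_m p_n}|$.

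The only points needing genuine care --- rather than routine bookkeeping --- are the geometric ones in the second paragraph: that all movers target the common point $m$ despite disagreeing coordinate systems and the external-light restriction, and that after the round $\overline{pq}$ (respectively the shorter surviving segment) is still the unique $\mathit{LDS}$, so the outcome is a legitimate two- or three-location color-configuration rather than a degenerate arrangement. Confirming $m\neq p,q$ and that no mover ever lands strictly inside the new $\mathit{LDS}$ (each mover lands exactly at the midpoint) closes this gap. Assembling the three mutually exclusive nonempty-activation cases --- all active, exactly one endpoint emptied, neither endpoint emptied --- then yields the three claimed targets $ABA$, $AB$, $BA$, together with the gathering conclusion when all robots are active.
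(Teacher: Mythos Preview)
Your proof is correct and follows essentially the same approach as the paper's: both argue that every active robot executes \textbf{line}~3 (recolouring to $B$ and moving to the midpoint), then split on the activation pattern to read off the resulting configuration. Your case breakdown by ``which endpoints get emptied'' is in fact slightly cleaner than the paper's phrasing ``all robots at one endpoint only become active,'' since yours transparently covers the subcase where all robots at one endpoint and some (not all) at the other are activated, which still yields $AB$/$BA$; but this is a refinement of presentation rather than a different argument.
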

\begin{proof}
All active robots perform {\bf line}~3. 
If all robots become active, the gathering is attained.
Otherwise, if all robots at one endpoint only become active, the configuration moves to $AB$ or $BA$.
Otherwise, it moves to $ABA$.
\end{proof}

\begin{lemma}\label{AAAtoABA}
Configuration with $AAA$ moves to one with $ABA$.   
\end{lemma}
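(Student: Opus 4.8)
The plan is to partition the robots into those sitting at the midpoint $p_m$ and those sitting at the two endpoints, and to argue that Algorithm~\ref{algo:ELGW3C} silently recolours the midpoint robots from $A$ to $B$ while keeping every endpoint robot frozen with colour $A$, until the midpoint is monochromatically $B$, which is exactly the color-configuration $ABA$. Since the input with color-configuration $AAA$ is a three-location configuration satisfying $|\mathcal{P}(\mathcal{SS}_i)|=3$ and $|\overline{p_fp_m}|=|\overline{p_mp_n}|$, every activated robot necessarily enters the $|\mathcal{P}(\mathcal{SS}_i)|=3$ branch (lines~5--14) and never the two-location branch (lines~1--4); this is the first thing I would verify.

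First I would classify the two behaviours inside that branch. A robot at $p_m$ recognizes $p_i=p_m$ from its snapshot, since the other two occupied locations lie on opposite sides at equal distance and it is thus the middle of the unique \textit{LDS}; it therefore executes lines~6--8, setting its light to $B$ and staying. As this action is idempotent, once a midpoint robot is $B$ it remains at $p_m$ with colour $B$ forever. An endpoint robot has $p_i\neq p_m$, so it can act only through lines~9--11 or lines~12--14, and both guards require $\mathcal{L}_{p_m}(\mathcal{SS}_i)=\{B\}$. Under the set-view assumption, as long as at least one midpoint robot still carries $A$, the midpoint shows the colour set $\{A\}$ or $\{A,B\}$, neither of which equals $\{B\}$; hence no endpoint robot changes colour or moves, and in particular nobody leaves the three occupied locations. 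Since the movement is rigid and no robot moves, the three locations and the half-lengths $|\overline{p_fp_m}|,|\overline{p_mp_n}|$ are invariant, so the configuration stays a valid $|\mathcal{P}|=3$ input throughout and $dis$ is unchanged.

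Next I would invoke fairness: each midpoint robot is activated infinitely often, hence after finitely many rounds every midpoint robot has been activated at least once and is coloured $B$, while all endpoint robots still carry $A$. At that first moment the color-configuration is precisely $ABA$, which is what the lemma asks for.

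The one delicate point, and the step I expect to be the real obstacle, is ruling out a premature endpoint move, i.e.\ guaranteeing that no endpoint robot fires line~12 (or line~9) before the conversion is complete. Here I would lean on the SSYNC semantics together with set-view: all robots active in a given round read the same opening snapshot of that round, and in any round whose opening configuration still contains an $A$-coloured midpoint robot the midpoint's colour set is not $\{B\}$, so every endpoint robot active in that round stays put. Consequently the endpoints cannot begin to move until the round after the midpoint has become uniformly $B$, that is, not before $ABA$ has actually been reached. This ordering argument, rather than any geometric computation, is the crux of the proof.
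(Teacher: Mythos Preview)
Your argument is correct and follows essentially the same approach as the paper's own proof: midpoint robots execute lines~6--8 and recolour to $B$ while staying, and endpoint robots are blocked because the guards at lines~9 and~12 both demand $\mathcal{L}_{p_m}(\mathcal{SS}_i)=\{B\}$, which fails under set-view as long as some midpoint robot still carries $A$. Your write-up is considerably more detailed than the paper's one-line justification (and in particular you spell out the SSYNC snapshot-at-start-of-round reasoning that prevents an endpoint robot from ``overshooting'' in the very round the last midpoint robot turns $B$), but the content is the same.
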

\begin{proof}
This case is derived from  {\bf line}~3,   {\bf lines}~6-8, and the fact that robots at endpoints cannot move until all robots 
at the midpoint have color $B$ ({\bf line}~12).
\end{proof}

\begin{lemma}\label{ABAtoBA}
Configuration with $ABA$ moves to the two-point configuration with $BA$ (or $AB$) unless Gathering is achieved.
\end{lemma}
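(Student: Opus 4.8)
The plan is to read off, round by round, how the three groups of robots in an $ABA$ configuration behave under \textbf{Algorithm~\ref{algo:ELGW3C}}, and then to extract a monotone counting quantity that forces the configuration to collapse to either a two-point configuration or a single point. Because the movement is rigid and we are in SSYNC, there are no robots in transit, so at every round the configuration is described exactly by the two \textit{LDS} endpoints $p,q$, the midpoint $m$, and the colours sitting on them. First I would pin down the per-robot action. A robot at an endpoint sees $|\mathcal{P}(\mathcal{SS}_i)|=3$ with equal half-segments, reads $\mathcal{L}_{p_f}(\mathcal{SS}_i)=\{A\}$ (the far endpoint) and $\mathcal{L}_{p_m}(\mathcal{SS}_i)=\{B\}$; since $p_i\neq p_m$ and the line~9 test $\mathcal{L}_{p_f}=\{B\}$ fails, it lands in \textbf{lines}~12--14: it recolours to $B$ and moves to $m$. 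A robot at the midpoint satisfies $p_i=p_m$ and therefore takes \textbf{lines}~6--8: it keeps colour $B$ and stays. I would check these two readings with care precisely because external-light (a robot cannot see its own colour) and local-unawareness are assumed: the point is that an endpoint robot never inspects its own colour, only $\mathcal{L}_{p_f}$ and $\mathcal{L}_{p_m}$, and the midpoint robot decides solely from the unambiguous geometric test $p_i=p_m$, even though its two endpoints are equidistant.

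Next I would record two structural consequences. First, the only motion possible in an $ABA$ configuration is ``endpoint to midpoint''; no robot ever moves to an endpoint, every robot arriving at $m$ already carries colour $B$, and any endpoint robot remaining is an inactive one still showing $A$ (the branch that sets $B$ also leaves the endpoint, and no robot can acquire $C$ since line~9 never fires). Hence no colour $C$ appears, the midpoint stays nonempty with colour $B$, the endpoints keep colour $A$, and as long as both endpoints are nonempty the configuration remains $ABA$. Second, letting $\Phi$ denote the total number of robots located at the two endpoints, $\Phi$ is non-increasing and strictly decreases in any round in which an endpoint robot is activated.

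For termination I would argue by contradiction: if no endpoint ever empties, the configuration stays $ABA$ forever, so by fairness every endpoint robot is eventually activated and leaves its endpoint while none ever arrives, driving $\Phi$ to $0$ and emptying both endpoints, a contradiction. Thus there is a first round $t^{*}$ at which some endpoint becomes empty, and I finish with a case split at $t^{*}$: if both endpoints empty in the same round then all robots sit at $m$ and Gathering is achieved; if exactly one endpoint empties, the occupied positions are $m$ (colour $B$) and the surviving endpoint (colour $A$), which is a two-point configuration with colour-configuration $BA$ (symmetrically $AB$), as claimed. The step I expect to demand the most care is the branch-selection bookkeeping of the first paragraph rather than the counting: I must confirm that throughout the whole process every active robot really does fall into the intended branch of \textbf{Algorithm~\ref{algo:ELGW3C}} under external-light and local-unawareness, that partially-emptied endpoints still present colour $A$ so the midpoint and newly arrived robots are never misread, and that ``both endpoints empty simultaneously'' is exactly the Gathering case rather than some overlooked intermediate configuration.
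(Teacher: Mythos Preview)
Your proposal is correct and follows essentially the same approach as the paper: endpoint robots fall into \textbf{lines}~12--14 (recolour to $B$, move to $p_m$), midpoint robots fall into \textbf{lines}~6--8 (stay), and the outcome is Gathering if both endpoints empty and $BA$/$AB$ otherwise. Your write-up is in fact more careful than the paper's own proof---you explicitly verify the branch conditions under external-light and local-unawareness, and you make the multi-round termination argument precise via the monotone counter $\Phi$ and fairness, where the paper simply asserts the dichotomy in one sentence.
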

\begin{proof}
From a configuration with $BA$,
robots at the endpoints change their color to $B$ and go to the midpoint ({\bf lines}~12-14), and robots at the midpoint
stay there ({\bf lines}~6-8). Then if all robots at the both endpoints go to the midpoint, Gathering is achieved.
Otherwise, the two-point configuration
with $BA$ (or $AB$) is obtained. 
\end{proof}

\begin{lemma}\label{BAtoBA+C}
Configuration with $BA$ moves to one with $B(A+C)$ or $BB(A+C)$.
\end{lemma}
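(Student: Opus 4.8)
The plan is to run a direct case analysis on the subset of robots that the SSYNC scheduler activates in the round immediately after the $BA$-configuration, reading each robot's behaviour straight off \textbf{Algorithm~\ref{algo:ELGW3C}}. First I would split the robots into the two groups sitting at the endpoints of the current \textit{LDS}: the \emph{$B$-end}, where all robots have color $B$, and the \emph{$A$-end}, where all robots have color $A$. The key observation is that under external-light with local-unawareness each robot reads only the color set at its \emph{far} endpoint $p_f$: a robot at the $B$-end sees $\mathcal{L}_{p_f}(\mathcal{SS}_i)=\{A\}$, while a robot at the $A$-end sees $\mathcal{L}_{p_f}(\mathcal{SS}_i)=\{B\}$ (set-view collapses the multisets to these singletons). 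In particular neither satisfies the guard $C\in\mathcal{L}_{p_f}(\mathcal{SS}_i)$ of \textbf{line}~2, so an active $B$-end robot executes \textbf{line}~3 (it keeps color $B$ and moves to the midpoint $(p_n+p_f)/2$) and an active $A$-end robot executes \textbf{line}~4 (it switches to $C$ and stays put).

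Given this, I would enumerate the occupied positions after one activation. The $A$-end always remains occupied, its robots keeping $A$ if inactive and turning to $C$ if active, so that endpoint is always a location carrying colors in $\{A,C\}$, i.e.\ an $(A+C)$-endpoint. The $B$-end and the midpoint carry only color $B$: the $B$-end stays occupied iff some $B$-end robot was inactive, and the midpoint becomes occupied iff some $B$-end robot was active. This leaves exactly three feasible combinations (the case ``$B$-end empty and midpoint empty'' is impossible since the $B$-end starts nonempty): (i) $B$-end occupied, midpoint empty, giving the two-location configuration $B(A+C)$; (ii) $B$-end and midpoint both occupied, giving the three-location configuration $BB(A+C)$; and (iii) $B$-end empty, midpoint occupied, where every $B$-end robot reached the midpoint.

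Case (iii) is the single subtlety I expect to need care. Here the original $B$-end becomes vacant, so the \textit{LDS} contracts to the segment joining the old midpoint to the $A$-end; this shorter segment again has precisely two occupied endpoints, color $B$ at the old midpoint and colors in $\{A,C\}$ at the $A$-end, so the configuration is once more $B(A+C)$ (merely with a halved diameter, which the statement does not constrain). To close the argument I would note that rigid movement guarantees each moving robot lands exactly on the midpoint, so no spurious interior positions are created, and that fairness ensures at least one robot eventually activates, so the configuration cannot stay $BA$ indefinitely. Combining the three cases then shows that from $BA$ the configuration reaches either $B(A+C)$ or $BB(A+C)$, as claimed.
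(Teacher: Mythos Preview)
Your proposal is correct and follows the same approach as the paper: read off the behaviour of the two endpoint groups from \textbf{lines}~3--4 of Algorithm~\ref{algo:ELGW3C} and classify the resulting configuration. Your case analysis is in fact more careful than the paper's two-line proof, in particular your explicit treatment of case~(iii) where every $B$-end robot migrates and the \textit{LDS} contracts---the paper leaves this implicit.
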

\begin{proof}
Robots with $A$ observing $B$ change their color to $C$ and stay ({\bf line}~4) and robots with $B$ observing $A$ change their color to $B$ and move to the midpoint ({\bf line}~3). Thus, the color-configuration $B(A+C)$ or $BB(A+C)$ is obtained. 
\end{proof}

\begin{lemma}
Configuration with $BB(A+C)$ moves to one with $B(A+C)$.
\end{lemma}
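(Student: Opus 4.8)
The plan is to fix absolute names for the three collinear positions of a $BB(A+C)$ configuration: let $u$ be the endpoint whose robots have colour $B$, let $v$ be the midpoint (also colour $B$), and let $w$ be the endpoint whose robots have colours in $\{A,C\}$, so that reading from $u$ to $w$ yields $BB(A+C)$. Every robot in this configuration sees $|{\cal P}({\cal SS}_i)|=3$ together with the equidistance $|\overline{p_f p_m}|=|\overline{p_m p_n}|$, so each activated robot enters the branch at \textbf{line}~6 of \textbf{Algorithm~\ref{algo:ELGW3C}}, and the whole argument reduces to a case analysis on which of $u,v,w$ the activated robot occupies.

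First I would read the three cases straight off the code. A robot at $v$ satisfies $p_i=p_m$, so by \textbf{lines}~7--8 it keeps colour $B$ and stays. A robot at $w$ has $p_f=u$ and $p_m=v$, both carrying only $B$, so the test of \textbf{line}~9 succeeds and it recolours to $C$ and stays (\textbf{lines}~10--11). A robot at $u$ has $p_m=v$ carrying $B$ but $p_f=w$ carrying colours in $\{A,C\}$; hence the test of \textbf{line}~9 fails while that of \textbf{line}~12 succeeds, so it keeps colour $B$ and moves to the midpoint $v$ (\textbf{lines}~13--14). These exhaust the possibilities.

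Next I would package this into an invariant and finish with fairness. While $u$ is nonempty the colour-configuration stays $BB(A+C)$: position $v$ never loses colour $B$ (its residents stay $B$ and every robot arriving from $u$ was set to $B$), the colours present at $w$ stay within $\{A,C\}$, and $u$ retains colour $B$; moreover no robot ever moves onto $u$, so its population is nonincreasing. Since movement is rigid, each activated robot at $u$ actually reaches $v$, and by fairness every robot at $u$ is eventually activated; hence after finitely many rounds $u$ becomes empty, leaving only $v$ (colour $B$) and $w$ (colours in $\{A,C\}$), i.e.\ the colour-configuration $B(A+C)$.

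The only point needing care — and the step I would treat as the main obstacle — is checking that $w$'s label genuinely persists as ``$A+C$'' throughout the transient, including partially recoloured states in which $w$ simultaneously holds $A$ and $C$. This is benign because the tests governing the $u$- and $v$-robots read $w$ only through ${\cal L}_{p_f}\cap\{A,C\}\neq\emptyset$, which is insensitive to whether $w$ holds $A$, $C$, or both; and the external-light, local-unaware assumption causes no trouble here, since every decisive test inspects only the foreign positions $u$ and $v$ and never the robot's own occupied point.
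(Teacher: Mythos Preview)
Your proof is correct and follows the same approach as the paper, which merely states ``It can be derived by \textbf{lines}~5--14''; you have faithfully expanded that one-line pointer into the explicit case analysis on positions $u,v,w$ together with the fairness/invariant argument needed to empty $u$. The minor phrasing in your last sentence (``inspects only the foreign positions $u$ and $v$'') is slightly loose---a robot at $u$ inspects $v$ and $w$, not $u$ and $v$---but the intended point, that no test ever requires reading colours at the robot's own location, is correct and sufficient.
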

\begin{proof}
It can be derived by {\bf lines}~5-14.
\end{proof}

\begin{lemma}\label{BCtoGa}
Configuration with $B(A+C)$ moves to the gathering configuration.
\end{lemma}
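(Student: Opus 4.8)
The plan is to exploit two facts: the far endpoint keeps at least one $C$-colored robot forever, and because of this every $B$-robot is forced to walk the whole way to that endpoint the next time it is activated. Write the two occupied locations of a $B(A+C)$ configuration as $X$ (occupied only by $B$-robots) and $Y$ (occupied by robots colored $A$ or $C$, at least one of which is $C$ --- this last being exactly what distinguishes $B(A+C)$ from the plain $BA$ configuration of Lemma~\ref{BAtoBA+C}). Under local-unawareness and set-view, a robot standing at either location sees only the opposite location, so its snapshot has $|\mathcal{P}(\mathcal{SS}_i)|=2$, its own location is $p_n$, and the opposite one is $p_f$ with $\mathcal{L}_{p_f}(\mathcal{SS}_i)$ equal to the set of colors present there.

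First I would pin down the behavior of the robots sitting at $Y$. Such a robot reads $\mathcal{L}_{p_f}(\mathcal{SS}_i)=\{B\}$ and therefore executes {\bf line}~4: it recolors itself to $C$ and stays. Hence a $Y$-robot never leaves $Y$, an $A$-robot there turns into $C$ when first activated, and --- crucially --- a $C$-robot there stays $C$ and stays put. Consequently the set of $C$-robots at $Y$ is non-decreasing and, once non-empty, remains non-empty; this is the invariant on which the whole argument rests.

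Next I would analyze the $B$-robots at $X$. Because $Y$ always exposes a $C$ by the invariant above, an active $B$-robot reads $C\in\mathcal{L}_{p_f}(\mathcal{SS}_i)$ and therefore takes {\bf line}~2, recoloring to $C$ and choosing destination $des_i=p_f=Y$; rigidity guarantees it reaches $Y$ in that single move, so no spurious midpoint location is ever created and the configuration stays of type $B(A+C)$ as long as $X$ is non-empty. In particular {\bf line}~3 (which would send a $B$-robot to the midpoint) is never triggered, since that branch requires $\mathcal{L}_{p_f}(\mathcal{SS}_i)=\{A\}$. Thus each activation of a $B$-robot strictly decreases the number of robots at $X$.

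Finally I would close with fairness: since every robot is activated infinitely often, after finitely many rounds every $B$-robot has left $X$ for $Y$, at which point $X$ is empty and all robots coincide at $Y$, i.e.\ the gathering configuration is reached. The main obstacle --- and the only place the argument can go wrong --- is maintaining the invariant that $Y$ keeps a $C$-robot; if it were ever lost, a $B$-robot could read $\{A\}$ and escape to the midpoint, breaking the two-location structure. I therefore expect the care to concentrate on the fixed-point check for $C$-robots at $Y$ in {\bf line}~4, together with verifying that the configuration produced by Lemma~\ref{BAtoBA+C} (and by the $BB(A+C)\to B(A+C)$ transition) genuinely contains at least one $C$ at the far endpoint, so that the invariant holds at the outset.
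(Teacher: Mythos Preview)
Your argument is correct and follows exactly the route the paper takes: robots at the $(A+C)$-endpoint execute {\bf line}~4 and stay (maintaining a $C$ there), while robots at the $B$-endpoint execute {\bf line}~2 and rigidly cross over, so fairness empties the $B$-endpoint. The paper's own proof is the one-liner ``It can be derived by {\bf lines}~1--4,'' which your analysis simply unpacks in full.
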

\begin{proof}
It can be derived by {\bf lines}~1-4.
\end{proof}

We have the following theorem using Lemma~\ref{lemma:ElectLDS} and Lemmas~\ref{AAtoABA}-\ref{BCtoGa}. 

\begin{theorem}
Gathering is solvable in external-light of $3$ colors, rigid, and SSYNC, if robots have set-view and agreement of chirality.
\end{theorem}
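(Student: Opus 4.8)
The plan is to reduce any initial configuration to the canonical \textit{onLDS} input required by {\bf Algorithm~\ref{algo:ELGW3C}}, and then to trace the transition diagram of Figure~\ref{fig:Ext-light} to show that the algorithm terminates in a gathering configuration. The first step invokes Lemma~\ref{lemma:ElectLDS}: under chirality, SSYNC, and rigid movement, we may assume all robots lie on a unique \textit{LDS} (unless Gathering is already achieved). Since movement is rigid here, I would then collapse this line to the two-location configuration $AA$ or the symmetric three-location configuration $AAA$ (with $|\overline{p_fp_m}|=|\overline{p_mp_n}|$) by having robots move to the nearest of the two endpoints and the midpoint, exactly as the prose preceding the lemmas describes; rigidity guarantees each such configuration, once reached, is preserved, which is the feature that makes the light-based phase clean.

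Next I would assemble the chain of already-proved lemmas into a single forward progression through the state graph. Starting from $AAA$, Lemma~\ref{AAAtoABA} moves us to $ABA$; starting from $AA$, Lemma~\ref{AAtoABA} either reaches the gathering configuration outright (if all robots activate) or lands in $ABA$, $AB$, or $BA$. From $ABA$, Lemma~\ref{ABAtoBA} carries us to the two-point configuration $BA$ (or symmetrically $AB$) unless Gathering is already done. From $BA$, Lemma~\ref{BAtoBA+C} produces $B(A+C)$ or $BB(A+C)$; the unnamed lemma immediately following it reduces $BB(A+C)$ to $B(A+C)$; and finally Lemma~\ref{BCtoGa} takes $B(A+C)$ to the gathering configuration. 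Composing these gives a path from every reachable color-configuration to Gathering, so the reflective-transitive closure $\rightarrow^*$ reaches a gathered state.

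The delicate point is \emph{termination}, i.e. ruling out livelock in the SSYNC adversarial schedule, since only a subset of robots activates each round and the algorithm must not cycle indefinitely among $ABA$, $BA$, $B(A+C)$, and $BB(A+C)$. I would argue this by a progress measure: once rigidity fixes the endpoints and the midpoint, the set of occupied locations can only shrink, and each non-trivial transition in the diagram either strictly decreases the number of occupied positions or advances the color toward the absorbing pattern $B(A+C)\to$ gathered, which robots with $A$ or $C$ can no longer escape because $C$ marks the designated gathering point and is never abandoned. Fairness (each robot activates infinitely often) then forces every pending move to complete, so no cycle in Figure~\ref{fig:Ext-light} can repeat forever without decreasing this measure. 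Combining the reachability chain with this termination argument yields the theorem.

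\begin{proof}
By Lemma~\ref{lemma:ElectLDS}, under chirality we may reduce any initial configuration in rigid SSYNC to \textit{onLDS} (or Gathering is already achieved), and by rigidity we may further collapse it to the two-location configuration $AA$ or the symmetric three-location configuration $AAA$ with $|\overline{p_fp_m}|=|\overline{p_mp_n}|$, as required by the input to {\bf Algorithm~\ref{algo:ELGW3C}}. From $AAA$, Lemma~\ref{AAAtoABA} gives $ABA$; from $AA$, Lemma~\ref{AAtoABA} gives the gathering configuration or one of $ABA$, $AB$, $BA$. From $ABA$, Lemma~\ref{ABAtoBA} reaches the two-point configuration $BA$ (or $AB$) unless Gathering is achieved. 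From $BA$, Lemma~\ref{BAtoBA+C} reaches $B(A+C)$ or $BB(A+C)$; the reduction lemma reduces $BB(A+C)$ to $B(A+C)$; and Lemma~\ref{BCtoGa} takes $B(A+C)$ to the gathering configuration. Since rigidity keeps the endpoints and midpoint fixed, every transition of Figure~\ref{fig:Ext-light} leaves the occupied locations unchanged or strictly decreases them, and once color $C$ is set at the gathering point it is never abandoned; thus the state graph contains no progress-free cycle. Fairness then guarantees each pending move completes, so by the composed reachability chain the execution reaches a gathering configuration, proving the theorem.
\end{proof}
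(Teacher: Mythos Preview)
Your proof is essentially the paper's own: it simply cites Lemma~\ref{lemma:ElectLDS} together with Lemmas~\ref{AAtoABA}--\ref{BCtoGa} and concludes, which is exactly the chain you assemble. One small inaccuracy in your added termination paragraph: the transition $BA \to BB(A{+}C)$ in Figure~\ref{fig:Ext-light} takes two occupied locations to three, so your claim that ``every transition \ldots leaves the occupied locations unchanged or strictly decreases them'' is false as stated; this does no harm, however, since each of the cited lemmas already guarantees (via fairness) that its transition completes in finitely many rounds, making your separate progress-measure argument unnecessary.
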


\Newcodeline
\begin{algorithm}[h]
\caption{Ext-Light-Gather-with-2-Colors($r_i$)}
\label{algo:ELG2C}
{\footnotesize 
\begin{tabbing}
111 \= 11 \= 11 \= 11 \= 11 \= 11 \= 11 \= \kill
{\em Assumptions}: external-light, rigid, 2 colors (A,B), local-awareness, set-view, SSYNC. \crm
{\em Input}: $(|{\cal P}({\cal SS}_i)| =3)$ {\bf and} $(|\overline{p_f p_m}|=|\overline{p_m p_n}|)$ or $|{\cal P}({\cal SS}_i)| =2$, all robots have color $A$. \crm
\Cl \> {\bf if} $|{\cal P}({\cal SS}_i)| =2$ {\bf then} \crm 
\Cl \> \> {\bf if} $B \in {\cal L}({\cal SS}_i)$ {\bf then} $\ell_i \leftarrow B$; $des_i \leftarrow p (\in  {\cal P}_B({\cal SS}_i))$\crm
\Cl \> \> {\bf else  if} ${\cal L}_{p_i}({\cal SS}_i) = \emptyset$ {\bf then} no action //I am alone at $p_i$\crm
\Cl \> \> {\bf else} $\ell_i \leftarrow B$; $des_i \leftarrow (p_n + p_f)/2$ //${\cal L}_{p_f}({\cal SS}_i) = \{ A \}$\crm %
\Cl \> {\bf else if} $(|{\cal P}({\cal SS}_i)| =3)$ {\bf and} $(|\overline{p_f p_m}|=|\overline{p_m p_n}|)${\bf then}\crm
\Cl \> \> \> {\bf if} $A \in {\cal L}_{p_m}({\cal SS}_i)$ {\bf and} ($p_i =p_m$) {\bf then} $\ell_i \leftarrow B$\crm
\Cl \> \> \>{\bf else }  $\ell_i \leftarrow B$; $des_i \leftarrow p_m$
\end{tabbing}
}
\end{algorithm}

Although it is unknown whether there exists a $2$-color gathering algorithm in external-light, rigid, and SSYNC,
we can construct $2$-color gathering algorithm in the same condition if we assume the local-awareness ({\bf Algorithm~\ref{algo:ELG2C}}). In {\bf Algorithm~\ref{algo:ELGW3C}}, we make the location having robots with $C$ the gathering point.
Instead we do not use the third color $C$, we make configurations with $AB$ (or $BA$) such that
there are at least 2 robots with $B$ at one endpoint utilizing the local-awareness.
Thus, since robots with $A$ move to the point having robots with $B$ and robots with $B$ can stay at the point({\bf line}~2) using the local-awareness, Gathering is attained.
Such configurations are created as follows; 
When the initial color-configuration is $AAA$,
it moves to $ABA$ ({\bf line}~6) and reaches to $BA$ (or $AB$) ({\bf lines}~7).
When one endpoint contains more than one robot in the initial configuration with $AA$,
it moves to a configuration with $ABA$ ({\bf line}~4) and reaches to one with $BA$ (or $AB$) ({\bf lines}~5-7).
If one endpoint contains only one robot in configuration with $AA$,
since the robot identifies that it is alone due to the local-awareness, the robot makes no action. 
Thus, all the initial configurations can reach to the desired one with $BA$ (or $AB$).

\begin{theorem}
Assume the local-awareness. 
Gathering is solvable in external-light of $2$ colors, rigid, and SSYNC, if robots have set-view and agreement of chirality.
\end{theorem}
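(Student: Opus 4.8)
The plan is to mirror the two-step structure already used for \textbf{Algorithm~\ref{algo:ELGW3C}}: first reduce an arbitrary configuration to the canonical input of \textbf{Algorithm~\ref{algo:ELG2C}}, then analyze its transition diagram and show termination in a gathering configuration under every SSYNC activation schedule. For the first step I would invoke Lemma~\ref{lemma:ElectLDS}, which under chirality produces \textit{onLDS} (a unique \textit{LDS}) in SSYNC without lights; since the movement here is rigid, the preprocessing that collapses \textit{onLDS} to a two-location configuration $AA$ or a balanced three-location configuration $AAA$ (sending each robot to the nearest of the two endpoints and the midpoint) is harmless and, once reached, is preserved forever because rigid moves never overshoot. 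Thus it suffices to prove that from $AA$ (with $n\ge 3$) or from $AAA$ the algorithm reaches gathering. Note that set-view is what lets a robot read the color sets ${\cal L}_{p_m}$, ${\cal L}_{p_f}$ used in the guards, and local-awareness is what lets it read ${\cal L}_{p_i}$ of its own point.

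Second, I would establish a short collection of transition lemmas in the spirit of Lemmas~\ref{AAtoABA}--\ref{BCtoGa}, reading the color-configuration labels off Figure~\ref{fig:Ext-light}. The chain I expect to prove is: $AAA \to ABA$ (active midpoint robots recolor to $B$ via \textbf{line}~6, or a lone midpoint robot via \textbf{line}~7); $AA \to ABA$ whenever some endpoint carries at least two robots, which for $n\ge 3$ always happens, since those non-alone robots recolor to $B$ and advance to the midpoint by \textbf{line}~4; $ABA \to BA$ (or $AB$) unless gathering occurs, as endpoint robots recolor and move to the midpoint by \textbf{line}~7 while midpoint robots stay; and finally $BA$ (or $AB$) $\to$ gathering, where every $A$-robot that activates moves to and merges with the $B$-endpoint by \textbf{line}~2. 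The decisive design point I must verify is that the two-location $B$-endpoint is \emph{absorbing}: a robot already carrying $B$ that sees another $B$ at its own point (by local-awareness and set-view) sets $des_i$ to that very point by \textbf{line}~2 and never leaves, while an isolated $B$-robot simply takes no action by \textbf{line}~3; hence the number of robots still at the $A$-endpoint is nonincreasing and strictly decreases whenever an $A$-robot activates.

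Termination then follows from fairness, but since the setting is rigid no distance decrease is available, so the progress measure must be purely combinatorial. For the $BA$ phase I would use the number of robots not yet at the $B$-endpoint, which fairness drives to zero. For the $AA/AAA$-to-$BA$ phase I would argue that each maximal burst of activations either completes gathering or strictly reduces the number of occupied endpoints or the number of $A$-colored robots at endpoints, so the system cannot cycle. The main obstacle, and exactly where local-awareness is essential, is ruling out the Rendezvous-style deadlock in which two endpoints each hold a single robot that perpetually mirrors the other's move to the midpoint; I expect to dispose of it with the observation that for $n\ge 3$ at least one endpoint of $AA$ always holds two or more robots, so \textbf{line}~3 can never freeze the whole system, and any lone robot that does freeze is harmless because the opposite, multi-robot endpoint keeps the computation advancing toward the absorbing $B$-endpoint.
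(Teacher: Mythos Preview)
Your proposal is correct and follows essentially the same approach as the paper: reduce to the canonical $AA$/$AAA$ input via Lemma~\ref{lemma:ElectLDS} and the rigid ``nearest of endpoints/midpoint'' collapse, then trace the chain $AAA\to ABA\to BA$ (resp.\ $AA\to ABA\to BA$) and argue that the $B$-endpoint, which always carries at least two robots, is absorbing because local-awareness lets those robots see each other's $B$ and stay via \textbf{line}~2. The paper's own argument is exactly this informal transition sketch (the paragraph preceding the theorem); your plan merely supplies more explicit progress measures, though note that Figure~\ref{fig:Ext-light} is the diagram for \textbf{Algorithm~\ref{algo:ELGW3C}}, not \textbf{Algorithm~\ref{algo:ELG2C}}.
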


\subsection{Internal-light}

Views of robots in internal-light are the same as those without lights and so robots must determine their behaviors by these views without colors and their own colors of lights. Thus  Gathering algorithms in internal-light can not seem to be constructed  without additional knowledge such as distance information.
In fact, known Rendezvous algorithms use the minimum distance of moving $\delta$  and/or  the unit distance \cite{FSVY}.

In our Gathering algorithm in internal-light, 
we use the minimum distance of moving $\delta$ and
assume initial configurations such that
if $r_i$ and $r_j$ do not occupy the same location,
$dis(p_i,p_j) \geq D$, where $D$ is a function of $\delta$. 
This condition is said to be {\em $D$-distant}\footnote{This assumption does not imply distinct Gathering.}. 
If any initial configuration is $D$-distant such that $D \leq 2\delta$,  
we can construct a gathering algorithm in internal-light, non-rigid with knowledge of $\delta$, and SSYNC with $2$ colors of lights.

Our Gathering algorithm with internal-light is composed of  three parts.
\begin{enumerate}
\item[(1)] From any $D$-distant configuration, we make a $2$-point configuration with the distance is at least $D/2$ ({\bf Algorithm~\ref{algo:ELDSP}}).
\item[(2)] From any $2$-point configuration with distance $d \geq D/2$, we make a $2$-point configuration with $D/2 > d \geq D/4$ ({\bf Algorithm~\ref{algo:RDLDS}}).
\item[(3)] From any $2$-point configuration with $D/2 > d \geq D/4$, we make a Gathering configuration ({\bf Algorithm~\ref{algo:ILG}}).
\end{enumerate} 

We do not use colors to solve (1) and (2) and only use two colors to solve (3).
The output of (i) is the input of (i+1) (i=1,2) and
the explanation of algorithms is stated in reverse order from (3) to (1).

{\bf Algorithm~\ref{algo:ILG}} solves Gathering in internal-light if initial configurations satisfy 
$|{\cal P}({\cal SS}_i)| =2$ and $\frac{D}{4} \leq |\overline{p_np_f}| <\frac{D}{2}$. 
Since $D \leq 2\delta$, every movement in {\bf Algorithm~\ref{algo:ILG}} is the same as the rigid one.
The following lemma is easily verified for {\bf Algorithm~\ref{algo:ILG}}.

\begin{lemma}\label{Solve-1}
Let $\mathcal{C}(t)$ be a configuration at time $t$ appearing in {\bf Algorithm~\ref{algo:ILG}}.
\begin{enumerate}
\item[(1)] If $\mathcal{C}(t)$ satisfies $|\mathcal{P}(\mathcal{C}(t))| =2$ and $\frac{D}{4} \leq \mathit{Diam}(\mathcal{C}(t)) <\frac{D}{2}$ and has a color-configuration $AA$,
then (1-I) $\mathcal{C}(t+1)$ is a Gathering configuration, (1-II) $\mathcal{C}(t+1)$ has a color-configuration $AB$ or $BA$ and
$\frac{D}{8} \leq \mathit{Diam}(\mathcal{C}(t+1)) <\frac{D}{4}$, or
(1-III) $C(t+1)$ has a color-configuration $ABA$ and
$\frac{D}{4} \leq \mathit{Diam}(\mathcal{C}(t+1)) <\frac{D}{2}$.
\item[(2)] If $\mathcal{C}(t)$ has a color-configuration $ABA$ and
$\frac{D}{4} \leq \mathit{Diam}(\mathcal{C}(t)) <\frac{D}{2}$, then there is a time $t' (t' >t)$ such that $\mathcal{C}(t) \rightarrow^* \mathcal{C}(t')$ and (2-I) $\mathcal{C}(t')$ is a Gathering configuration, or $\mathcal{C}(t')$ has a color-configuration $BA$ or $AB$ and $\frac{D}{8} \leq \mathit{Diam}(\mathcal{C}(t')) <\frac{D}{4}$.
\item[(3)] If $\mathcal{C}(t)$ has a color-configuration $BA$ or $AB$ and $\frac{D}{8} \leq \mathit{Diam}(\mathcal{C}(t)) <\frac{D}{4}$, then there is a time $t' (t' >t)$ such that $\mathcal{C}(t) \rightarrow^* \mathcal{C}(t')$ and $\mathcal{C}(t')$ is a Gathering configuration.
\end{enumerate}
\end{lemma}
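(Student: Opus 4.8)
The plan is to prove all three items by a case analysis on the SSYNC activation subset, after first removing the non-rigidity from the picture. The decisive observation is the hypothesis $D \le 2\delta$: every configuration named in the lemma has diameter strictly below $\frac{D}{2} \le \delta$, and every destination computed by {\bf Algorithm~\ref{algo:ILG}} (an endpoint, a midpoint, or the companion point of a two-point configuration) lies inside the current convex hull, hence at distance at most the diameter, i.e.\ strictly less than $\delta$. By the model rule that a robot reaches any destination within $\delta$, no move can be truncated, so I may treat every move as rigid; the only remaining source of nondeterminism is which subset of robots the scheduler activates in a round. I will also use that a robot knowing $\delta$ can compare the single interpoint distance it observes against the threshold $\frac{D}{4}$, so its rule is well defined on each dyadic band.

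For item (1) I would argue in a single step. In $AA$ all robots share color $A$ and see the same two-point view at distance $d \in [\frac{D}{4}, \frac{D}{2})$, so every activated robot executes the same rule (recolor to $B$ and move to the midpoint $m$), while inactive robots retain color $A$ at their endpoint. I then split on how many of the two endpoints are \emph{completely} vacated. If both are vacated, all robots sit at $m$ and we are gathered (1-I). If exactly one is vacated, the surviving points are $m$ (all $B$) and the other endpoint (all $A$): a two-point configuration of type $BA$/$AB$ at distance $\frac{d}{2} \in [\frac{D}{8}, \frac{D}{4})$ (1-II). If neither is vacated, both endpoints still carry color-$A$ robots while $m$ carries the $B$-movers, giving $ABA$ with diameter unchanged at $d$ (1-III). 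Since at least one robot is active, $m$ is occupied, so these three cases are exhaustive.

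Items (2) and (3) are multi-step ($\rightarrow^*$) and I would close them with a monotone progress measure plus fairness. For (2), from $ABA$ the midpoint $B$-robots stay and each activated endpoint $A$-robot moves onto $m$ (recoloring to $B$); hence the set of robots still sitting at the two endpoints never grows, and by fairness every endpoint robot is eventually activated and leaves. After finitely many rounds some endpoint first becomes empty: if both empty in the same round we are gathered (2-I), otherwise the result is the two-point configuration consisting of $m$ (all $B$) and the remaining endpoint (all $A$) at distance $\frac{d}{2} \in [\frac{D}{8}, \frac{D}{4})$, i.e.\ $BA$/$AB$ (2-II). For (3), at distance $d' \in [\frac{D}{8}, \frac{D}{4})$, which lies below $\frac{D}{4}$, the rule sends each color-$A$ robot directly onto the unique other occupied point while the color-$B$ robots stay; by fairness the $A$-robots are eventually activated and all land on the $B$-point, so we gather. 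At each transition I would record the move length (always $< \delta$, confirming completion) and the diameter, checking that halving carries the band $[\frac{D}{4}, \frac{D}{2})$ exactly onto $[\frac{D}{8}, \frac{D}{4})$.

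The non-rigidity is deliberately \emph{not} the hard part here: it is neutralized once and for all by $D \le 2\delta$. The real care lies in two places. First, in item (1) I must be sure the activation split is genuinely exhaustive and that each branch lands in the claimed dyadic band, with the midpoint-occupancy remark ruling out any stray fourth case. Second, for the $\rightarrow^*$ claims I must supply a strictly decreasing measure (the multiset of robots remaining at endpoints) so that, together with fairness, the run cannot linger in $ABA$ or in $BA$ forever; this is where I expect the bookkeeping to be most delicate, since internal light forbids a robot from reading its neighbours' colors, so correctness must hinge solely on each robot's own color together with the geometry. Accordingly I would verify explicitly that the threshold $\frac{D}{4}$ always strictly separates the realizable diameters of the two bands, so that every robot's distance comparison, and hence its choice of rule, is unambiguous.
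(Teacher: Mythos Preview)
Your reduction to rigid moves via $D \le 2\delta$ and the case analyses for items (1) and (2) are correct and match the paper's intent (the paper itself offers no proof beyond ``easily verified''). In particular, your progress argument for (2) works because lines~4--5 recolor every moving $A$-robot to $B$, so once such a robot reaches the midpoint it falls into the empty else of line~3 and cannot leave; the multiset of endpoint robots is therefore genuinely monotone.

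Item (3), however, does not close. In the branch of lines~7--8 an $A$-robot moves to $p_f$ but is \emph{not} recolored. Hence an $A$-robot that has already reached the $B$-point $p$, if activated again while the original $A$-point $q$ is still occupied, sees two positions with $p_f = q$ and moves \emph{back}. Your sentence ``by fairness the $A$-robots are eventually activated and all land on the $B$-point'' tacitly assumes that an $A$-robot, once on $p$, stays there---but the rule does not guarantee this. Concretely, take two $A$-robots $a_1, a_2$ at $q$ and one $B$-robot $b$ at $p$: activate only $a_1$ in the first round (so $a_1 \to p$), and thereafter activate all three robots in every round. This schedule is fair, yet $a_1$ and $a_2$ swap endpoints forever while $b$ stays at $p$, so $q$ is never vacated. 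Thus the ``robots remaining at endpoints'' count is not monotone for (3) under the algorithm as printed, and your argument cannot be completed without either a more delicate invariant ruling out such schedules, or an amendment to line~8 (adding $\ell_i \leftarrow B$), after which your fairness argument would go through verbatim.
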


\Newcodeline
\begin{algorithm}[h]
\caption{Int-Light-Gather($r_i$)}
\label{algo:ILG}
{\footnotesize
\begin{tabbing}
111 \= 11 \= 11 \= 11 \= 11 \= 11 \= 11 \= \kill
{\em Assumptions}: internal-light, 2 colors (A,B), non-rigid with $\delta$, $D \leq 2\delta$, set-view, SSYNC. \crm
{\em Input}:  $|\mathcal{P}(\mathcal{SS}_i)| =2$ and $\frac{D}{4} \leq |\overline{p_np_f}| <\frac{D}{2}$ , all robots have color $A$.\crm
\crm
\Cl \> {\bf if} $\frac{D}{4} \leq |\overline{p_np_f}| <\frac{D}{2}$ {\bf then} \crm %
\Cl \>\> {\bf if} $(|\mathcal{P}(\mathcal{SS}_i)| =2)$ {\bf or}  $((|\mathcal{P}(\mathcal{SS}_i)| =3)$ {\bf and} $(|\overline{p_f p_m}|=|\overline{p_m p_n}|))${\bf then} \crm %
\Cl \> \>\> {\bf if} $\ell_i = A$ {\bf then}\crm
\Cl \> \> \>\> $\ell_i \leftarrow B$ \crm
\Cl \> \> \> \>$des_i \leftarrow p_m$ \crm
\Cl \> {\bf else} //$\frac{D}{4} > |\overline{p_np_f}| (\geq \frac{D}{8})$ \crm
\Cl \> \> {\bf if} $\ell_i =A$ {\bf then}\crm
\Cl \> \> \>$des_i \leftarrow  p_f$ \crm %
\Cl \>\> {\bf else} //$\ell_i=B$ \crm
\Cl \>\> \>$des_i \leftarrow  p_i$// stay
\end{tabbing}
}
\end{algorithm}

Next we make 2-point location with distance $d$ satisfying $D/2 > d \geq D/4$ from any 2-point locations with $d \geq D/2$. 
This adjusting task is done by {\bf Algorithm~\ref{algo:RDLDS}}. 

In {\bf Algorithm~\ref{algo:RDLDS}}, \textit{A3P} and \textit{A4P} denote the following predicates (Figure~\ref{fig:A3PA4P}). If robots on the both endpoints move, its configuration satisfies \textit{A3P} and if robots on only one endpoint move, its configuration satisfies .\textit{A4P}  

\begin{enumerate}
\item[\textit{A3P}] : $|\mathcal{P}(\mathcal{SS}_i)| =3$, and letting  $\mathcal{P}(\mathcal{SS}_i) =\{ p_n, p_m, p_f\}$,  $|\overline{p_np_{m}}|  \neq |\overline{p_{m}p_{f}}|$,
\\ $min(|\overline{p_np_m}|, |\overline{p_mp_f}|) \leq \frac{D}{2}$ and  $max(|\overline{p_np_m}|, |\overline{p_mp_f}|) > \frac{D}{4}$
\item[\textit{A4P}] :$|\mathcal{P}(\mathcal{SS}_i)| =4$. And letting  $\mathcal{P}(\mathcal{SS}_i) =\{ p_n, p_{m1}, p_{m2}, p_f\}$,  \\
$|\overline{p_np_{m1}}| = |\overline{p_{m2}p_f}|\neq |\overline{p_{m1}p_{m2}}|$, 
$|\overline{p_np_{m1}}|, |\overline{p_{m2}p_f}| \leq \frac{D}{2}$ and  $|\overline{p_{m1}p_{m2}}| \geq \frac{D}{4}$
\end{enumerate}

\begin{figure*}
  \begin{center}
    \includegraphics[scale=0.8,width=8cm]{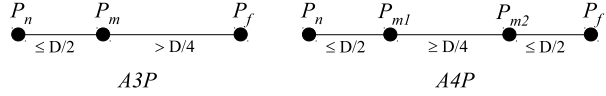}%
    \caption{Configurations in  \textit{A3P} and \textit{A4P}.}
    \label{fig:A3PA4P}
  \end{center}
\end{figure*}

\Newcodeline
\begin{algorithm}[h]
\caption{Reduce-Distance-LDS($r_i$)}
\label{algo:RDLDS}
{\footnotesize
\begin{tabbing}
111 \= 11 \= 11 \= 11 \= 11 \= 11 \= 11 \= \kill
{\em Assumptions}: non-rigid with $\delta$, $D \leq 2\delta$. \crm
{\em Input}:  $|\mathcal{P}(\mathcal{SS}_i)| =2$ and $\frac{D}{2} \leq |\overline{p_np_f}|$ \crm
\crm
\Cl \> {\bf if} $|\mathcal{P}(\mathcal{SS}_i)| =1$ {\bf then} no action \crm 
\Cl \> {\bf else if} $|\overline{p_fp_n}| \geq \frac{D}{2}$ {\bf then} \crm %
\Cl \>\> {\bf if} $|\mathcal{P}(\mathcal{SS}_i)| =2$ {\bf then} \crm %
\Cl \> \>\> {\bf if} $\frac{3D}{2}<|\overline{p_fp_n}| $ {\bf then} $des_i \leftarrow [p_n,p_f] + \frac{D}{2}$\crm
\Cl \>\> \> {\bf else if} $\frac{9D}{8} \leq  |\overline{p_fp_n}| \leq \frac{3D}{2}$ {\bf then} $des_i \leftarrow [p_n,p_f] + \frac{D}{12}$\crm
\Cl \> \> \>  {\bf else} $des_i \leftarrow  [p_n,p_f] + \frac{|\overline{p_fp_n}|}{2}-  \frac{3D}{16}$//$|\overline{p_fp_n}| < \frac{9D}{8}$ \crm
\Cl \>\>  {\bf else if} \textit{A3P} {\bf then}\crm
\Cl \>\>  \> {\bf if} $p_i=p_n$ {\bf and} $|\overline{p_ip_{m}}|  < |\overline{p_{m}p_{f}}|$ {\bf then } $des_i \leftarrow  p_m$ {\bf else} $des_i \leftarrow p_i$ \crm
\Cl \>\>  {\bf else if} \textit{A4P} {\bf then} \crm
\Cl \>\> \> {\bf if}  $p_i=p_n$ {\bf then} $des_i \leftarrow  p_{m_1}$ {\bf else} $des_i \leftarrow p_i$
\end{tabbing}
}
\end{algorithm}

In {\bf Algorithm~\ref{algo:RDLDS}}, expression $[p,q]+\alpha-\beta$ means the point from point $p$ with distance $\alpha-\beta$ on the line segment $\overline{pq}$.
If $D \leq 2\delta$, each moving in {\bf Algorithm~\ref{algo:RDLDS}} can reach the destination.
In {\bf Algorithm~\ref{algo:RDLDS}}, the final destination is a 2-point location such that the length of the $2$-point locations is at least  $D/8$ and less than $D/4$ from the initial 2-point location via configurations of  \textit{A3P}or \textit{A4P}. 
We will show that if any configuration obtained  in the algorithm are not $2$-point locations,  \textit{A3P}or \textit{A4P} holds for the configuration and therefore,
any $3$-point configuration appearing in {\bf Algorithm~\ref{algo:ILG}} does not appear. That is, {\bf Algorithm~\ref{algo:ILG}} cannot be performed during the execution of {\bf Algorithm~\ref{algo:RDLDS}}.
We have the following lemma. 

\begin{lemma}
\label{lemma:ILG}
Let $D \leq 2\delta$.
Without assuming chirality,
{\bf Algorithm~\ref{algo:RDLDS}} makes a configuration with only two positions on which all robots are located  and whose  distance $d$ satisfies $\frac{D}{4} \leq d < \frac{D}{2}$
from any configuration with only two positions on which all robots are located  and whose distance $d$ is at least $D/2$.
All configurations appearing in {\bf Algorithm~\ref{algo:RDLDS}} do not contain any 3-point location appearing in  {\bf Algorithm~\ref{algo:ILG}}.
\end{lemma}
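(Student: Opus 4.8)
The plan is to use the diameter $d=|\overline{p_np_f}|$ of the current configuration as a potential function and to show that, under any SSYNC schedule, it decreases in a controlled fashion until it first enters the band $[D/4,D/2)$, where only a two-point configuration can survive; the second (safety) assertion will then follow from the fact that the intermediate configurations are never of the symmetric three-point form on which Algorithm~\ref{algo:ILG} acts.

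First I would record the observation that, under $D\le 2\delta$, the non-rigid model behaves rigidly for this algorithm. Every destination computed in Algorithm~\ref{algo:RDLDS} lies within distance at most $3D/8$ of the acting robot: the moves of lines~4--6 have length $D/2$, $D/12$, and $|\overline{p_fp_n}|/2-3D/16<3D/8$, while the \textit{A3P}/\textit{A4P} moves of lines~8 and~10 send a robot across the shorter of the two adjacent segments, of length at most $D/2$. Since $3D/8\le\delta$, each move has length at most $\delta$, so by the movement rule every activated robot reaches its destination. Hence the adversary's only freedom is the choice of the activated subset, and no robot is ever stopped short of where the algorithm sends it.

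Next I would set up the reachability invariant. Starting from a two-point configuration with $d\ge D/2$, a full activation of both endpoints keeps two points and strictly shrinks $d$, and in the regime $D/2\le d<9D/8$ the step size $d/2-3D/16$ is chosen precisely so that this lands at $d=3D/8\in[D/4,D/2)$. A partial activation splits an endpoint: activating part of one endpoint yields a three-point configuration that I would verify satisfies \textit{A3P}, and activating part of both endpoints yields a four-point configuration satisfying \textit{A4P} (Figure~\ref{fig:A3PA4P}). Lines~7--10 then merge the split-off robots back toward the interior --- in \textit{A3P} the outer robots on the shorter side move onto the middle point, in \textit{A4P} the two outer groups move inward --- so that, by fairness, the configuration returns to two points with a strictly smaller diameter. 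Iterating, $d$ is non-increasing and strictly decreases on every completed macro-step while $d\ge D/2$; since each decrement is bounded below and above by the chosen constants, after finitely many rounds $d$ drops below $D/2$, and the branch thresholds force the first such value to lie in $[D/4,D/2)$, giving the required two-point output.

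The safety clause is then immediate from the invariant: the only three-point configurations produced satisfy \textit{A3P}, whose defining condition $|\overline{p_np_m}|\neq|\overline{p_mp_f}|$ makes them asymmetric, whereas the three-point configurations on which Algorithm~\ref{algo:ILG} acts are exactly the symmetric ones with $|\overline{p_fp_m}|=|\overline{p_mp_n}|$; every remaining intermediate configuration has four points. Moreover $d$ stays at least $D/2$ until the final step, so no configuration of the kind Algorithm~\ref{algo:ILG} processes is reached during the run of Algorithm~\ref{algo:RDLDS}. I expect the main obstacle to be this combined case analysis over SSYNC subsets: one must check, for every way the adversary can partially activate a two-point, \textit{A3P}, or \textit{A4P} configuration, both that the configuration remains inside the invariant with a strictly decreasing diameter and that the precise arithmetic of the move lengths keeps the terminal diameter in $[D/4,D/2)$ while never creating a symmetric midpoint configuration.
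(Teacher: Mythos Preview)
Your approach is essentially the paper's: both observe that $D\le 2\delta$ makes every move rigid, then follow the two-point diameter through the three step-size regimes of lines~4--6 and the intermediate \textit{A3P}/\textit{A4P} configurations, showing it decreases until it first lands in $[D/4,D/2)$ and that every three-point configuration produced is the asymmetric \textit{A3P} type rather than the symmetric midpoint form on which Algorithm~\ref{algo:ILG} acts. One arithmetic slip to correct: the maximal move length is $D/2$ (line~4, and the \textit{A3P}/\textit{A4P} merges can also be as long as $D/2$), not $3D/8$; you list $D/2$ yourself and then bound by $3D/8$, which is inconsistent---rigidity follows instead from $D/2\le\delta$, i.e.\ exactly the hypothesis $D\le 2\delta$, so the conclusion survives but the bound you invoke is not the one doing the work.
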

\begin{proof} 
Let $d$ be the distance of the two point-location in the initial configuration. 
Note that since we assume $D \leq 2\delta$, each moving in the algorithm always reaches the destination.
In the case that $d>3D/2$, since robots located on the endpoints move inside by $D/2$ ({\bf line} 4), the configuration becomes two point-location whose distance reduces by $D/2$ or $D$, otherwise it satisfies \textit{A3P} or  \textit{A4P}. The configuration of \textit{A3P} or \textit{A4P} eventually becomes two point-location whose distance reduces by $D/2$ or $D$ ({\bf lines} 7-10).
This process is called {\em reducing process}.
Repeating these reducing processes, therefore, a configuration of two point-location whose distance $d'$ is greater than $D/2$ and at most  $3D/2$ is obtained.
Also any 3-point location appearing in  {\bf Algorithm~\ref{algo:ILG}} never appear in these reducing processes.

If $\frac{9D}{16} \leq d' \leq \frac{3D}{2}$,  repeating the reducing processes at most $5$ times, the distance becomes less than $\frac{9D}{8}$ and
it is at least $\frac{D}{2}$ at the first time it becomes less than $\frac{9D}{8}$.
Also \textit{A3P} or \textit{A4P} is satisfied every time this process is executed and any 3-point location appearing in  {\bf Algorithm~\ref{algo:ILG}} never appear.
If $d < \frac{9D}{8}$, repeating the redicing processes at most $4$ times, the distance becomes at least $\frac{D}{4}$ and less than $\frac{D}{2}$ and
\textit{A3P} or \textit{A4P} is satisfied every time {\bf line}~6 is executed and any 3-point location never appears.
Therefore, the lemma holds.
\end{proof}

\Newcodeline
\begin{algorithm}[t]
\caption{ElectLDS-Preserving-Distance($r_i$)}
\label{algo:ELDSP}
{\footnotesize
\begin{tabbing}
111 \= 11 \= 11 \= 11 \= 11 \= 11 \= 11 \= \kill
{\em Assumptions}: no-light, non-rigid with $\delta$, chirality \crm
{\em Input}:  $D$-distant configuration ($D \leq 2\delta$) \crm
{\em Subroutines}: 
\textit{Reduce\#LDS}, 
\textit{MakeDiameter},
\textit{MakeEdgeonBorder}.\crm
{\em Predicates}:
\textit{AFTER$-$RP(p)},
$\mathit{EDGEonBORDER}(\mathcal{SS}_i)$. \crm \crm
\Cl \> {\bf if} $|\mathit{LDS}(\mathcal{SS}_i)|>1$ {\bf then} \crm
\Cl \> \> {\bf if} $\mathcal{H}(\mathcal{SS}_i)$ is a regular $f$-polygon {\bf then} \crm
\Cl \> \> \>{\bf if not} $CLEAN(\mathcal{SS}_i)$  {\bf then} {\bf if}  $p_i \not \in \partial\mathcal{H}(\mathcal{SS}_i)$ {\bf then} $des_i \leftarrow CTR({\bf SS}_i)$ \crm
\Cl \> \> \> {\bf else if} $\textit{Diam}(\mathcal{SS}_i) \leq D$  {\bf then} $des_i \leftarrow CTR({\bf SS}_i)$\crm
\Cl \> \> \>\>\>{\bf else} {\bf if} $p_i \in \partial{\cal H}(\mathcal{SS}_i)$  {\bf then} $des_i \leftarrow [CTR({\bf SS}_i),p_i]- \frac{D}{2}$\crm
\Cl \> \> {\bf else} // $\mathcal{H}(\mathcal{SS}_i)$ is a non-regular $f$-polygon\crm
\Cl \> \> \>{\bf if} $\textit{Diam}(\mathcal{SS}_i) \leq D$ {\bf and} $AFTER$-$RP(p)$ {\bf then} $des_i \leftarrow p$\crm
\Cl \> \> \>{\bf else if}  $EDGEonBORDER(\mathcal{SS}_i)$ {\bf then}\crm
\Cl \> \> \>\>{\bf if} $\exists \overline{pq} \in \mathit{LDS}(\mathcal{SS}_i)$ s.t. $|\overline{pq}|= \textit{Diam}(\mathcal{SS}_i)$ {\bf then}\crm
\Cl \> \> \> \> \>{\bf if} $\forall p_j \exists q (\overline{p_jq} \in \mathit{LDS}(\mathcal{SS}_i))$  {\bf then} $Reduce\#LDS$ \crm
\Cl \> \> \> \>\> {\bf else}  {\bf if} $\forall q ( \overline{p_i q} \not \in \mathit{LDS}(\mathcal{SS}_i))$ {\bf then} $des_i \leftarrow CTR(\mathcal{SS}_i)$\crm
\Cl \> \> \> \>{\bf else if}  $p_i$ is a single-endpoint of \textit{LDS} {\bf then} $MakeDiameter$ \crm
\Cl \> \> \>{\bf else if} $\overline{p_iq} \in \mathit{LDS}(\mathcal{SS}_i)$ {\bf and} $q \in SEC(\mathcal{SS}_i)$ {\bf then} $MakeEdgeonBorder$ \crm
\Cl \> {\bf else} // $|\mathit{LDS}({\cal SS}_i)|=1$ \label{EPD-14}\crm
\Cl \> \> {\bf if} $p_i \neq p_n$ {\bf then} $des_i \leftarrow p_n$\label{EPD-15}
\end{tabbing}
}
\end{algorithm}

Lastly, the problem to be remained is to make the special configuration preserving the length of its \textit{LDS} is at least $D/2$
from any $D$-distant initial configuration.
Although \textit{ElectOneLDS} in Lemma~\ref{lemma:ElectLDS} makes the unique \textit{LDS} from any initial configuration,
it does not preserve the  length of its \textit{LDS} with   $d \geq D/2$ even if we assume $D$-distant initial configurations.
We will prove that {\bf Algorithm~\ref{algo:ELDSP}} performs this task from any $D$-distant configuration.

{\bf Algorithm~\ref{algo:ELDSP}} is based on \textit{ElectOneLDS} and produces the unique \textit{LDS} with its length at least $D/2$ unless it produces a Gathering configuration. 
If the unique \textit{LDS} is obtained, it produces a configuration with only two positions on which all robots are located and its distance $d$ is at least $D/2$ ({\bf lines}~14-15).

In {\bf Algorithm}~\ref{algo:ELDSP} the following  subroutines and predicates are utilized.

\medskip

\noindent{\bf Subroutines}

\vspace*{-4mm}

\begin{itemize}
\item \textit{Reduce\#LDS}: Let $q_0, \ldots, q_{g-1}$ be $g$ positions on $\mathit{SEC}(\mathcal{SS}_i)$ in counter-clockwise order.
Let $CS=\{ \overline{q_j q_{(j+1) \bmod g}} | |\overline{q_j q_{(j+1) \bmod g}}|= \min(|\overline{q_0q_1}| ,\ldots, |\overline{q_{g-1}q_0}|) \}$
If $ \overline{q_j q_{(j+1) \bmod g}} \in CS$, and $p_i = q_j$,
then $r_i$ moves from $p_i$ to $q=q_{(j+1) \bmod g}$ on arc 
$\overarc{p_i q}$\footnote{Since robots cannot move on arc, they move on the line segment of length $\delta$ repeatedly. That is, if $|\overline{p_i des_i}| \leq \delta$, then $r_i$ moves to the destination. Otherwise, the destination is the intersection of the circle with center $p_i$ and radius $\delta$, and $\mathit{SEC}(\mathcal{SS}_i)$.}. 

\item \textit{MakeDiameter}: If $p_i$ is the single-endpoint of \textit{LDS} and $q$ is another endpoint of the \textit{LDS},
let $des_i$ be the intersection of line passing $q$ and $CTR(\mathcal{SS}_i)$ and $\mathit{SEC}(\mathcal{SS}_i)$.
Note that the \textit{LDS} is not a diameter of $\mathit{SEC}(\mathcal{SS}_i)$.
Robot $r_i$ moves from $p_i$ to $des_i$ on arc $\overarc{p_i des_i}$  (Figure~\ref{fig:MakeDiameter}).

\item \textit{MakeEdgeonBorder}: If $\overline{p_iq} \in \mathit{LDS}(\mathcal{SS}_i)$ and $q \in \mathit{SEC}(\mathcal{SS}_i)$,
let $des_i$ be the intersection of half-line extending the line segment $\overline{p_iq}$ from $p_i$ and $\mathit{SEC}(\mathcal{SS}_i)$.
\end{itemize}

\begin{figure*}
  \begin{center}
    \includegraphics[scale=0.4,width=5cm]{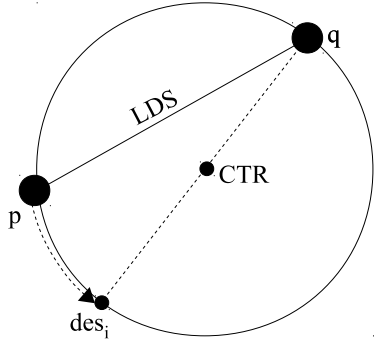}%
    \caption{Subroutine \textit{MakeDiameter}.}
    \label{fig:MakeDiameter}
  \end{center}
\end{figure*}

\noindent {\bf Predicates}

\vspace*{-4mm}

\begin{itemize}
\item $\mathit{CLEAN}(\mathcal{SS}_i)$: All robots are located on $CTR(\mathcal{SS}_i)$ or $\mathit{SEC}(\mathcal{SS}_i)$.
This configuration can be reached from Reg.Polygon ({\bf line} 3).
\item \textit{AFTER$-$RP(p)}: There exists a unique point $p$ such that all robots are located on $p$ and the circle with center $p$. 
This configuration can be reached after a regular polygon is broken and becomes $\textit{CLEAN}(\mathcal{SS}_i)$.
\item $\mathit{EDGEonBORDER}(\mathcal{SS}_i)$: Both endpoints of all \textit{LDS}s are located on $\mathit{SEC}(\mathcal{SS}_i)$.
\end{itemize}

\begin{figure*}
  \begin{center}
    \includegraphics[width=0.9\textwidth, keepaspectratio]{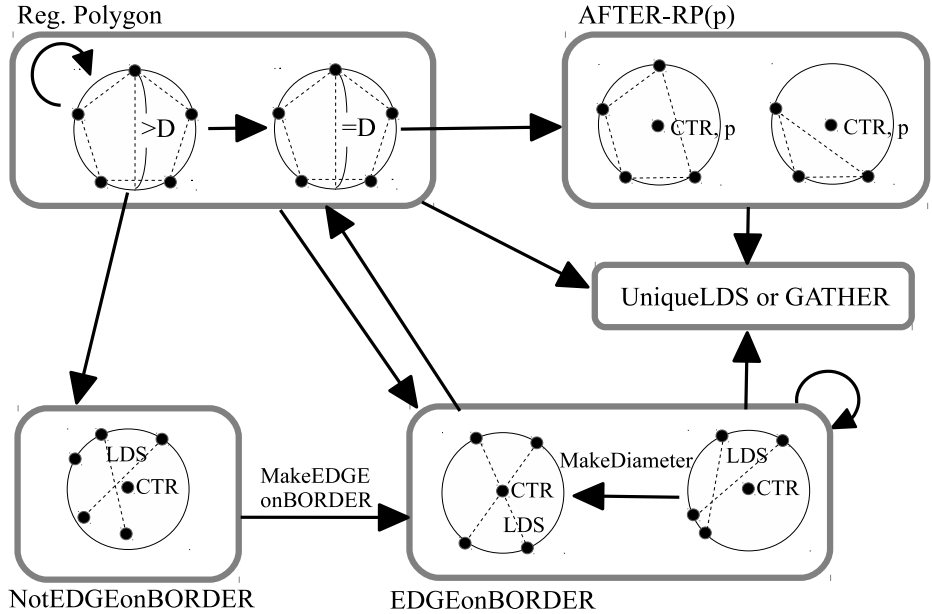} 
        \caption{The outline of ElectLDS-Preseving-Distance.}
    \label{fig:Int-light}
  \end{center}
\end{figure*}

The outline of {\bf Algorithm~\ref{algo:ELDSP}} is shown in Figure~\ref{fig:Int-light}. 
There are typical non-final configurations, Regular Polygon (\textit{Reg.Polygon}), Edges on the Border (\textit{EDGEonBORDER}) 
and Edges not on the Border (\textit{NotEDGEonBORDER}), where
\textit{Reg.Polygon} is a configuration the convex hull is a regular polygon,
\textit{EDGEonBORDER} is a configuration both endpoints of all \textit{LDS}s are located on the smallest enclosing circle and \textit{NotEDGEonBORDER} is a configuration both endpoints of some \textit{LDS} are not located on the smallest enclosing circle.

When $|\mathit{LDS}({\cal SS}_i)|=1$, {\bf Algorithm~\ref{algo:ELDSP}}  makes $2$-point configuration ({\bf line} 14).
Otherwise, if a configuration is Reg.Polygon whose diameter is more than $D$, 
{\bf Algorithm~\ref{algo:ELDSP}} makes a configuration of a regular polygon with diameter $D$, where robots are located on the corners and the center of the polygon (\textit{CLEAN}).  
In the process of making the regular polygon with diameter $D$ or after making the regular polygon, it makes a regular polygon with smaller number of corners, \textit{EDGEonBORDER}, or \textit{NotEDGEonBORDER}.
If the regular polygon with diameter $D$ is obtained, the algorithm
makes each robot located on a corner of the polygon move to the center of the \textit{SEC} and
makes a gathering configuration or a configuration with only one \textit{LDS} via configurations satisfying \textit{AFTER$-$RP(p)}, 
which is one that some corners are missing from the regular polygon. 

If \textit{EDGEonBORDER} is obtained and its \textit{LDS} is the diameter of the \textit{SEC} ({\bf lines}~8-9), 
the algorithm reduces the number of \textit{LDS}s after moving each robot which is located on a vertex of the polygon but is not an endpoint of any \textit{LDS} to the center of the polygon ({\bf  lines}~10-11).
If \textit{EDGEonBORDER} is obtained but its \textit{LDS} is not the diameter of the \textit{SEC}, its \textit{LDS} is changed into the diameter of the \textit{SEC} by moving the robot which is a single-endpoint of \textit{LDS} ({\bf line}~12). 
Existing such robots is guaranteed by Lemma~\ref{lemma:LDSofCH7}.

If \textit{NotEDGEonBORDER}  is obtained, the robot not located on the \textit{SEC} is moved on the \textit{SEC} ({\bf line}~13).

Let $\mathcal{C}(t)$ be a configuration at time $t$ in {\bf Algorithm}~\ref{algo:ELDSP}.
If  $\mathcal{H}(\mathcal{C}(t))$ is a regular $f$-polygon and $\mathit{CLEAN}(\mathcal{C}(t))$ does not hold,
the number of robots located on \textit{CTR}($\mathcal{C}(t+1)$) increased ({\bf line}~3) and 
therefore, there exists a time $t'$ such that $\mathcal{H}(\mathcal{C}(t'))$ is a regular $f$-polygon and $\mathit{CLEAN}(\mathcal{C}(t'))$ holds.

\begin{lemma}\label{lemma_case_regpolyGTD}
If $\mathcal{H}(\mathcal{C}(t))$ is a regular $f$-polygon, $\mathit{Diam}(\mathcal{C}(t)) > D$ and $\mathit{CLEAN}(\mathcal{C}(t))$ holds, one of the followings is satisfied.
\begin{enumerate}
\item $\mathcal{H}(\mathcal{C}(t))=\mathcal{H}(\mathcal{C}(t+1))$ but the number of robots located on \textit{SEC}$(\mathcal{C}(t+1))$ decreases.
\item $\mathcal{H}(\mathcal{C}(t+1))$ is the regular $f$-polygon with $\mathit{Diam}(\mathcal{C}(t+1)) = D$.
\item $\mathcal{H}(\mathcal{C}(t+1))$ is a regular $f'$-polygon such that  $f' \leq f$ and $\mathit{Diam}(\mathcal{C}(t)) \geq \mathit{Diam}(\mathcal{C}(t+1)) > D$.
\item $\mathcal{H}(\mathcal{C}(t+1))$ is a non-regular $f'$-polygon such that $f' \leq f$ or $\mathit{Diam}(\mathcal{C}(t))>\mathit{Diam}(\mathcal{C}(t+1)) \geq D$.
\end{enumerate}
\end{lemma}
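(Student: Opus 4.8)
The plan is to read the one-round behaviour of each robot directly off {\bf Algorithm~\ref{algo:ELDSP}} under the three hypotheses, and then case-split on the adversarially chosen set of robots activated in the round $t \to t+1$. Since $\mathcal{H}(\mathcal{C}(t))$ is a regular $f$-polygon we are inside the branch of {\bf lines}~2--5; since $\mathit{CLEAN}(\mathcal{C}(t))$ holds the guard of {\bf line}~3 fails, and since $\mathit{Diam}(\mathcal{C}(t)) > D$ the guard of {\bf line}~4 fails, so the only instruction that can fire is {\bf line}~5. Hence an active robot at a vertex ($p_i \in \partial\mathcal{H}(\mathcal{C}(t))$, which lies on $\mathit{SEC}$) moves inward along $\overline{\mathit{CTR}\,p_i}$ by $D/2$, while every robot at the centre stays (it is not on the boundary, so none of {\bf lines}~3--5 moves it). Because $D \leq 2\delta$ we have $D/2 \leq \delta$, so each such step is completed exactly, with no non-rigid shortfall, exactly as already observed for the earlier algorithms. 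Thus the only data I must track are, for each of the $f$ vertices, whether it keeps an inactive robot or is fully vacated, together with the positions of the inward-moved robots on the concentric circle of radius $R-D/2$, where $R$ is the old circumradius.

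I would then split on the activation pattern, calling a vertex \emph{retained} if it still holds a robot after the round and \emph{vacated} otherwise.
\begin{itemize}
\item If all $f$ vertices are retained but at least one boundary robot moved, the outermost robots are still exactly the original $f$ vertices, so $\mathcal{H}(\mathcal{C}(t+1)) = \mathcal{H}(\mathcal{C}(t))$ while at least one robot has left $\mathit{SEC}$; this is case~1.
\item If every boundary robot is active (all vertices vacated), the moved robots sit at the original vertex directions on the circle of radius $R-D/2$, hence form a regular $f$-polygon concentric with the old one, with diameter a fixed fraction of the circumradius and therefore strictly smaller than $\mathit{Diam}(\mathcal{C}(t))$. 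According as this diameter equals $D$ or still exceeds $D$ we land in case~2 or in case~3 with $f'=f$.
\item If a proper, nonempty set of vertices is vacated, the new hull is spanned by the retained vertices (on radius $R$) together with those inward-moved robots not dominated by them. Here I would use the elementary convex-hull facts of Lemmas~\ref{lemma:LDSofCH1}--\ref{lemma:LDSofCH5} to identify the vertices of $\mathcal{H}(\mathcal{C}(t+1))$ and bound $\mathit{Diam}(\mathcal{C}(t+1))$: if the retained vertices are symmetric so that the surviving hull is again regular, one gets a regular $f'$-polygon with $f' < f$ and non-increasing diameter (case~2 or case~3), and otherwise the hull is non-regular with $f' \leq f$ (case~4).
\end{itemize}

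For the monotonicity built into cases~2--4, note that no robot ever moves outward, so every robot at time $t+1$ lies in the disk of radius $R$; consequently $\mathit{SEC}(\mathcal{C}(t+1))$ has radius at most $R$ and $\mathit{Diam}(\mathcal{C}(t+1)) \leq \mathit{Diam}(\mathcal{C}(t))$, and no vertex can be created outside the old polygon, giving $f' \leq f$. This disposes of the global non-increase claims uniformly and reduces each case to checking where the diameter falls relative to $D$.

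The main obstacle I expect is the third bullet: when only some vertices are vacated, deciding precisely which inward-moved robots survive as hull vertices, and computing the resulting diameter, is a genuine planar-geometry argument that must be carried out for every symmetry type of the retained-vertex set, and one must verify the outcome is correctly classified as regular (case~2 or~3) versus non-regular (case~4). The remaining work---reading off the per-robot behaviour and handling the all-retained and fully-activated patterns---is routine.
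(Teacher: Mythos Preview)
Your overall case-split is the same as the paper's, but you have misread the destination computed on {\bf line}~5 of {\bf Algorithm~\ref{algo:ELDSP}}, and this misreading breaks several steps of your argument. The paper's own proof states explicitly that ``the destination of robot $r_i$ located on $\partial\mathcal{H}(\mathcal{C}(t))$ is the point of distance $\frac{D}{2}$ from $\mathit{CTR}(\mathcal{C}(t))$ on the line segment $\overline{p_i\,\mathit{CTR}(\mathcal{C}(t))}$.'' In other words, an active boundary robot moves to radius $D/2$ from the centre, not inward \emph{by} $D/2$ to radius $R - D/2$ as you write.

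This matters in three places. First, the actual step length is $R - D/2$, which can be arbitrarily large since only $\mathit{Diam}(\mathcal{C}(t)) = 2R > D$ is assumed; hence your claim that $D/2 \leq \delta$ forces the move to be completed exactly is wrong, and non-rigid shortfalls must be accounted for. The paper's proof folds these shortfalls into the ``otherwise'' that yields cases~3 and~4. Second, in your fully-activated bullet, the robots reaching their destinations land on the circle of radius $D/2$, so the resulting regular $f$-polygon has diameter exactly $D$: this is always case~2, never case~3. Under your reading the new diameter would be $2R - D$, which for $D < 2R < 2D$ is strictly less than $D$ and would not fall under \emph{any} of the four alternatives, so your own case analysis does not close. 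Third, the lower bound $\mathit{Diam}(\mathcal{C}(t+1)) \geq D$ appearing in cases~3 and~4 is precisely because every robot ends the round at radius at least $D/2$ (the destination radius); with destinations at radius $R - D/2$ that bound would fail.

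Once the destination is read correctly, the argument is as short as the paper's: if every vertex still carries an inactive robot, case~1; if every boundary robot is active and actually reaches radius $D/2$, case~2; in every other situation (some vertex vacated, or some robot stopped en route) the hull is a polygon on at most $f$ vertices, all at radius between $D/2$ and $R$, giving cases~3 or~4 with $\mathit{Diam}(\mathcal{C}(t+1)) \in [D,\mathit{Diam}(\mathcal{C}(t))]$. Your elaborate third bullet, classifying which inward-moved robots survive as hull vertices and checking regularity symmetry-type by symmetry-type, is not needed for the lemma as stated.
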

\begin{proof}
In this case, the destination of robot $r_i$ located on $\partial\mathcal{H}(\mathcal{C}(t))$ is the point of distance $\frac{D}{2}$ from $\mathit{CTR}(\mathcal{C}(t))$ on the line segment $\overline{p_i \mathit{CTR}(\mathcal{C}(t))}$ (Figure~\ref{fig_Lemma23_23}). 
If all vertices on $\partial\mathcal{H}(\mathcal{C}(t))$ remains, 1. holds.
If all robots move to the destinations, 2. holds. Otherwise, 3. or 4. holds according to movements of robots.
In the both case, the diameter of the polygon decreases but is at least $D$.  
\end{proof}

\begin{figure*}
\begin{center}
    \includegraphics[width=0.4\textwidth, keepaspectratio]{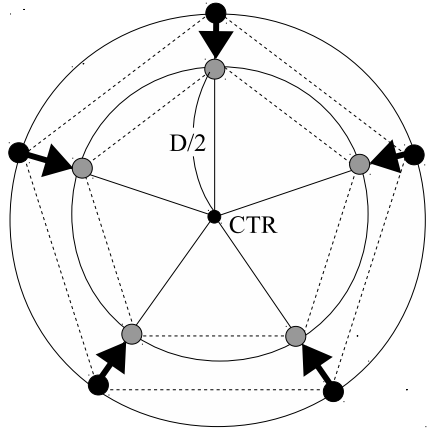}
\end{center}
\caption{Movements of robots in Lemma~\ref{lemma_case_regpolyGTD}.}\label{fig_Lemma23_23}
\end{figure*}

By Lemma~\ref{lemma_case_regpolyGTD}, there exists a time $t$ such that $\mathcal{H}(\mathcal{C}(t))$ is a regular $f$-polygon ($f \geq 2$), $\mathit{Diam}(\mathcal{C}(t)) = D$ and \textit{CLEAN}($\mathcal{C}(t))$ holds. If $f=2$, then $\mathit{LDS}(\mathcal{C}(t))=1$ and 2-point configuration is attained  
with $\mathit{Diam}(\mathcal{C}(t)) =D$ ({\bf line}~14-15).

\begin{lemma}
If $\mathcal{H}(\mathcal{C}(t))$ is a regular $f$-polygon ($f \geq 3$), $\mathit{Diam}(\mathcal{C}(t)) = D$ and $\textit{CLEAN}(\mathcal{C}(t))$ holds, one of the followings is satisfied, and when $f \neq 3m (m>1)$, 4. and 5. never occur. 
\begin{enumerate}
\item $\mathcal{H}(\mathcal{C}(t))=\mathcal{H}(\mathcal{C}(t+1))$ but the number of robots located on $\mathit{SEC}(\mathcal{C}(t+1))$ decreases.
\item $\mathcal{C}(t+1)$ attains Gathering, or $\mathit{LDS}(C(t+1))=1$ such that $\mathit{Diam}(\mathcal{C}(t+1)) \geq D/2$.
\item $\mathcal{C}(t+1)$ satisfies  \textit{AFTER$-$RP(p)} and $p=\mathit{CTR}(\mathcal{C}(t))$.
\item $\mathcal{H}(\mathcal{C}(t+1))$ is an isosceles triangle with base of $D/2$.
\item $\mathcal{H}(\mathcal{C}(t+1))$ is a rhombus with two isosceles triangles with base of $D/2$.
\end{enumerate}
\end{lemma}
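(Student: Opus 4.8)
The plan is to read off the one-round behaviour of {\bf Algorithm~\ref{algo:ELDSP}} on this configuration and then classify the outcome according to which vertices are vacated. First I would determine the branch each active robot executes. Since $f\ge 3$, a regular polygon has several longest diagonals, so $|\mathit{LDS}(\mathcal{SS}_i)|>1$ and we enter the branch of {\bf line}~1; since $\mathcal{H}(\mathcal{SS}_i)$ is the regular $f$-polygon and $\mathit{CLEAN}$ holds, {\bf line}~3 is skipped; and since $\mathit{Diam}(\mathcal{SS}_i)=D\le D$, every active robot sets $des_i\leftarrow \mathit{CTR}(\mathcal{SS}_i)$ at {\bf line}~4. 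Because $D\le 2\delta$, the travelled distance $D/2$ is at most $\delta$, so the movement is effectively rigid and each active robot actually reaches the centre. Thus in one round every active vertex-robot moves to $\mathit{CTR}(\mathcal{C}(t))$, the centre-robots stay, and no robot leaves the centre.

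Next I would parametrise the round by the set $V'$ of vertices that keep at least one (inactive) robot, observing that $\mathit{CTR}(\mathcal{C}(t))$ is occupied in $\mathcal{C}(t+1)$ whenever at least one vertex is completely vacated. I then split on $|V'|$: if $V'$ is the whole vertex set and at least one robot moved inward, the convex hull is unchanged while the number of robots on $\mathit{SEC}$ strictly drops, giving 1.; if $V'=\emptyset$ all robots coincide at the centre and Gathering is attained, a sub-case of 2. The substance is the range $1\le|V'|\le f-1$, where the occupied points are $\mathit{CTR}(\mathcal{C}(t))$ together with the retained vertices, all lying on the original circle of radius $D/2$ about $\mathit{CTR}(\mathcal{C}(t))$. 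Because the centre is occupied and every retained vertex is at distance $D/2$ from it, the pair consisting of the centre and any retained vertex already forces $\mathit{Diam}(\mathcal{C}(t+1))\ge D/2$.

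Within this range I would use Lemmas~\ref{lemma:LDSofCH3}--\ref{lemma:LDSofCH7} to read off the longest-segment structure. When the retained points are ``lopsided'' (in particular when $|V'|=1$, or when two retained vertices are separated by a chord strictly longer than the radius $D/2$), there is a single longest segment, so $|\mathit{LDS}(\mathcal{C}(t+1))|=1$ and, with the diameter bound above, we land in 2. Otherwise the natural circle-centre $\mathit{CTR}(\mathcal{C}(t))$ is occupied and carries all retained vertices on its radius-$D/2$ circle, so \textit{AFTER$-$RP}$(\mathit{CTR}(\mathcal{C}(t)))$ holds, giving 3.---\emph{provided} this centre is the unique point witnessing the predicate. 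The exceptional outcomes 4. and 5. are exactly the configurations in which this uniqueness fails: if two retained vertices are at mutual distance equal to the radius $D/2$, then the centre and those two vertices are pairwise $D/2$ apart, so each of the three is an equally legitimate circle-centre. Two such vertices together with the centre form the equilateral triangle (an isosceles triangle with base $D/2$) of 4., and three such consecutive vertices together with the centre form the rhombus built from two of these triangles sharing a side of length $D/2$, which is 5.

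Finally I would establish the divisibility restriction. A chord of the regular $f$-polygon inscribed in the circle of radius $D/2$ has length exactly $D/2$ iff the central angle it subtends is $60^\circ$, i.e. iff $60^\circ$ is an integer multiple of $360^\circ/f$; this forces $f$ to be a multiple of $6$, and in particular $f=3m$. Hence if $f\neq 3m$ no two vertices lie at distance $D/2$, the uniqueness of \textit{AFTER$-$RP} never breaks in the way described, and 4. and 5. cannot occur; the case $m=1$ (i.e. $f=3$) is excluded separately, since then the only chord length is $D\sqrt{3}/2\neq D/2$. The main obstacle is the exhaustiveness of the middle step: one must verify that for \emph{every} retained set $V'$ the broken configuration is captured by one of 2.--5., and in particular pin down exactly when the occupied centre fails to be the unique witness of \textit{AFTER$-$RP}. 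The geometric bookkeeping supplied by Lemmas~\ref{lemma:LDSofCH3}--\ref{lemma:LDSofCH7} is what makes this case analysis tractable.
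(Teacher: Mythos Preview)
Your plan is correct and tracks the paper's own argument closely: both proofs first observe that every active robot executes {\bf line}~4 and, because $D\le 2\delta$, reaches $\mathit{CTR}(\mathcal{C}(t))$ rigidly; both then case-split on which vertex positions survive. The paper is terser---it lists $|V'|\in\{0,1\}$ and the diametric $|V'|=2$ case as case~2, and for the remainder simply asserts that case~3 holds unless the surviving points happen to form the triangle or rhombus of cases~4--5.

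Where you go beyond the paper is in explaining \emph{why} 4.\ and 5.\ are singled out: you identify them as precisely the configurations where $\mathit{CTR}(\mathcal{C}(t))$ fails to be the unique witness of \textit{AFTER$-$RP}, because some surviving vertex $v$ has every other occupied point at distance exactly $D/2$. This is a cleaner structural reason than the paper gives. Your divisibility argument is also sharper: a chord of length $D/2$ forces a $60^\circ$ central angle, hence $6\mid f$, which of course implies the paper's stated (but looser) condition $f=3m$ with $m>1$.

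Two small remarks. First, the appeal to Lemmas~\ref{lemma:LDSofCH3}--\ref{lemma:LDSofCH7} is not really needed here: those lemmas concern \textit{LDS} endpoints lying on $\mathit{SEC}$, whereas the relevant geometry in $\mathcal{C}(t+1)$ is just ``centre plus some points on a fixed circle,'' and the case analysis can be done directly. Second, your ``lopsided'' criterion for case~2 (some chord strictly longer than $D/2$) is not by itself sufficient for a \emph{unique} \textit{LDS}; but this does no harm, since any such configuration with multiple longest segments still satisfies \textit{AFTER$-$RP} uniquely at the centre and so falls under case~3. The exhaustiveness you flag as the main obstacle is exactly what the paper leaves as ``easily verified.''
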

\begin{proof}
In this case, the destination of robot $r_i$ located on $\partial\mathcal{H}(\mathcal{C}(t))$ is \textit{CTR}$(\mathcal{C}(t))$.
Since $D \leq 2 \delta$, active robots always reach the destination. 
If all points on $\partial\mathcal{H}(\mathcal{C}(t))$ remains, 1. holds. If all robots, robots located on all points except one point, or all points except two points constituting a diameter of  $\mathit{SEC}(\mathcal{C}(t))$ move to the destination, 2. holds.

Otherwise, 
as long as $f \neq 3m (m>1)$ or $3$, any $2$ points on $\partial\mathcal{H}(\mathcal{C}(t))$ and $\mathit{CTR}(\mathcal{C}(t))$ do not constitute 
an isosceles triangle with base of $D/2$. Thus, it is easily verified that 3. holds and 4. and 5. are not satisfied.
In the case that $f = 3m (m>1)$,   if 4. and 5. are not satisfied, it is easily verified that 3. holds.
\end{proof}

The following lemmas are easily verified.

\begin{lemma}
Assume that $\mathcal{H}(\mathcal{C}(t))$ is a non-regular polygon, $\mathit{Diam}(\mathcal{C}(t)) = D$ and \textit{AFTER$-$RP(p)} holds. 
Then one of the followings holds
\begin{enumerate}
    \item  $\mathit{Diam}(\mathcal{C}(t+1)) = D$ and \textit{AFTER$-$RP(p)} still holds, 
    where $p=\mathit{CTR}(\mathcal{C}(t))$, but the number of robots located on $\mathit{SEC}
    (\mathcal{C}(t+1))$ decreases.
    \item  $\mathcal{C}(t+1)$ is a Gathering configuration.
    \item  $|\mathit{LDS}(\mathcal{C}(t+1))|=1$ and 
    $\mathit{Diam}(\mathcal{C}(t+1)) \geq \frac{D}{2}$.
    \item   $\mathcal{H}(\mathcal{C}(t+1))$ is an isosceles triangle with 
    base of $D/2$ or a rhombus with two isosceles triangles with base of $D/2$.
\end{enumerate}
\end{lemma}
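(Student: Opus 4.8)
The plan is to read the transition straight off the algorithm. Under the hypotheses we have $\mathit{Diam}(\mathcal{C}(t))=D$, hence $\mathit{Diam}(\mathcal{SS}_i)\le D$, and \textit{AFTER$-$RP(p)} holds, so every activated robot executes \textbf{line}~7 and sets $des_i\leftarrow p=\mathit{CTR}(\mathcal{C}(t))$. Since $D\le 2\delta$, exactly as in the preceding proofs each such robot actually reaches $p$. Thus the only effect of a round is to move some robots from the circle of radius $D/2$ into its center $p$: the center stays occupied, $p$ is unchanged, and each occupied point of $\mathit{SEC}(\mathcal{C}(t))$ either keeps at least one (inactive) robot or is vacated entirely. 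It therefore suffices to classify $\mathcal{C}(t+1)$ by the set $R$ of circle points that remain occupied.

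First I would dispose of the cases in which $p$ is still interior to $R$. If $R$ is not contained in any open half-plane through $p$ — in particular if no circle point is vacated but some multiply-occupied point sheds robots to $p$, or if only an antipodal pair survives — then the smallest enclosing circle is still the radius-$D/2$ circle about $p$, every robot again lies on $p$ or on that circle, and the number of robots on $\mathit{SEC}$ has strictly decreased; this is outcome~1. At the opposite extreme, if every circle point is vacated we obtain Gathering (outcome~2), and if exactly one circle point $v$ survives we are left with the two locations $p,v$ at distance $D/2$, so $|\mathit{LDS}(\mathcal{C}(t+1))|=1$ and $\mathit{Diam}(\mathcal{C}(t+1))=D/2\ge D/2$ (outcome~3).

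The substantive work is the remaining case, where $R$ is clustered inside a half-plane through $p$, so that $p$ becomes a hull vertex and $\mathcal{H}(\mathcal{C}(t+1))$ is a small polygon through the center. Here I would use that the survivors are vertices of the regular $f$-gon from which \textit{AFTER$-$RP(p)} was produced, so each lies at distance exactly $D/2$ from $p$ and any two subtend a central angle that is a multiple of $2\pi/f$. For two survivors $v_a,v_b$ the hull is the isosceles triangle $p v_a v_b$ with legs $D/2$; when the subtended angle exceeds $\pi/3$ the base $D\sin(\theta/2)$ is the unique longest segment, and a direct computation of the circumdiameter (or of the base, when the angle at $p$ is obtuse) gives $\mathit{Diam}(\mathcal{C}(t+1))\ge D/2$, so we land in outcome~3. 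The borderline angle $\pi/3$ yields the equilateral triangle of side $D/2$, and three consecutive $\pi/3$-spaced survivors together with $p$ yield the rhombus built from two such equilateral triangles; these are precisely the two shapes of outcome~4.

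The main obstacle is exactly this geometric bookkeeping: I must verify that no clustered survivor set escapes outcomes~1--4, and in particular that the seemingly dangerous configurations in which two legs of length $D/2$ tie as the longest segment (subtended angle below $\pi/3$) do not constitute a new type. I expect to settle this by reducing to the structure already established for the regular-polygon step — the single-endpoint guarantee of Lemma~\ref{lemma:LDSofCH7} together with the $f=3m$ spacing analysis of the preceding regular-$f$-polygon lemma — which forces the only stable residual shapes to be the isosceles triangle and the rhombus of outcome~4, while every other clustering either preserves \textit{AFTER$-$RP(p)} with $\mathit{Diam}=D$ (outcome~1) or produces a unique \textit{LDS} of length at least $D/2$ (outcome~3). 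Once that reduction is in place the four cases are exhaustive and the lemma follows.
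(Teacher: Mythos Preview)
The paper gives no proof for this lemma; it is grouped with the next one under ``The following lemmas are easily verified.'' Your classification by the set $R$ of surviving circle points is the natural approach and is already more detailed than anything the paper offers.

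The obstacle you flag, however, is real, and the reduction you propose does not close it. Take the circle points of $\mathcal{C}(t)$ to be the octagon vertices at angles $0$, $\pi/4$, $\pi$ (a legitimate $\textit{AFTER--RP}(p)$ configuration with $\mathit{Diam}(\mathcal{C}(t))=D$ and non-regular hull). If the round vacates only the point at angle $\pi$, then $\mathcal{H}(\mathcal{C}(t+1))$ is the isosceles triangle $p\,v_0\,v_{\pi/4}$ with legs $D/2$ and base $D\sin(\pi/8)<D/2$. Here $\mathit{Diam}(\mathcal{C}(t+1))=D/(2\cos(\pi/8))<D$, so outcome~1 fails; the two legs tie as the longest segments, so $|\mathit{LDS}|=2$ and outcome~3 fails; and the base is not $D/2$, so outcome~4 fails on its literal reading. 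Lemma~\ref{lemma:LDSofCH7} only supplies a single-endpoint of an \textit{LDS}, which is already obvious here and forces none of the listed outcomes; the $f=3m$ clause of the preceding lemma likewise does nothing for $f=8$. Note also that case~3 of the preceding lemma asserts only that $\textit{AFTER--RP}(p)$ holds, \emph{without} the $\mathit{Diam}=D$ requirement that outcome~1 here adds, so you cannot simply import that case.

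What appears to be intended is either that outcome~4 cover every isosceles triangle with \emph{legs} $D/2$ through $p$ (and the corresponding rhombi), or that outcome~1 drop the $\mathit{Diam}=D$ clause and retain only ``$\textit{AFTER--RP}(p)$ still holds,'' matching case~3 of the preceding lemma. Under either reading your case split is exhaustive, line~7 continues to apply, and the next lemma finishes the job; but as the four outcomes are literally stated they are not exhaustive, and no appeal to Lemma~\ref{lemma:LDSofCH7} or to the $3m$ spacing repairs this.
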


\begin{lemma}
If $\mathcal{H}(\mathcal{C}(t))$ is an isosceles triangle with base of $D/2$ or a rhombus with two isosceles triangles with base of $D/2$,
there is a time $t' \geq t$ such that $\mathcal{C}(t')$ attains Gathering or $|\mathit{LDS}(\mathcal{C}(t'))|=1$ and $\mathit{Diam}(\mathcal{C}(t')) \geq D/2$.
\end{lemma}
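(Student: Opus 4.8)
The plan is to treat both special configurations exactly as the underlying \textit{ElectOneLDS} reduction treats an arbitrary non-regular polygon, and to verify separately the single property that \textit{ElectOneLDS} does not by itself guarantee, namely that $\mathit{Diam}$ never drops below $D/2$. Both hulls here are non-regular polygons with at most four boundary vertices (together, possibly, with robots accumulated at the center from the preceding \textit{CLEAN}/\textit{AFTER$-$RP} stage), so every active robot executes lines~6--13 of {\bf Algorithm~\ref{algo:ELDSP}}. First I would pin down the longest-distance segments in each shape by a short computation on the given proportions: for the isosceles triangle with base $D/2$ the \textit{LDS} is the pair of equal legs (two segments sharing the apex) or the base itself, depending on the proportions, and for the rhombus built from two such triangles the short diagonal has length $D/2$ while the \textit{LDS} is the long diagonal (or, in a ``fat'' sub-case, the four equal sides). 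Whatever the outcome of that computation, the essential and proportion-independent fact is that the base of length $D/2$ is an actually realized inter-robot distance, so by Lemma~\ref{lemma:LDSofCH1} we have $\mathit{Diam}(\mathcal{C}(t)) \geq |\mathit{LDS}| \geq D/2$ as long as that base pair survives.

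Next I would trace the reduction and show it preserves the diameter-realizing pair. Since the relevant \textit{LDS} endpoints lie on $\mathit{SEC}(\mathcal{C}(t))$ (this is the acute-versus-obtuse $\mathit{SEC}$ check I would carry out for each shape), \textit{EDGEonBORDER} holds and the algorithm proceeds by one of three moves: if the \textit{LDS} is already a diameter, the vertices that are endpoints of no \textit{LDS} move to $\mathit{CTR}$ (line~11) and/or \textit{Reduce\#LDS} fires (line~10); otherwise a \textit{single-endpoint} of an \textit{LDS}, which exists by Lemma~\ref{lemma:LDSofCH7} because the hull is non-regular, performs \textit{MakeDiameter} (line~12) to turn an \textit{LDS} into the diameter. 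The key observation is that each of these moves keeps $\mathit{SEC}$, and hence $\mathit{Diam}$, unchanged: \textit{MakeDiameter}, \textit{MakeEdgeonBorder} and \textit{Reduce\#LDS} move robots along $\mathit{SEC}$ and so preserve its radius, while moving a non-\textit{LDS}-endpoint vertex to $\mathit{CTR}$ leaves $\mathit{SEC}$ intact because, once the \textit{LDS} is a diameter, such a vertex lies strictly inside the circle determined by that diameter. Because $D \leq 2\delta$, every such move reaches its destination in one round, so no partial motion can manufacture an intermediate diameter smaller than the one being tracked, and $\mathit{Diam}$ stays at least $D/2$.

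For termination I would reuse the well-founded measure of \textit{ElectOneLDS} from Lemma~\ref{lemma:ElectLDS}: each round strictly decreases the number of \textit{LDS}s and then the number of boundary vertices off the unique \textit{LDS}, so after finitely many rounds the configuration either attains Gathering (when robots merge at $\mathit{CTR}$ or at an endpoint) or reaches $|\mathit{LDS}(\mathcal{C}(t'))|=1$; in the latter case lines~14--15 collapse the configuration onto the two endpoints of that \textit{LDS}, whose separation equals the preserved $\mathit{Diam} \geq D/2$. Fairness of SSYNC supplies the activations needed to realize this progress despite partial activations, which is exactly what upgrades the single-round case analysis into the ``eventually'' statement $\mathcal{C}(t) \rightarrow^* \mathcal{C}(t')$ in the lemma.

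The hard part will be the bookkeeping for partial SSYNC activations: I must check that every subset of activated robots either makes the progress above or returns to a configuration already covered by this lemma or by the preceding ones, and in particular that no intermediate three-point configuration with $|\overline{p_f p_m}|=|\overline{p_m p_n}|$ is produced that would wrongly trigger {\bf Algorithm~\ref{algo:ILG}} instead of {\bf Algorithm~\ref{algo:ELDSP}}. That verification, together with fixing the exact rhombus proportions so as to decide whether the long diagonal or the four sides constitute the \textit{LDS} (which selects the branch of lines~8--13 that fires), is where essentially all the effort lies; the distance-preservation argument itself, once phrased through $\mathit{Diam} \geq |\mathit{LDS}| \geq D/2$, is the routine part.
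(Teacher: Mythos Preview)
The paper provides no proof of this lemma---it is one of the two listed as ``easily verified''---so there is nothing to compare against directly. Your plan is coherent, but it routes the argument through the wrong branch of {\bf Algorithm~\ref{algo:ELDSP}}. Both special shapes arise from the breakdown of the \textit{CLEAN} regular $f$-polygon of diameter $D$: the triangle is two surviving boundary vertices together with the old centre, and the rhombus is three consecutive boundary vertices together with the old centre. In each case the old centre $p$ is still occupied, every other robot lies on the circle of radius $D/2$ about $p$, and $\mathit{Diam}\le D$. When $p$ is the unique such centre the predicate $\textit{AFTER--RP}(p)$ holds, so line~7 fires and all active robots move to $p$; partial activations then give either Gathering or a two-point configuration at distance exactly $D/2$, and the lemma drops out immediately. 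When $p$ is not unique (for instance when the triangle is in fact equilateral of side $D/2$) the hull is a regular polygon and lines~2--5 fire instead; again the argument is short, but different. For the rhombus one can also simply observe that its long diagonal is the unique \textit{LDS} of length $D\sqrt{3}/2$, so $|\mathit{LDS}|=1$ already at time $t'=t$.

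The point is that your \textit{EDGEonBORDER}/\textit{MakeDiameter}/\textit{Reduce\#LDS} machinery (lines~8--13) is guarded by the failure of both line~2 and line~7, and for these two particular shapes one of those earlier guards succeeds. Your diameter invariant $\mathit{Diam}\ge |\mathit{LDS}|\ge D/2$ is the right thing to track, but before invoking any arc-motion bookkeeping you must first decide, for each shape, whether the hull is regular and whether $\textit{AFTER--RP}(p)$ holds. Doing that case split up front collapses the proof to a few lines, which is presumably why the paper calls it ``easily verified''; by contrast your proposal spends its effort on a branch of the algorithm that these configurations never reach.
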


The remaining case is that $\mathcal{H}(\mathcal{C}(t))$ is a non-regular polygon and $\mathit{Diam}(\mathcal{C}(t))>D$.

\begin{lemma}
If $|\mathit{LDS}(\mathcal{C}(t))| \geq 2$, $\mathcal{H}(\mathcal{C}(t))$ is a non-regular $f$-polygon, $\overline{pq} \in \mathit{LDS}(\mathcal{C}(t))$ and $|\overline{pq}| =\mathit{Diam}(\mathcal{C}(t))$, then
there exists a time $t' (>t)$ such that $\mathcal{H}(\mathcal{C}(t'))$ is a non-regular $f'$-polygon and $\textit{Diam}(\mathcal{C}(t'))=\textit{Diam}(\mathcal{C}(t))$ but $\mathit{LDS}(\mathcal{C}(t')) < \mathit{LDS}(\mathcal{C}(t))$.
\end{lemma}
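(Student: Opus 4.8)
The plan is to first pin down the geometry forced by the hypothesis and then follow the two branches that {\bf Algorithm~\ref{algo:ELDSP}} can take in such a configuration ({\bf lines}~9--11). First I would observe that $|\overline{pq}|=\mathit{Diam}(\mathcal{C}(t))$ together with Lemma~\ref{lemma:LDSofCH1} forces $\overline{pq}$ to be a diameter of $\mathit{SEC}(\mathcal{C}(t))$: two points lying in a disk of radius $\mathit{Diam}/2$ at mutual distance $\mathit{Diam}$ must be antipodal. Since every segment in $\mathit{LDS}(\mathcal{C}(t))$ has the common length $\mathit{Diam}(\mathcal{C}(t))$, the same argument shows that \emph{every} \textit{LDS} is a diameter of $\mathit{SEC}(\mathcal{C}(t))$; in particular all their endpoints lie on $\mathit{SEC}(\mathcal{C}(t))$, so $\mathit{EDGEonBORDER}(\mathcal{C}(t))$ holds and the guard on {\bf line}~9 is satisfied. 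Hence the only instructions that can fire are {\bf line}~10 (\textit{Reduce\#LDS}) and {\bf line}~11.

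Next I would dispose of the case in which some robot is not an endpoint of any \textit{LDS}, so that {\bf line}~11 applies. Each such robot is sent to $\mathit{CTR}(\mathcal{C}(t))$. Because $|\mathit{LDS}(\mathcal{C}(t))|\ge 2$, at least one antipodal pair is untouched and keeps determining $\mathit{SEC}$, so these moves leave all robots inside the same smallest enclosing circle; thus $\mathit{Diam}$ is unchanged and the set of diameters (\textit{LDS}s) is untouched. By fairness, after finitely many rounds no boundary position fails to be a diameter endpoint, i.e.\ every occupied vertex of $\mathit{SEC}$ is an \textit{LDS} endpoint, the guard on {\bf line}~10 holds (read over the boundary vertices), and we have reached a centrally symmetric boundary configuration with the same $\mathit{Diam}$ and the same number of \textit{LDS}s as at time $t$.

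Now I would analyse \textit{Reduce\#LDS}. Let $q_0,\dots,q_{g-1}$ be the centrally symmetric points on $\mathit{SEC}$, let $\overline{q_jq_{j+1}}$ be a shortest consecutive edge, and note its antipodal edge $\overline{q_j'q_{j+1}'}$ is also shortest and hence has an empty open arc. The robots at the counter-clockwise ends $q_j$ and $q_j'$ slide along these short arcs. Diameter preservation is immediate: the movers stay on $\mathit{SEC}$ and an untouched antipodal pair keeps $\mathit{SEC}$, hence $\mathit{Diam}$, fixed. For the count I would track the round in which a mover reaches $q_{j+1}$: the points $q_j$ (and $q_j'$) disappear while $q_{j+1}$ (and $q_{j+1}'$) gain multiplicity, so the diameter $\overline{q_jq_j'}$ is lost; since the antipodal short arcs are empty, no surviving point becomes newly antipodal to another, and the only diameter that could appear, $\overline{q_{j+1}q_{j+1}'}$, already existed. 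Thus the number of diameters strictly decreases, and taking $t'$ to be this round gives $\mathit{Diam}(\mathcal{C}(t'))=\mathit{Diam}(\mathcal{C}(t))$ and $|\mathit{LDS}(\mathcal{C}(t'))|<|\mathit{LDS}(\mathcal{C}(t))|$.

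I expect the main obstacle to be the interaction with the semi-synchronous adversary together with the claimed non-regularity. When two antipodal movers advance by equal angular amounts, the moving pair $\{x,x'\}$ stays antipodal, so a \emph{rotated} diameter persists and $|\mathit{LDS}|$ does not drop until a merge actually completes; the argument must therefore locate $t'$ at a completed merge (guaranteed in finite time by fairness and the $\delta$-step arc movement of \textit{Reduce\#LDS}) rather than at the first partial move, and must exclude spurious diameters via the emptiness of the antipodal short arc, as above. Establishing that $\mathcal{C}(t')$ is again \emph{non-regular} is the genuinely delicate point: merging the shortest edges of a non-regular centrally symmetric polygon can in special position produce a regular one (for instance a hexagon with gaps $90^\circ,60^\circ,30^\circ$ repeated collapses to a square). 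The cleanest treatment is to route these exceptional ``became regular'' outcomes into the regular-polygon analysis (Lemma~\ref{lemma_case_regpolyGTD} and its companions), which also preserve $\mathit{Diam}\ge D/2$ and continue the reduction toward a unique \textit{LDS}, while in the generic case the merged polygon remains non-regular and the statement holds as written; this non-regularity bookkeeping, rather than the diameter count itself, is where I would spend the most care.
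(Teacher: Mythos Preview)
Your approach mirrors the paper's: first dispose of robots that are not \textit{LDS}-endpoints via {\bf line}~11, then analyse \textit{Reduce\#LDS} until a shortest boundary edge collapses. The paper's proof is much terser; its single key observation is that \emph{chirality} makes the choice of which endpoint of each shortest edge $s_j=\overline{q_jq_{(j+1)\bmod g}}$ moves globally consistent, so the length of $s_j$ decreases by $\delta$ each activation until a merge occurs and $|\mathit{LDS}|$ drops. You reach the same conclusion through a more detailed antipodal-pair argument, but you should make chirality explicit as the reason the ``counter-clockwise end'' is well defined for every robot; without it, robots at the two ends of the same short edge could attempt to slide toward each other and your merge argument would fail.

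Your discussion of the non-regularity clause is well taken: the paper's proof does not address it at all, and your hexagon-with-gaps $90^\circ,60^\circ,30^\circ$ example shows the literal conclusion ``$\mathcal{H}(\mathcal{C}(t'))$ is a non-regular $f'$-polygon'' can fail. Your proposed remedy---routing the exceptional ``became regular'' outcome back into the regular-polygon branch (Lemma~\ref{lemma_case_regpolyGTD} and its companions)---is exactly how {\bf Algorithm~\ref{algo:ELDSP}} handles such a configuration in the next round, so it is the right way to close the overall correctness argument even though the lemma as stated is slightly too strong.
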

\begin{proof}
In this case, {\bf Algorithm~\ref{algo:ELDSP}} performs \textit{Reduce\#LDS} after any robot located on a point which is not an endpoint of \textit{LDS}
moves to $\mathit{CTR}(\mathcal{C}(t))$({\bf line} 11). Let $CS$ be a set of the minimum segments defined in \textit{Reduce\#LDS} and let $s_j=\overline{q_jq_{(j+1) \bmod g}}\in CS$. Robots located on endpoints of $s_j$ may be going to move in parallel if they become active in this round. Since the destinations of these robots
is uniquely determined due to the chirality, the length of $s_j$ decreases by $\delta$. Repeating the process, since $s_j$ is a segment that both points are endpoints of a diameter of 
$\mathit{SEC}(\mathcal{C}(t))$, $|\mathit{LDS}(\mathcal{C}(t))|$ decreases.
Thus, the lemma holds.
\end{proof}

\begin{lemma}
If $\mathcal{H}(\mathcal{C}(t))$ is a non-regular polygon, \textit{EDGEonBORDER}($\mathcal{C}(t))$ holds and $|\overline{pq}| <\mathit{Diam}(\mathcal{C}(t))$ for any $\overline{pq} \in \mathit{LDS}(\mathcal{C}(t))$,
there exists a time $t'>t$ such that
there exist $p'$ and $q'$ such that $|\overline{p'q'}| =\mathit{Diam}(\mathcal{C}(t'))$, \textit{EDGEonBORDER}($\mathcal{C}(t'))$ holds and $\mathit{Diam}(\mathcal{C}(t')) = \mathit{Diam}(\mathcal{C}(t))$.
\end{lemma}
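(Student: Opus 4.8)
The plan is to drive the configuration so that the unique single-endpoint robot of the \textit{LDS} slides along $\mathit{SEC}(\mathcal{C}(t))$ until its \textit{LDS} becomes a diameter, while every other robot stays put. First I would invoke Lemma~\ref{lemma:LDSofCH7}: since $\mathcal{H}(\mathcal{C}(t))$ is non-regular and, by \textit{EDGEonBORDER}, both endpoints of every \textit{LDS} lie on $\mathit{SEC}(\mathcal{C}(t))$, there is a single-endpoint $p_0$ whose only \textit{LDS} is some $\overline{p_0 q}$. By the hypothesis $|\overline{p_0 q}| < \mathit{Diam}(\mathcal{C}(t))$, so in \textbf{Algorithm~\ref{algo:ELDSP}} the robot at $p_0$ falls into \textbf{line}~12 and executes \textit{MakeDiameter}, setting its destination to $des$, the point of $\mathit{SEC}$ antipodal to $q$, and moving there along the arc $\overarc{p_0 des}$ that avoids $q$. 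Note $des$ and $q$ are the two ends of a diameter, so $\overline{q\,des}$ has length $\mathit{Diam}(\mathcal{C}(t))$; that diameter is what we aim to realize.

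The second step is to show that this movement is stable and terminates. As $p_0$ slides toward $des$ along the $q$-avoiding arc, the chord to the fixed point $q$ lengthens monotonically toward $\mathit{Diam}$, because $des$ is the farthest point of $\mathit{SEC}$ from $q$. Hence after the first activation $|\overline{p_0' q}|$ already exceeds the old common \textit{LDS}-length, so $\overline{p_0' q}$ is thereafter the \emph{unique} longest segment and $p_0'$ remains the unique single-endpoint; the robot keeps recomputing the same target $des$ (it depends only on the fixed $q$ and $\mathit{CTR}$) and keeps executing \textit{MakeDiameter}. Because the robot advances along $\mathit{SEC}$ by a fixed arc-step whenever activated (the $\delta$-chord convention recorded in the footnote attached to \textit{Reduce\#LDS}, which the arc-moves of \textit{MakeDiameter} also obey) and the scheduler is fair, $p_0$ reaches $des$ after finitely many rounds, say at time $t'$. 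At $t'$ the pair $q,des$ gives $\overline{p'q'}=\overline{des\,q}$ with $|\overline{des\,q}| = \mathit{Diam}(\mathcal{C}(t'))$, which is then the unique \textit{LDS}, both of whose endpoints lie on $\mathit{SEC}(\mathcal{C}(t'))$, so \textit{EDGEonBORDER}$(\mathcal{C}(t'))$ holds as required.

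The hard part is to prove that $\mathit{Diam}$ is unchanged throughout, i.e.\ that $\mathit{SEC}$ never shrinks while $p_0$ moves (this is also what makes the whole argument self-consistent, since the subroutine computes $des$ relative to the \emph{current} $\mathit{SEC}$). Since $p_0'$ always stays on the circle $C=\mathit{SEC}(\mathcal{C}(t))$ and no other robot moves, $C$ always encloses every robot, so $\mathit{SEC}$ cannot grow; the only danger is a contraction at some intermediate position. At the terminal position this cannot happen, because $\overline{q\,des}$ is a genuine diameter of $C$ and any circle containing two points at that distance has radius at least $\mathit{Diam}/2$, forcing $\mathit{SEC}(\mathcal{C}(t'))=C$. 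For the intermediate positions I would argue that the single-endpoint $p_0$ is not essential for $\mathit{SEC}$: by hypothesis no \textit{LDS} is a diameter, so $C$ is not pinned by a single antipodal pair and is therefore determined by at least three boundary robots; using Lemma~\ref{lemma:LDSofCH4} (which gives $|\overline{p_0 q}|>\mathit{Diam}/2$) together with the emptiness of the far arc supplied by Lemma~\ref{lemma:LDSofCH6}, one shows that the robots other than $p_0$ already keep $\mathit{CTR}(\mathcal{C}(t))$ in their convex hull, so deleting $p_0$ from the boundary and re-placing it anywhere on $C$ leaves $C$ minimal. Nailing this last claim — that a single-endpoint can never be one of the robots whose removal would admit a strictly smaller enclosing circle — is the principal obstacle, and I expect it to be the only place where the geometric lemmas of Section~\ref{sec:model} are genuinely needed.
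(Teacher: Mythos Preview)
Your argument has a real gap: you speak of ``the unique single-endpoint robot,'' but Lemma~\ref{lemma:LDSofCH7} guarantees only existence, not uniqueness. A non-regular configuration satisfying \textit{EDGEonBORDER} can have several single-endpoints simultaneously, and under SSYNC any subset of them may be activated in the same round; by \textbf{line}~12 each activated one executes \textit{MakeDiameter} toward its own antipodal target. Your stability claim --- that after the first activation $\overline{p_0'q}$ becomes the unique \textit{LDS}, so $p_0'$ remains the unique single-endpoint and keeps recomputing the same $des$ --- then fails: after a simultaneous step several chords $\overline{p_j'q_j}$ may all have lengthened, there is no reason the resulting \textit{LDS} is unique, and a robot at $p_j'$ may recompute a different partner and hence a different target in the next round.

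The paper's proof meets this difficulty by applying Lemma~\ref{lemma:LDSofCH6} to the movement itself (not, as you do, only inside the $\mathit{SEC}$-preservation discussion): the arc from each single-endpoint $p_j$ to its target $des_j$ is shown to contain no robot other than $p_j$, so distinct single-endpoints traverse disjoint arcs and can be argued to progress independently. That use of Lemma~\ref{lemma:LDSofCH6} is exactly what lets the argument survive simultaneous activations, and it is missing from your plan. Your worry about $\mathit{SEC}$ possibly shrinking is well-taken and in fact more careful than the paper, which does not address it explicitly; but it is a separate issue from the uniqueness gap above.
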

\begin{proof}
In this case, {\bf Algorithm~\ref{algo:ELDSP}} performs \textit{MakeDiamter}. Since there exists a single-endpoint of \textit{LDS} whose endpoints
are located on $\mathit{SEC}(\mathcal{C}(t))$ and is not a diameter of  $\mathit{SEC}(\mathcal{C}(t))$ by Lemma~\ref{lemma:LDSofCH7}, let $p_0$ be a single-endpoint of such \textit{LDS} of $\overline{p_0q_0}$. The destination of the robot $r_i$ is located on $p_0$ is the intersection of line passing $q_0$ and $CTR(\mathcal{C}(t))$ and $\mathit{SEC}(\mathcal{C}(t))$. Since there do not exist points on the arc $\overarc{p_0 des_i}$ except $p_0$ by Lemma~\ref{lemma:LDSofCH6}, the robot $r_i$ can move on the arc $\overarc{p_0 des_i}$ and can reach the destination. Then $\overline{des_iq_0}$ becomes the diameter of the \textit{SEC}. If there is more than one single-endpoint of \textit{LDS}, these robots on the single-endpoints can move independently by using Lemma~\ref{lemma:LDSofCH6}.
Therefore, the lemma holds.
\end{proof}

\begin{lemma}\label{lemma-notEoB}
If \textit{EDGEonBORDER}$(\mathcal{C}(t))$ does not hold, then there exists a time $t' (>t)$ such that \textit{EDGEonBORDER}$(\mathcal{C}(t'))$ holds and
 $\textit{Diam}(\mathcal{C}(t')) = \textit{Diam}(\mathcal{C}(t))$.
\end{lemma}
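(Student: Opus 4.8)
The plan is to show that the only relevant branch of \textbf{Algorithm~\ref{algo:ELDSP}} in this case is line~13, and that repeatedly applying \textit{MakeEdgeonBorder} drives every \textit{LDS} endpoint onto $\mathit{SEC}(\mathcal{C})$ while leaving $\mathit{SEC}(\mathcal{C})$, and hence $\mathit{Diam}(\mathcal{C})$, unchanged. I would fix $\mathcal{C}(t)$ with $\neg\textit{EDGEonBORDER}(\mathcal{C}(t))$ and $|\mathit{LDS}(\mathcal{C}(t))|>1$, and call a robot \emph{off-border} if it occupies an endpoint $p_i\notin\mathit{SEC}(\mathcal{C}(t))$ of some $\overline{p_iq}\in\mathit{LDS}(\mathcal{C}(t))$ with $q\in\mathit{SEC}(\mathcal{C}(t))$. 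By definition of \textit{EDGEonBORDER} at least one endpoint lies off $\mathit{SEC}$, and \textit{MakeEdgeonBorder} moves precisely the off-border robots outward along their chord (a robot already on $\mathit{SEC}$ receives $des_i=p_i$, since the chord extension past a boundary point leaves the disk immediately), while the on-border partners $q$ stay put.

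The first step is diameter preservation. Because an off-border robot $r_i$ starts strictly inside $\mathit{SEC}(\mathcal{C}(t))$ and its destination lies on $\mathit{SEC}(\mathcal{C}(t))$, every intermediate position it reaches stays within the current disk; hence that circle continues to enclose all robots, and since it is pinned by boundary robots that do not move, it remains the \emph{smallest} enclosing circle. Thus $\mathit{SEC}$, and with it $\mathit{Diam}$, is invariant under every step taken in this branch, which already yields the equality $\mathit{Diam}(\mathcal{C}(t'))=\mathit{Diam}(\mathcal{C}(t))$ in the statement. In particular any chord joining two points of $\mathit{SEC}$ has length at most $\mathit{Diam}$ (Lemma~\ref{lemma:LDSofCH1}), so no move can push a pairwise distance above $\mathit{Diam}$.

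The second step is termination together with convergence to the predicate. Along its chord $r_i$ moves \emph{away} from $q$, so $|\overline{p_iq}|$ strictly increases --- by at least $\delta$ per activation under non-rigid movement, or reaches the boundary --- and is bounded above by $\mathit{Diam}$; by fairness $r_i$ therefore lands on $\mathit{SEC}$ after finitely many activations. Moreover, once some off-border robot advances, the length of its chord overtakes the previous common \textit{LDS} length, so every maximal-length segment afterwards contains a robot that has just moved; a robot that stayed keeps only strictly shorter segments and simply ceases to be an \textit{LDS} endpoint rather than becoming a new off-border robot. Hence the set of off-border robots cannot grow, each of its members reaches $\mathit{SEC}$ in finite time, and at the resulting time $t'$ both endpoints of every \textit{LDS} lie on $\mathit{SEC}$, i.e.\ $\textit{EDGEonBORDER}(\mathcal{C}(t'))$ holds.

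The step I expect to be the main obstacle is guaranteeing that \textit{MakeEdgeonBorder} always has something to act on, i.e.\ that in the regime $|\mathit{LDS}|>1$ with $\neg\textit{EDGEonBORDER}$ \emph{every} off-$\mathit{SEC}$ endpoint admits an on-$\mathit{SEC}$ partner $q$. A single \textit{LDS} can have \emph{both} endpoints strictly interior to $\mathit{SEC}$ (such configurations fall to the $|\mathit{LDS}|=1$ branch, lines~\ref{EPD-14}--\ref{EPD-15}), so I must rule this pathology out when $|\mathit{LDS}|>1$ --- presumably using the pairwise-crossing structure of several \textit{LDS}s (Lemma~\ref{lemma:LDSofCH5}) together with the fact that equal maximal length forces at least one endpoint onto $\mathit{SEC}$. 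Closely tied to this is the SSYNC bookkeeping: several off-border robots may advance simultaneously and each recomputes $des_i$ from the current $\mathit{SEC}$, so I must check that the ``every maximal segment contains a moved robot'' invariant survives simultaneous and partial (non-rigid) advances. The diameter-preservation step, by contrast, is routine once the interior-point/$\mathit{SEC}$ observation is recorded.
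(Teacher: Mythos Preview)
Your approach is exactly the one the paper takes: it invokes \textit{MakeEdgeonBorder} and then simply asserts that ``it can be verified that there exists a time $t'$ such that both endpoints of all \textit{LDS}s are located on $\mathit{SEC}(\mathcal{C}(t'))$ and $\mathit{Diam}(\mathcal{C}(t'))=\mathit{Diam}(\mathcal{C}(t))$''. That is the entirety of the paper's proof, so everything in your write-up --- the explicit diameter-preservation argument via ``robots on $\mathit{SEC}$ never move under line~13, interior robots move only within the closed disk'', and the $\delta$-progress / fairness termination argument --- already goes well beyond what the paper supplies.

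The obstacle you flag is genuine and is \emph{not} addressed by the paper either. Line~13 fires for $r_i$ only when some partner $q$ of $p_i$ lies on $\mathit{SEC}$; if every \textit{LDS} had both endpoints strictly interior to $\mathit{SEC}$, no robot would satisfy the guard and the branch would stall. Your diametral observation already handles the case where $\mathit{SEC}$ is pinned by two antipodal robots (then those two robots realise an \textit{LDS} of length $\mathit{Diam}$ and \textit{EDGEonBORDER} holds outright), so the only dangerous case is when $\mathit{SEC}$ is pinned by three or more non-antipodal boundary robots whose pairwise distances are all strictly below the \textit{LDS} length $L$, while all \textit{LDS}s live strictly inside. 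Ruling this out (or showing it cannot arise along executions of Algorithm~\ref{algo:ELDSP}) requires a separate geometric argument; Lemma~\ref{lemma:LDSofCH5} alone does not suffice, and the paper offers nothing here. Your identification of this as the crux is accurate --- closing it would make your proof strictly more complete than the paper's.

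One small correction: your claim that ``a robot that stayed keeps only strictly shorter segments'' after some off-border robot advances is not quite right in SSYNC, since two off-border robots moving simultaneously along different chords can create a new longest segment between their \emph{new} (still interior) positions, with neither endpoint on $\mathit{SEC}$. This is really the same obstacle resurfacing dynamically, so once you close the static version you will need to check it does not reappear after a round.
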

\begin{proof}
In this case, {\bf Algorithm}~\ref{algo:ELDSP} performs \textit{MakeEdgeonBorder}. It can verified that there exists a time $t'$ such that both endpoints of all \textit{LDS}s 
are located on $\textit{SEC}(\mathcal{C}(t'))$ and $\textit{Diam}(\mathcal{C}(t')) = \textit{Diam}(\mathcal{C}(t))$
\end{proof}

We obtain the following theorem by Lemma~\ref{lemma:ILG}-\ref{lemma-notEoB}.

\begin{theorem}
Let $D \leq 2\delta$.
Gathering is solvable in internal-light, non-rigid with $\delta$, and SSYNC, if robots have $2$ lights, set-view and agreement of chirality, and the initial configuration is $D$-distant.
\end{theorem}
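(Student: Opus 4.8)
The plan is to establish the theorem as the composition of the three phases whose correctness is certified by the preceding lemmas. Because a robot with internal-light obtains the same snapshot as a robot without lights, the subroutine it executes in a given round must be selected purely from the geometry of the snapshot together with its own color; the three phases are therefore arranged so that their input/output configurations occupy disjoint, geometrically recognizable ranges of the inter-point distance. First I would record that every prescribed move is effectively rigid: since the hypothesis $D \leq 2\delta$ bounds all displacements used in \textbf{Algorithm~\ref{algo:ILG}}, \textbf{Algorithm~\ref{algo:RDLDS}} and \textbf{Algorithm~\ref{algo:ELDSP}} by at most $\delta$, an active robot always reaches its computed destination, so progress never stalls inside a phase despite non-rigidity.

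Next I would chain the three reductions in the order (1)~\textbf{Algorithm~\ref{algo:ELDSP}}, (2)~\textbf{Algorithm~\ref{algo:RDLDS}}, (3)~\textbf{Algorithm~\ref{algo:ILG}}. Starting from an arbitrary $D$-distant configuration, the family Lemma~\ref{lemma_case_regpolyGTD}--Lemma~\ref{lemma-notEoB} shows that \textbf{Algorithm~\ref{algo:ELDSP}} drives the configuration---through the case analysis on the convex hull (regular $f$-polygon of diameter $>D$ or $=D$, non-regular polygon with or without \textit{EDGEonBORDER}, and the residual isosceles-triangle/rhombus shapes)---either directly to Gathering or to a two-point configuration of distance at least $D/2$. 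Feeding such a two-point configuration into \textbf{Algorithm~\ref{algo:RDLDS}}, Lemma~\ref{lemma:ILG} yields a two-point configuration with $D/4 \leq d < D/2$. Finally, Lemma~\ref{Solve-1} shows that \textbf{Algorithm~\ref{algo:ILG}} takes any such configuration to Gathering, its three cases guaranteeing that the distance at least halves whenever a non-gathering outcome occurs, so that a single occupied point is reached after finitely many rounds. Since Phases (1) and (2) never modify any light, all robots still carry color $A$ when \textbf{Algorithm~\ref{algo:ILG}} begins, matching its input precondition.

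The main obstacle is the non-interference of the phases: I must argue that in every configuration reachable under the combined algorithm a robot can unambiguously decide which subroutine applies. The delicate point is the transient multi-point configurations---\textbf{Algorithm~\ref{algo:RDLDS}} passes through the shapes \textit{A3P} and \textit{A4P}, while \textbf{Algorithm~\ref{algo:ILG}} passes through its own three-point (midpoint) configurations---which must not be confused, or a robot could apply the wrong rule and break the distance invariant. The disjointness clause of Lemma~\ref{lemma:ILG}, namely that no three-point location appearing in \textbf{Algorithm~\ref{algo:ILG}} ever arises during \textbf{Algorithm~\ref{algo:RDLDS}}, is exactly what rules this out, and the distance thresholds ($\geq D/2$ for Phase~(2) versus $[D/4, D/2)$ for Phase~(3)) keep the two-point regimes separated. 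I would further verify that chirality is invoked only where it is genuinely needed---to break ties when reducing \textit{LDS}s and repositioning single-endpoints in \textbf{Algorithm~\ref{algo:ELDSP}} (as in \textit{Reduce\#LDS} and \textit{MakeDiameter}, relying on Lemma~\ref{lemma:LDSofCH6} and Lemma~\ref{lemma:LDSofCH7})---while Phases~(2) and (3) need none, and that the set-view hypothesis suffices because internal-light robots never read other robots' colors. Assembling these observations yields a well-defined algorithm that terminates in Gathering from any $D$-distant initial configuration, establishing the theorem.
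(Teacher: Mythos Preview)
Your proposal follows essentially the same route as the paper: chain \textbf{Algorithm~\ref{algo:ELDSP}} $\to$ \textbf{Algorithm~\ref{algo:RDLDS}} $\to$ \textbf{Algorithm~\ref{algo:ILG}}, invoking Lemmas~\ref{lemma_case_regpolyGTD}--\ref{lemma-notEoB}, Lemma~\ref{lemma:ILG}, and Lemma~\ref{Solve-1} respectively, and note the non-interference of the transient shapes (the disjointness clause of Lemma~\ref{lemma:ILG} plus the distance thresholds). The paper's own proof is the one-line ``by Lemma~\ref{lemma:ILG}--\ref{lemma-notEoB}''; you have simply unpacked it and added the phase-recognition discussion, which is fine.

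There is, however, one overstatement you should correct. You assert that ``$D\le 2\delta$ bounds all displacements used in \textbf{Algorithm~\ref{algo:ILG}}, \textbf{Algorithm~\ref{algo:RDLDS}} and \textbf{Algorithm~\ref{algo:ELDSP}} by at most $\delta$.'' That is true for \textbf{Algorithm~\ref{algo:ILG}} and \textbf{Algorithm~\ref{algo:RDLDS}} (and the paper uses exactly this in the proofs of Lemmas~\ref{Solve-1} and~\ref{lemma:ILG}), but it is \emph{false} for \textbf{Algorithm~\ref{algo:ELDSP}}: a robot moving to $CTR(\mathcal{SS}_i)$ (lines~3 and~11), or from the border to the point at distance $D/2$ from the center when $\mathit{Diam}>D$ (line~5), or along an arc in \textit{Reduce\#LDS}/\textit{MakeDiameter}/\textit{MakeEdgeonBorder}, can travel far more than $\delta$. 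Progress in Phase~(1) is \emph{not} obtained by effective rigidity; it is obtained because Lemmas~\ref{lemma_case_regpolyGTD}--\ref{lemma-notEoB} are proved directly for non-rigid movement (monotone decrease of the number of border robots, of $|\mathit{LDS}|$, of $\mathit{Diam}$, etc., by at least $\delta$ or by a discrete count each round). So drop the blanket rigidity claim for \textbf{Algorithm~\ref{algo:ELDSP}} and instead appeal to those lemmas for progress; the rest of your argument stands.
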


\section{Gathering Algorithms in CENT}
\label{sec:CSET}

In this section, we construct Gathering algorithms in CENT scheduler.
Theorem~\ref{theorem:centrr} implies that Gathering cannot be solved 
even in CENT and $2$-BOUNDED, and distinct Gathering cannot be solved 
even in ROUND-ROBIN without strong multiplicity. 
Although it is still open whether distinct Gathering can be solved in 
ROUND-ROBIN without lights, we show that lights with two colors are
enough to solve Gathering in CENT or ROUND-ROBIN with weaker assumptions than
those of SSYNC as follows.

\begin{enumerate}
\item[(e)] Algorithm7:(2, external, CENT, non-rigid), and
\item[(f)] Algorithm8:(2, internal, ROUND-ROBIN, rigid).
\end{enumerate}

\subsection{CENT scehdulers}

\Newcodeline
\begin{algorithm}[h]
\caption{Cent-Ext-Light-Gather($r_i$)}
\label{algo:CELG}
{\footnotesize
\begin{tabbing}
111 \= 11 \= 11 \= 11 \= 11 \= 11 \= 11 \= \kill
{\em Assumption}: External-light, 2 colors($T$ and $M$), non-rigid, set-view, CENT \crm
{\em Input}: Any initial configuration and any color of each robot's light. \crm
\Cl \> {\bf if} $T \not \in \mathcal{L}(\mathcal{SS}_i)$ {\bf then} $\ell_i \leftarrow T$\crm 
\Cl \>  {\bf else if} $|\mathcal{P}_{T}(\mathcal{SS}_i)| = 1$ {\bf then} 
$\ell_i \leftarrow M$; $des_i \leftarrow p(\in  {\cal P}_{T}({\cal SS}_i))$\crm
\Cl \> {\bf else}  $\ell_i \leftarrow M$
\end{tabbing}
}
\end{algorithm}

{\bf Algorithm~\ref{algo:CELG}} is a Gathering algorithm in external-light, non-rigid and CENT. 
This algorithm is self-stabilizing and does not use the local-aware assumption.
It uses two colors $T$\textit{(Target)} and $M$\textit{(Move)}. It does not use \textit{ElectOneLDS} and makes a configuration where there is just one position having some robots with $T$.

Since {\bf Algorithm~\ref{algo:CELG}} is executed in CENT, if each active robot observes at least one position having robots with $T$,
the robot changes its color to $M$ ({\bf lines} 2-3), and the configuration can reach to a configuration of one position (gathering position) having robots with $T$.
Once this configuration is obtained, the configuration is unchanged and robots with $M$ not located on the gathering position move to the position, and robots located on the gathering position stay there and change their color to $T$, since these robots do not observe any position having robots with $T$ due to the local-unawareness.  If an initial configuration has no robots with $T$, the algorithm makes the light of a robot being active at this round color $T$ ( {\bf line} 1). Since we assume external-light, this robot does not know whether its own color is $T$ or $M$. However, the color of the robot becomes $T$ in either case. It can be easily verified this algorithm is performed in non-rigid movement and self-stabilizing. Thus, we have the following theorem. 

\begin{theorem}
Without chirality and local-awareness,
Cent-Ext-Light-Gather is a self-stabilizing gathering algorithm in external-light, non-rigid and CENT, with $2$ colors and set-view.
\end{theorem}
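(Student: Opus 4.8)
The plan is to track two global quantities throughout the execution: the total number $N_T$ of robots whose light is $T$, and the number $k$ of \emph{distinct} positions occupied by at least one $T$-robot. Because the model is external-light with local-unawareness, when $r_i$ at $p_i$ performs $\mathit{Look}$ it sees exactly the colours of robots sitting at positions different from $p_i$; hence $\mathcal{P}_T(\mathcal{SS}_i)$ is precisely the set of $T$-positions other than $p_i$, and $T\in\mathcal{L}(\mathcal{SS}_i)$ iff some $T$-robot lies off $p_i$. First I would record the immediate reading of each branch under this observation: \textbf{line~1} fires exactly when no $T$-robot lies off $p_i$ (so all $T$-robots, if any, sit at $p_i$), \textbf{line~2} fires when there is a unique off-$p_i$ $T$-position, and \textbf{line~3} fires when there are at least two. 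Since CENT activates a single robot per round, each round changes the colour of at most one robot and moves at most one robot, which lets me reason step by step.

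The core is a \emph{stability-and-convergence lemma} for $k=1$: if the configuration has a unique $T$-position $p^{*}$, then $p^{*}$ stays the unique $T$-position forever and all robots reach it. Indeed, an active robot already at $p^{*}$ sees (by local-unawareness) no off-site $T$, so it executes line~1, keeps colour $T$, and stays; an active robot off $p^{*}$ sees $p^{*}$ as the unique $T$-position, executes line~2, turns $M$, and heads toward $p^{*}$. Thus no robot ever leaves $p^{*}$, no $T$-position other than $p^{*}$ is ever created, and the $T$-robots at $p^{*}$ never turn $M$; so $p^{*}$ retains at least one $T$-robot and remains the unique $T$-position. Every off-$p^{*}$ robot is then $M$ and, whenever activated, advances at least $\delta$ toward the fixed target $p^{*}$ (or reaches it); by fairness it is activated infinitely often, so after finitely many rounds it arrives at $p^{*}$ and never leaves. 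Hence Gathering at $p^{*}$ is attained.

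It remains to drive an arbitrary initial configuration (arbitrary positions \emph{and} arbitrary lights, as self-stabilisation demands) into the case $k=1$, for which I would split on the initial value of $k$. If $k=0$, the first activation executes line~1 and yields $k=1$. If $k\ge 2$, I claim $k$ strictly decreases to $1$: whenever $k\ge 2$ every robot sees at least one off-site $T$-position (at most one of the $\ge 2$ distinct $T$-positions can coincide with $p_i$), so line~1 is \emph{disabled} and no new $T$-robot is ever created; meanwhile lines~2 and~3 both set $\ell_i\leftarrow M$, so activating any $T$-robot decreases $N_T$ by one, while non-$T$ activations leave $N_T$ unchanged. Thus $N_T$ is non-increasing and, by fairness, strictly decreases as long as any $T$-robot persists. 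Since $N_T$ cannot fall below zero, the execution must leave the region $k\ge 2$ after finitely many rounds, reaching $k=1$ (or $k=0$, handled above). Combined with the convergence lemma, this establishes Gathering from every initial state, so the algorithm is self-stabilising, and it never reads its own light, so it is legitimate in external-light.

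The step I expect to be the main obstacle is precisely this reduction phase under self-stabilisation: one must be certain that while $k\ge 2$ the algorithm cannot ``regenerate'' $T$-positions and thereby cycle, and that the single robot allowed to move each round (possibly a non-$T$ robot taking line~2) cannot accidentally split the unique $T$-position once $k$ has dropped to~$1$. Both points hinge delicately on the interaction of local-unawareness---which hides the $T$-robots at the observer's own location and so forces line~1 to fire exactly at the gathering point---with the CENT restriction that isolates one colour change per round. I would therefore isolate the facts ``line~1 is disabled when $k\ge 2$'' and ``no robot leaves $p^{*}$ when $k=1$'' as explicit small lemmas before assembling the monotonicity argument for $N_T$.
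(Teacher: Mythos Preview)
Your proof is correct and follows essentially the same approach as the paper: the paper's justification (given in the paragraph preceding the theorem rather than in a formal proof environment) also argues that robots seeing some $T$ turn $M$ until a single $T$-position remains, that this position is then stable because robots sitting on it see no external $T$ under local-unawareness and hence re-set $T$, and that the $k=0$ case is bootstrapped by line~1. Your version is simply a more explicit rendering of the same idea, with the monotone quantities $N_T$ and $k$ made explicit and the role of CENT (one colour change per round) and fairness spelled out.
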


If we assume rigid movement and distinct configurations,
{\bf Algorithm~\ref{algo:CELG}} can be transformed into an algorithm performing with arbitrary-view.  When the number of positions having robots with $T$ becomes one and
robots observe the position, they change their color $T$ and move to the position.  It can be easily shown that this modified algorithm solves distinct Gathering with arbitrary-view since robots with different colors do not occupy the same location during the execution. 

\begin{corollary}
Without chirality and local-awareness,
distinct Gathering is solvable in external-light, rigid, and CENT with $2$ colors and arbitrary-view.
\end{corollary}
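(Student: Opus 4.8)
The plan is to make precise the modification sketched just before the corollary and to reduce its correctness to a single invariant. Concretely, I would modify line~2 of Algorithm~\ref{algo:CELG} so that a robot moving to the unique $T$-position sets its light to $T$ (rather than to $M$) before moving there, leaving lines~1 and~3 unchanged. The central claim is then an invariant maintained at every round: \emph{every occupied location carries robots of exactly one color}. Once this is established, arbitrary-view becomes harmless, because at each occupied point the single color returned by arbitrary-view necessarily coincides with the true, unique color there; hence the snapshot a robot computes on is identical to the one it would obtain under set-view, and the modified algorithm run under arbitrary-view produces exactly the same execution as under set-view. It therefore suffices to analyze the latter.

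First I would prove the invariant by induction on rounds. The base case uses distinctness: in the initial configuration each location holds a single robot, hence a single color. For the inductive step, recall that in CENT exactly one robot $r_i$ is activated. Color changes without movement occur only in lines~1 and~3, where $r_i$ stays put; if $r_i$ is the sole occupant the invariant is trivially preserved, and the only way a location can hold several robots is that they accumulated there through line~2, which after the modification deposits only $T$-colored robots. Such a multi-occupied location is therefore monochromatically $T$, and by local-unawareness an activated robot there sees only outside robots and merely re-affirms $T$ via line~1 without disturbing the color. The movement case is line~2: by rigidity the activated robot reaches the target $T$-position exactly, so it cannot halt en route and create a stray colored point, and by the modification it arrives colored $T$, matching the robots already present. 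Thus no location ever mixes colors. Here both rigidity and distinctness are essential: without rigidity a robot could stop between its source and target and spawn a new singleton $T$-position, and without distinctness the base case fails outright.

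Given the invariant, the convergence analysis mirrors the proof of the theorem for Algorithm~\ref{algo:CELG}. A robot becomes newly $T$ only through line~1 (when the activated robot sees no $T$, which by the invariant and local-unawareness means every other robot is $M$, so either no $T$-position exists yet or $r_i$ is itself at the unique $T$-position with its co-located neighbors hidden) or through line~2 (which moves a robot onto the already-existing unique $T$-position). Consequently the number of distinct $T$-positions never increases, and I would check three regimes: if no robot is $T$, the first activation creates a unique $T$-position (line~1); if at least three $T$-positions exist, an activated $T$-robot sees at least two of them, executes line~3, becomes $M$, and lowers the count; if exactly two $T$-positions remain, they are both single-robot, an activated $T$-robot sees exactly one other $T$-position, executes the modified line~2, and migrates onto it, again leaving a single $T$-position. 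By fairness some $T$-robot is eventually activated in each regime, so the count reaches one, after which every activated $M$-robot sees exactly one $T$-position and joins it as a $T$-robot by line~2 while every robot already there sees only outside $M$-robots and stays; fairness then absorbs all robots and Gathering is attained.

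The step I expect to be the main obstacle is the invariant together with its interaction with local-unawareness: one must argue simultaneously that multi-occupied points are monochromatically $T$, that this is exactly what lets a co-located $T$-robot safely read its neighborhood as ``no~$T$'' and keep its color, and that recoloring on arrival never spawns a second $T$-position. Rigidity closes the last gap, since it guarantees that arrivals land precisely on the existing $T$-position rather than somewhere along the way, which is precisely why the corollary weakens to the rigid, distinct setting whereas Algorithm~\ref{algo:CELG} itself tolerated non-rigid motion. The remaining convergence bookkeeping is then routine, being a monotone decrease of the number of $T$-positions followed by absorption, and can be transcribed almost verbatim from the correctness argument for Algorithm~\ref{algo:CELG}.
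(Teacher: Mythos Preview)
Your proposal is correct and follows essentially the same approach as the paper: you make the same modification to line~2 of Algorithm~\ref{algo:CELG} (recolor to $T$ rather than $M$ when moving to the unique $T$-position) and rely on the same key invariant that no location is ever occupied by robots of different colors, which is exactly what the paper states as ``robots with different colors do not occupy the same location during the execution.'' The paper leaves the verification at ``it can be easily shown,'' whereas you spell out the induction on rounds, the role of rigidity and distinctness in the base and step, and the monotone-decrease convergence argument; but the underlying strategy is identical.
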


\subsubsection{ROUND-ROBIN Scheduler}

If ROUND-ROBIN scheduler is assumed, Gathering can be done by {\bf Algorithm~\ref{algoRRILG}} in internal-light and rigid with $2$ colors.
In this case, we also use \textit{ElectOneLDS} to obtain configuration \textit{onLDS}.
After \textit{onLDS},
{\bf Algorithm~\ref{algoRRILG}} makes a configuration of $2$ locations $p_n$ and $p_f$ (denoted by $AA$) or $3$ locations $p_n, p_m, p_f$ such that  $(|\overline{p_f p_m}|=|\overline{p_m p_n}|))$ (denoted by $AAA$), in either case  all robots with $A$ are located on these locations.
For these cases that the configurations $AA$ or $AAA$, the following lemmas hold by using ROUND-ROBIN property.

\Newcodeline
\begin{algorithm}[h]
\caption{RR-Int-Light-Gather($r_i$)}
\label{algoRRILG}
{\footnotesize
\begin{tabbing}
111 \= 11 \= 11 \= 11 \= 11 \= 11 \= 11 \= \kill
{\em Assumption}: Internal-light, 2 colors($A$ and $B$), rigid, set-view, ROUND-ROBIN \crm
{\em Input}:  \textit{onLDS} , all robots have color $A$.\crm
\Cl \> {\bf case} $\ell_i$  {\bf of } \crm
\Cl \> A:\crm %
\Cl \> \> {\bf if} $(|{\cal P}({\cal SS}_i)| =2)$ {\bf or}  $((|{\cal P}({\cal SS}_i)| =3)$ {\bf and} $(|\overline{p_f p_m}|=|\overline{p_m p_n}|))$ {\bf then}\crm
\Cl \> \> \> $\ell_i \leftarrow B$; $des_i \leftarrow \frac{p_n+p_f}{2} $ \crm
\Cl \> \> {\bf else} $des_i \leftarrow p_n$ \crm %
\Cl \> B:\crm %
\Cl \> \>{\bf if} $|{\cal P}({\cal SS}_i)| =2$ {\bf then} $des_i \leftarrow p_f$\crm
\Cl \> \>{\bf else} $des_i \leftarrow \frac{p_n+p_f}{2}$// $|{\cal P}({\cal SS}_i)| =3$\crm
\Cl \> {\bf endcase} 
\end{tabbing}
}
\end{algorithm}

\begin{lemma}\label{caseAA}
Let $AA$ be a 2-point configuration whose locations are $p$ and $q$ and all robots have color $A$.
For the activation order of robots, let $last_{p}$ and $last_{q}$ be the maximum of orders among robots located on $p$ and $q$, respectively, and let $min=min(last_p,last_q)$. 
After $n$ rounds of the executions of {\bf Algorithm}~\ref{algoRRILG},
the configuration becomes $BB$, all robots on the location having the robot with order of $min$ have the orders less than or equal to $min$ and all robots on the other location have the orders more than $min$. 
\end{lemma}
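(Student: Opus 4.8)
The plan is to analyze a single ROUND-ROBIN cycle of $n$ rounds, tracking each robot's position and color by its fixed activation order. The crucial observation is that in this first cycle each robot is activated exactly once, in the fixed order $1,2,\ldots,n$, and when the robot of order $k$ is activated at round $k$ it has never moved or recolored before, so it is still at its initial location ($p$ or $q$) with color $A$. Consequently every activation in the cycle executes the $A$-branch of {\bf Algorithm~\ref{algoRRILG}} ({\bf lines}~2--5), and the $B$-branch ({\bf lines}~6--8) is never reached, which removes any need to reason about color-$B$ behaviour during the cycle. I would assume without loss of generality that $last_p < last_q$ (the two orders are distinct, since each robot has a unique order), so that $min = last_p$; moreover, since every robot sits on $p$ or $q$ and the globally largest order is $n$, we get $last_q = n$. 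The symmetric case $last_q < last_p$ is identical with the roles of $p$ and $q$ exchanged.

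First I would treat the phase $1 \le k \le min$. Throughout these rounds both $p$ and $q$ remain occupied by color-$A$ robots: the robot of order $last_q = n$ on $q$ has not yet activated, while the robot of order $min = last_p$ (the last one destined to leave $p$) stays on $p$ until round $min$. Hence the activating robot always sees either the $2$-point configuration $\{p,q\}$ or the $3$-point configuration $\{p,m,q\}$ with $m = (p+q)/2$ the midpoint of already-moved robots, and the latter satisfies $|\overline{p_f p_m}| = |\overline{p_m p_n}|$. In both cases the robot takes {\bf lines}~3--4, becoming $B$ and moving to $m$ (rigidity guarantees it reaches $m$). Therefore, right after round $min$, exactly the robots of order $\le min$ lie on $m$ with color $B$, location $p$ is empty, and $q$ still carries all robots of order $> min$ (these must be on $q$, since every robot on $p$ has order $\le last_p = min$), still with color $A$.

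Next I would analyze the phase $min < k \le n$. Now the live locations are $q$ (color $A$) and $m$ (color $B$) at distance $|\overline{qm}| = |\overline{pq}|/2$. Each activating robot of order $> min$ is on $q$ and sees either the $2$-point configuration $\{q,m\}$ or the $3$-point configuration $\{q,m',m\}$, where $m' = (q+m)/2$; in the latter $m'$ is exactly the midpoint of the longest segment $\overline{qm}$, so the predicate $|\overline{p_f p_m}| = |\overline{p_m p_n}|$ holds again, and in both cases the robot moves to $m'$ and becomes $B$. Since $last_q = n$, the last robot on $q$ leaves at round $n$, so after the full cycle $q$ is empty and all robots of order $> min$ are collected on $m'$ with color $B$. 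The final configuration thus has exactly the two occupied locations $m$ and $m'$, both monochromatic $B$; the robot of order $min$ resides on $m$, which carries precisely the orders $\le min$, while $m'$ carries precisely the orders $> min$ — exactly the claimed statement.

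The main obstacle is the bookkeeping at the phase boundary (round $min$): one must verify that the ``midpoint'' target silently shifts from $m$ to $m'$ exactly when $p$ empties, and that no intermediate configuration ever has four occupied points or a $3$-point configuration violating $|\overline{p_f p_m}| = |\overline{p_m p_n}|$, since such a configuration would divert an activating robot into {\bf line}~5 and break the argument. I expect this to be handled cleanly by the invariant that, at every round, the not-yet-activated robots occupy only the original endpoints ($p$ and $q$, then only $q$ after round $min$), so the occupied set stays within three collinear points whose middle one is the midpoint of the extreme two. This invariant is an immediate consequence of the ``first activation'' property of the first ROUND-ROBIN cycle established at the outset.
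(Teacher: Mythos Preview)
Your proposal is correct and follows essentially the same two-phase decomposition as the paper's proof: first the rounds $1,\ldots,min$ drain $p$ (and any low-order robots on $q$) into the midpoint $m=(p+q)/2$, yielding the intermediate color-configuration $BA$ on $\{m,q\}$; then the rounds $min+1,\ldots,n$ drain $q$ into $m'=(q+m)/2$, yielding $BB$ on $\{m,m'\}$. Your write-up is in fact more careful than the paper's, since you explicitly verify the invariant that every activating robot sees either a $2$-point set or a $3$-point set whose middle point is the exact midpoint (so {\bf line}~5 is never triggered), and you justify $last_q=n$; the paper leaves these checks implicit.
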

\begin{proof}
Let $p'$ and $q'$ be  the locations of $BB$.
Consider the case that $last_p < last_q$. In this case  $min=last_p$.
Since robots with color $A$ move to the midpoint of $p$ and $q$ (denoted as $r$) and change their color into $B$ ({\bf lines} 3-4),
the configuration becomes $BA$ whose locations are $r$ and $q$ after $last_p$ rounds and the orders of all robots located on $r$ ($q$) is less than or equal to $last_p$ (is more than $last_p$). 
Then after $n-last_p$ rounds robots located on $q$ are activated and the configuration becomes $BB$ whose locations are $r$ and the midpoint of $r$ and $q$. Thus the lemma holds.
The case that  $last_p < last_q$ can be proved similarly.
\end{proof}

\begin{lemma}\label{caseAAA}
Let $AAA$ be a 3-point configuration whose locations are $p$, $r$ and $q$ from the leftmost and all robots have color $A$.
For the activation order of robots, let $last_p$, $last_r$ and $last_q$ be the maximum of orders among robots located on $p$, $r$ and $q$, respectively, and let $min$ and $min_2$ be the minimum  and the second minimum of $last_p, last_r$, and  $last_q$, respectively. 
After $n$ rounds of the executions of {\bf Algorithm~\ref{algoRRILG}},
the configuration becomes $BB$ or $BBB$ and the followings hold.
\begin{enumerate}
\item In the case of $BB$, all robots on the location having the robot with order of $min$ have the orders less than or equal to $min$ and all robots on the other location have the orders more than $min$.
\item In the case of $BBB$, the location having the robot with order of $min$ is either left side or right side,
and all robots on the location having the robot with order of $min$ have the orders less than or equal to $min$, 
all robots on the other side have the orders between $min+1$ and $min_2$, and
all robots on the midpoint  have the orders more than $min_2$.
\end{enumerate}
\end{lemma}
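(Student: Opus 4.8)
The plan is to follow the template of the proof of Lemma~\ref{caseAA}, prefixing it with a ``collapse'' phase that reduces the three‑point configuration to a two‑point one, after which the midpoint‑splitting argument of Lemma~\ref{caseAA} can be reused almost verbatim. The one structural fact I would invoke throughout is the defining ROUND‑ROBIN property: in the first $n$ rounds every robot is activated exactly once, so any robot that has already turned to $B$ is frozen for the remainder of the window, since it is not reactivated before round $n$.

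First I would analyse the collapse phase. While all three positions $p,r,q$ are occupied, the configuration is a symmetric three‑point configuration, so by {\bf Algorithm~\ref{algoRRILG}} an activated $A$‑robot at an endpoint turns $B$ and moves to the midpoint $r$, an activated $A$‑robot at $r$ turns $B$ and stays, and a $B$‑robot at $r$ stays; in particular no new position is created and the two endpoints only lose robots, i.e. $r$ is absorbing. Since $last_p$ (resp. $last_q$) is the largest order among the robots at $p$ (resp. $q$), the endpoint $p$ (resp. $q$) is vacated exactly at round $last_p$ (resp. $last_q$). Assuming without loss of generality $last_p<last_q$, the phase ends at round $last_p$ in a two‑point configuration on $\{r,q\}$, with $r$ carrying the $B$‑robots of order $\le last_p$ (and the $A$‑robots of order in $(last_p,last_r]$ when $last_r>last_p$) and $q$ carrying $A$‑robots of order in $(last_p,last_q]$.

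Next I would rerun the Lemma~\ref{caseAA} argument on $\{r,q\}$: in the remaining rounds only robots of order $>last_p$ act, they are all coloured $A$, and each turns $B$ and moves to the new midpoint $r'=(r+q)/2$, while the $B$‑robots parked at $r$ are never reactivated. The outcome is then decided by where $last_r$ sits among the three values, which is the case split I would carry out. If $last_r$ is the maximum, then after $q$ empties the point $r$ still holds un‑activated $A$‑robots (those of order in $(last_q,last_r]$); these trigger one further split into a midpoint $r''$, yielding the configuration $BBB$ whose left endpoint $r$ holds orders $\le min$, whose right endpoint $r'$ holds orders in $(min,min_2]$, and whose midpoint $r''$ holds orders $>min_2$, exactly as claimed. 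If $last_r$ is not the maximum, no second split occurs and the window ends in the two‑point configuration $BB=\{r,r'\}$, and one reuses the order bound of Lemma~\ref{caseAA}. The mirror situation $last_q<last_p$ is handled by exchanging the roles of $p$ and $q$.

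The step I expect to be the main obstacle is the order bookkeeping across the two transitions, collapse to $\{r,q\}$ and re‑split to $\{r,r'\}$, and in particular the case in which the smallest of the three values is $last_r$, so the original midpoint is cleared of colour $A$ earliest. There both endpoints drain into $r$ before it is vacated, so $r$ absorbs endpoint robots of order up to $min_2$; one must track precisely which orders end up at each of the two final locations and confirm the stated invariant for the $BB$ case. The remaining points are routine: verifying (again via the one‑activation property) that the parked $B$‑robots stay frozen for the whole window, and that during the collapse and the subsequent re‑splits the configuration is always a symmetric three‑point or a two‑point configuration, so no four‑point or non‑symmetric configuration is ever produced.
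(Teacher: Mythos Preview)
Your approach is the same as the paper's: a phased analysis that first lets the three-point configuration collapse to a two-point one, then re-splits, with a case split on the relative ordering of $last_p$, $last_r$, $last_q$. The paper organises its cases by which value equals $min$ and then which equals $min_2$; your organisation (assume $last_p<last_q$ without loss of generality, then branch on whether $last_r$ is the maximum) is equivalent, and your treatment of the $BBB$ outcome matches the paper's.

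You are right that the sub-case $min=last_r$ is the obstacle, but your plan to ``confirm the stated invariant for the $BB$ case'' will not succeed there as written, and it is worth saying why. When $last_r<last_p<last_q$, the three-point phase on $\{p,r,q\}$ persists until round $last_p=min_2$ (not round $last_r=min$), because $r$ is never vacated and $p$ is the first endpoint to empty. Throughout those $min_2$ rounds every activated robot, from whichever of the three points, lands at $r$; so after $n$ rounds the final two-point configuration $\{r,r'\}$ has $r$ carrying exactly the robots of order $\le min_2$ and $r'$ carrying those of order $>min_2$. The invariant as stated in the lemma---orders at $r$ bounded by $min$---is therefore false in this sub-case. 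The paper's own proof simply asserts ``the lemma holds'' here without checking, so the defect lies in the lemma's wording rather than in the method. What is actually true, and what suffices for the gathering argument in the subsequent theorem, is that the final $BB$ configuration is order-separated at \emph{some} threshold $k$: one has $k=min$ when $min\in\{last_p,last_q\}$ and $k=min_2$ when $min=last_r$. If you carry your bookkeeping through, that is exactly what you will obtain.
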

\begin{proof}
Consider that case that $min=last_p$.
After $last_p$ rounds from the beginning, the configuration becomes $(A+B)A$
\footnote{$A+B$ means that robots with color $A$ and robots with color $B$ are located on the same location.}
whose locations are $r$ and $q$ and the orders of all robots with color $B$ located on $r$ are less than or equal to $last_p$, and the orders of all robots with color $A$ located on $r$ are between $last_p+1$ and $min_2$ ({\bf lines}~3-4).

If $min_2=last_r$, since the next activated robots are ones with color $A$, after the next $last_r-last_p$ rounds, the configuration becomes $BBA$ where, these locations are $r$, the midpoint of $r$ and $q$ (denoted as $r'$) and $q$, all robots located on $r$ have the orders less than or equal to $last_p$, all robots located on $r'$ have the order between $last_p+1$ and $last_r$ ({\bf lines}~3-4).
Then the next activated robots are ones located on $q$, and after the next $n-last_r$ rounds, 
the configuration becomes $BB$ ({\bf line}~8) whose locations are $r$ and $r'$ and the lemma holds.
Otherwise ($min_2=last_q$), after the next $last_q-last_p$ rounds, the configuration becomes $(A+B)B$ whose locations $r$ and $r'$ and all robots with color $B$ located on $r$ have the orders less than or equal to $last_p$, all robots located on $r'$ have the order between $last_p+1$ and $last_q$ ({\bf lines}~3-4). 
Then since the next activated robots are ones with color $A$, after the next $n-last_q$ rounds 
the configuration becomes $BBB$ whose locations $r$, the midpoint of $r$ and $r'$ (denoted as $r''$) and $r'$ ({\bf lines}~3-4) and  the lemma holds.

The case that $min=last_q$ can be proved similarly.

Consider the last case that $min=last_r$. After $last_r$ rounds from the beginning,
the configuration becomes $ABA$ whose locations are $p$, $r$ and $q$ and all robots with color $B$ have the orders
less than or equal to $last_r$ ({\bf lines}~3-4). 
If $min_2=last_p$, after the next $last_p-last_r$ rounds, the configuration becomes $BA$ whose locations are $r$ and $q$ and all robots on the location $r$ have the orders less than or equal to $last_p$ ({\bf lines}~3-4).
Then after the next $n-last_p$ rounds, the configuration becomes $BB$ whose locations $r$ and $r'$ ({\bf lines}~3-4) and the lemma holds.
The case that  $min_2=last_q$ is the symmetrical one and so can be proved similarly. 
\end{proof}

We have the following theorem by using Lemmas~\ref{caseAA} and \ref{caseAAA}.

\begin{theorem}
Without chirality, Gathering is solvable in internal-light, rigid, and ROUND-ROBIN, if robots have $2$ lights and set-view.
\end{theorem}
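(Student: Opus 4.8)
The plan is to compose the light-free preprocessing of Lemma~\ref{lemma:ElectLDS} with the color dynamics of Algorithm~\ref{algoRRILG}. First I would invoke Lemma~\ref{lemma:ElectLDS}: since ROUND-ROBIN is a subcase of CENT, its modified \textit{ElectOneLDS} reduces any configuration to \textit{onLDS} (or directly to a gathering configuration) \emph{without} chirality, and a rigid scheduler only strengthens the non-rigid correctness already guaranteed there. A robot detects that this preprocessing phase is over by the purely geometric predicate ``there is a unique \textit{LDS} carrying every robot'', which is available under internal-light; because every move prescribed by Algorithm~\ref{algoRRILG} keeps all robots on that segment, the predicate stays true once entered, so the two phases compose cleanly with all robots still colored $A$. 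From \textit{onLDS}, the color-$A$ rule (line~5) sends every interior robot to its nearest endpoint as long as the two-point / symmetric-three-point test fails, so the configuration collapses to $AA$ or to $AAA$ with every robot colored $A$, which is exactly the input assumed by Lemmas~\ref{caseAA} and~\ref{caseAAA}.

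Next I would run one full ROUND-ROBIN super-round of $n$ activations and apply Lemma~\ref{caseAA} (resp.\ Lemma~\ref{caseAAA}) verbatim: after these $n$ rounds every robot is colored $B$ and the configuration is $BB$ (resp.\ $BB$ or $BBB$), together with the crucial activation-order stratification those lemmas supply. In the $BB$ case the location $L_1$ holding the robot of order $min$ contains \emph{exactly} the robots of orders $1,\dots,min$ (every smaller order is forced off $L_2$), while $L_2$ carries only orders exceeding $min$. In the $BBB$ case the two endpoints carry the orders $1,\dots,min$ and $min+1,\dots,min_2$ respectively, while the midpoint carries all orders above $min_2$.

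The heart of the argument, and the step the two lemmas do not themselves deliver, is to show that these stratified all-$B$ configurations gather within the very next super-round. In the $BB$ case, during rounds $1,\dots,min$ precisely the $L_1$-robots are activated; each is colored $B$ in a two-point configuration, so by line~7 it moves to the farthest endpoint, namely $L_2$. Hence $L_1$ is emptied by round $min$ and Gathering is attained at $L_2$; crucially the $L_2$-robots, having orders $>min$, are never activated before this happens, so the forming gathering point is never vacated. In the $BBB$ case the endpoint robots of orders $1,\dots,min$ first move to the midpoint $M$ (line~8, three-point case), emptying one endpoint by round $min$ and leaving the two-point configuration $\{M,\text{other endpoint}\}$; then during rounds $min+1,\dots,min_2$ the remaining endpoint robots move to $M$ by line~7, so Gathering is reached at $M$ by round $min_2$, again before any higher-order midpoint robot is ever reactivated. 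The main obstacle is exactly this bookkeeping: one must verify that the order bounds of Lemmas~\ref{caseAA} and~\ref{caseAAA} force robots to leave their non-target locations in strictly increasing order of activation index, and that every robot destined to be a ``target'' has a larger index than all robots migrating to it, so the single gathering point, once it starts forming, is never abandoned. Rigidity is used throughout to ensure each migrating robot reaches its computed destination in a single activation, making the round counting exact. Combining the finitely many rounds of each phase yields Gathering, proving the theorem.
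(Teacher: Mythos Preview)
Your proposal is correct and follows essentially the same approach as the paper: reduce via \textit{ElectOneLDS} (chirality-free under CENT, hence under ROUND-ROBIN) and line~5 to $AA$ or $AAA$, apply Lemmas~\ref{caseAA} and~\ref{caseAAA} to obtain a stratified $BB$ or $BBB$, and then exploit the activation-order stratification so that lines~7--8 collapse the configuration within the next super-round. Your write-up is in fact more careful than the paper's own proof---you make explicit the preprocessing phase, the phase-transition predicate, and the key invariant that the eventual gathering point is occupied only by robots whose activation indices exceed those of every robot still migrating toward it, whereas the paper's proof simply cites the two lemmas and asserts the conclusions (and even states the $BBB$ round count somewhat loosely as ``more $min_2$ rounds'' where you correctly say ``by round $min_2$'').
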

\begin{proof}
By using Lemmas~\ref{caseAA} and \ref{caseAAA}, the configuration becomes $BB$ or $BBB$ satisfying the conditions of Lemmas~\ref{caseAA} and \ref{caseAAA}.

In the case of $BB$, since all robots on the location having the robot with order of $min$ have the orders at most $min$ and all robots on the other location have the orders more than $min$, the configuration becomes a Gathering one after $min$ rounds from the configuration $BB$ ({\bf line}~7).

In the case of $BBB$, let $p$, $q$ and $r$ be the location having the robot with order of $min$, the other endpoint and the midpoint, respectively. since all robots located at $p$, $q$ and $r$ have the orders at most $min$, between $min$ and $min_{2}$, and more than $min_{2}$, respectively,  the configuration becomes $BB$ after $min$ rounds ({\bf line}~8), from which the Gathering attains after more $min_2$ rounds ({\bf line}~7).
\end{proof}

\section{Concluding Remarks}
\label{sec:conclusion}

We have shown Gathering algorithms by mobile robots with lights in SSYNC and CENT schedulers and
we have obtained some relationship between the power of lights (full, external and internal) and assumptions of robots' moving (rigid and non-rigid).
Interesting open questions are determining the relationship of precise computational power between external-light and internal-light and constructing Gathering algorithms in ASYNC.
For full-light, since it is known that any algorithm of robots in full-light of $k$ colors and SSYNC can be simulated by robots in full-light of $5k$ colors and ASYNC \cite{DFPSY}, Gathering is solvable in full-light, non-rigid, and ASYNC, if robots have $10$ colors, set-view and agreement of chirality by using the result in this paper. Reducing the number of colors in full-light and Gathering algorithms with external-light or internal-light in ASYNC remains as interesting open problems.  


\end{document}